\documentclass[11pt,a4paper]{article}
\pdfoutput=1

\makeatletter
\def\input@path{{paper/}}
\makeatother
\usepackage{jheppub}
\usepackage[T1]{fontenc}
\usepackage[utf8]{inputenc}
\usepackage{mathtools,amsthm,mathrsfs}
\usepackage{microtype}
\usepackage{booktabs,longtable,array}
\usepackage{tikz}
\usepackage{aliascnt}
\usepackage[nameinlink,noabbrev]{cleveref}

\hypersetup{
  pdfauthor={Lin Mai and Yaobo Zhang},
  pdftitle={Algebraic versus physical uniqueness of MHV gravity numerators}
}

\newtheorem{theorem}{Theorem}[section]
\newaliascnt{proposition}{theorem}
\newtheorem{proposition}[proposition]{Proposition}
\aliascntresetthe{proposition}
\newaliascnt{lemma}{theorem}
\newtheorem{lemma}[lemma]{Lemma}
\aliascntresetthe{lemma}
\newaliascnt{corollary}{theorem}
\newtheorem{corollary}[corollary]{Corollary}
\aliascntresetthe{corollary}
\theoremstyle{definition}
\newaliascnt{definition}{theorem}

\aliascntresetthe{definition}
\theoremstyle{remark}
\newaliascnt{remark}{theorem}
\newtheorem{remark}[remark]{Remark}
\aliascntresetthe{remark}
\newaliascnt{conjecture}{theorem}

\aliascntresetthe{conjecture}

\crefname{theorem}{theorem}{theorems}
\crefname{proposition}{proposition}{propositions}
\crefname{lemma}{lemma}{lemmas}
\crefname{corollary}{corollary}{corollaries}
\crefname{definition}{definition}{definitions}
\crefname{remark}{remark}{remarks}
\crefname{conjecture}{conjecture}{conjectures}
\crefname{appendix}{appendix}{appendices}
\Crefname{theorem}{Theorem}{Theorems}
\Crefname{proposition}{Proposition}{Propositions}
\Crefname{lemma}{Lemma}{Lemmas}
\Crefname{corollary}{Corollary}{Corollaries}
\Crefname{definition}{Definition}{Definitions}
\Crefname{remark}{Remark}{Remarks}
\Crefname{conjecture}{Conjecture}{Conjectures}
\Crefname{appendix}{Appendix}{Appendices}

\newcommand{\Q}{\mathbb{Q}}
\newcommand{\C}{\mathbb{C}}
\newcommand{\Z}{\mathbb{Z}}

\newcommand{\ang}[2]{\langle #1#2\rangle}
\newcommand{\sqb}[2]{[#1#2]}
\newcommand{\Alt}{\operatorname{Alt}}
\newcommand{\sgn}{\operatorname{sgn}}
\newcommand{\rank}{\operatorname{rank}}

\newcommand{\Span}{\operatorname{span}}
\newcommand{\Specht}[1]{S^{#1}}

\newcommand{\Bose}{\mathrm{Bose}}

\title{Algebraic versus physical uniqueness\\
of MHV gravity numerators}
\author[a]{Lin Mai,}
\author[a]{and Yaobo Zhang}
\affiliation[a]{School of Physics, Ningxia University,\\
Yinchuan 750021, China}
\emailAdd{120251307670@stu.nxu.edu.cn}
\emailAdd{yaobozhang@nxu.edu.cn}

\abstract{We study whether a tree-level MHV gravity numerator is determined
by its degree and by vanishing on
\(\ang{i}{j}=\sqb{i}{j}=0\) for every pair.  A flag-variety
standard-monomial basis and an \(S_n\)-resolved restriction map reduce the
problem to exact finite-dimensional calculations.  At seven points we find
\(W_{7,\Q}\simeq S^{(2,1^5)}\oplus S^{(1^7)}\).  The Hodges numerator spans
the sign summand, while the six-dimensional hook gives additional algebraic
solutions.  The pair-ideal conditions therefore do not determine a unique
algebraic solution, but Bose symmetry selects the Hodges line.  At eight
points, pair-ideal
conditions and Bose symmetry leave a two-dimensional alternating space.
Same-helicity BCFW scaling, normalized collinear factorization, and the
leading soft coefficient impose the same linear condition and select the
Hodges line.  We also prove that, at arbitrary multiplicity, an alternating
fixed-degree numerator is determined by its full value on one collinear
boundary with the marked legs and their spinor ratio fixed.
Together with standard factorization, this determines the numerator up to
normalization within the fixed-common-denominator ansatz.  All rank and
ideal-membership calculations use exact integer or rational arithmetic, and
their finite-dimensional consequences are checked separately in Lean.}

\keywords{Scattering Amplitudes, Classical Theories of Gravity,
Differential and Algebraic Geometry, Discrete Symmetries}

\begin{document}

\maketitle
\flushbottom

\section{Introduction}
\label{sec:introduction}

Amplitude bootstrap methods seek to determine amplitudes from analytic and
symmetry constraints, without starting from Feynman diagrams or recursion.
Under suitable locality and power-counting assumptions, singularities and
gauge invariance determine Yang--Mills and gravity tree amplitudes
\cite{ArkaniHamedRodinaTrnka2018}.  In gravity, poles alone need not contain
all the relevant information because the numerator has nontrivial zeros.
We ask whether the degree and vanishing loci of an MHV gravity numerator
determine the amplitude.

For the reduced tree-level MHV graviton amplitude, the proposal of
Ref.~\cite{KoeflerEtAl2024} gives this question a precise algebraic form.  One
writes
\begin{equation}
  \mathcal A_n=\frac{N_n}{D_n},
  \qquad
  D_n=\prod_{1\leq i<j\leq n}\ang{i}{j},
  \label{eq:common-denominator}
\end{equation}
and asks whether the multidegree of \(N_n\), together with its vanishing on
every locus
\begin{equation}
  \ang{i}{j}=\sqb{i}{j}=0,
  \label{eq:pair-locus-intro}
\end{equation}
fixes the numerator up to scale.  These conditions do determine the Hodges
numerator uniquely at five and six points.  The original paper consequently
conjectured the same at every multiplicity and proposed a
special-kinematics induction.

We distinguish three notions of uniqueness.  Algebraic uniqueness asks
whether the fixed-degree pair-ideal solutions form a line.  Bose-compatible
uniqueness restricts to the alternating numerators required by the symmetric
reduced amplitude.  Physical uniqueness also imposes normalized boundary
data.  The three notions agree at five and six points.  They separate at
seven and eight points.

Related work studies the reconstruction of amplitudes from zeros in
Mandelstam or spinor-helicity variables, including their relation to
large-shift behavior and recursive splitting
\cite{ArkaniHamedEtAl2024HiddenZeros,Rodina2025HiddenZeros,
ParanjapeTrnka2023,JonesParanjape2025}.  Here we study the simultaneous
pair loci in \cref{eq:pair-locus-intro}.

Exact inverse methods also reconstruct Cachazo--He--Yuan (CHY) integrands from
prescribed factorization channels and one-loop
Bern--Carrasco--Johansson (BCJ) numerator families from generalized cuts
\cite{LiZhang2026InverseCHY,MaiZhang2026}.  Here the constraints are pair-ideal
membership, permutation symmetry, and physical boundary data.

At seven points the direct Gr\"obner calculation becomes too large.  We
instead use a flag-variety tableau basis and decompose the restriction map
into irreducible \(S_7\) sectors.  This turns the problem into smaller exact
rank calculations and determines the permutation type of each solution.

The resulting kernel is
\begin{equation}
  W_{7,\Q}
  \cong
  \Specht{(2,1^5)}\oplus\Specht{(1^7)},
  \qquad
  \dim_\Q W_{7,\Q}=7.
  \label{eq:intro-decomposition}
\end{equation}
The alternating line contains the Hodges numerator.  The other six directions
form an irreducible hook representation.  We construct an explicit element
whose \(S_7\) orbit spans this hook.  Under the special kinematics
\(|6\rangle=2|7\rangle\), the image of the hook remains six-dimensional,
whereas the six-point space is a line.

At seven points, Bose symmetry selects a unique physical line.  The reduced
amplitude is permutation invariant, while \(D_7\) is alternating.  Its numerator must
therefore transform in the sign representation, and
\begin{equation}
  W_{7,\Q}^{\Bose}=\Q\,N_7^{\mathrm{Hodges}}.
  \label{eq:intro-physical-line}
\end{equation}
At five and six points the full algebraic space was already a line.  At
seven points, pair-ideal membership and Bose symmetry become
independent conditions.

At eight points, the pair-ideal conditions together with Bose symmetry
leave a two-dimensional space.  Exact rank calculations and two explicit
alternating numerators establish
\begin{equation}
  \dim_\Q (W_{8,\Q})_{\sgn}=2.
  \label{eq:intro-eight-point-dimension}
\end{equation}
The Hodges numerator is one line in this plane.  A BCFW shift of the two
negative-helicity legs has the Einstein-gravity behavior \(O(z^{-2})\), which
selects \(\Q(6A+7B)=\Q N_8^{\mathrm{Hodges}}\).  Normalized complex-collinear
factorization and the leading positive-helicity soft coefficient select the
same line.  The pair-zero locations and soft pole order leave the full
plane.

We also prove that complete holomorphic collinear boundary data determine
the Bose-symmetric numerator at any multiplicity within the common-denominator
ansatz.  Together with universal \(++\) factorization and the five-point seed,
this gives a recursive uniqueness statement.

The relation between pair zeros, Bose symmetry, and physical boundary data
is summarized in \cref{fig:bootstrap-hierarchy}.

\begin{figure}[t]
  \centering
  \begin{tikzpicture}[
      x=1cm,
      y=1cm,
      >=stealth,
      input/.style={
        align=center,
        text width=2.10cm,
        inner sep=1pt
      },
      level/.style={
        align=center,
        text width=2.80cm,
        inner sep=1pt
      },
      constraint/.style={
        font=\scriptsize,
        align=center,
        fill=white,
        inner xsep=3pt,
        inner ysep=1pt
      },
      rowlabel/.style={
        font=\small\bfseries,
        anchor=center
      },
      entry/.style={
        font=\small,
        align=center,
        inner sep=2pt
      },
      jump/.style={
        fill=black!10,
        rounded corners=1.5pt,
        minimum width=3.70cm,
        minimum height=0.72cm
      }
    ]
    \node[input] at (0.72,0)
      {\small\bfseries Polynomial ansatz\\[-1pt]\(\displaystyle V_n\)};
    \node[level] at (4.02,0)
      {\small\bfseries Pair-ideal\\[-1pt]
       \small\bfseries solutions\\[-1pt]
       \(\displaystyle W_n\)};
    \node[level,text width=4.75cm] at (8.20,0)
      {\footnotesize\bfseries Bose-compatible\\[-1pt]
       \footnotesize\bfseries numerator sector\\[-1pt]
       \(\displaystyle (W_n)_{\mathrm{sgn}}\)};
    \node[level] at (12.15,0)
      {\small\bfseries Physical\\[-1pt]
       \small\bfseries line\\[-1pt]
       \(\displaystyle \mathbb Q N_n^{\mathrm H}\)};

    \draw[->,semithick]
      (0.72,0.68) -- node[constraint,above=3pt]
        {impose pair-ideal\\conditions} (4.02,0.68);
    \draw[->,semithick]
      (4.02,0.68) -- node[constraint,above=3pt]
        {impose Bose\\symmetry} (8.20,0.68);
    \draw[->,semithick]
      (8.20,0.68) -- node[constraint,above=3pt]
        {impose normalized\\boundary data} (12.15,0.68);

    \fill[black!4,rounded corners=2pt]
      (0,-1.02) rectangle (13.55,-1.70);
    \fill[black!2]
      (0,-2.68) rectangle (13.55,-3.58);
    \draw[semithick,rounded corners=2pt]
      (0,-1.02) rectangle (13.55,-4.48);
    \draw[black!45] (1.55,-1.02) -- (1.55,-4.48);
    \draw[black!45] (6.15,-1.02) -- (6.15,-4.48);
    \draw[black!45] (10.25,-1.02) -- (10.25,-4.48);
    \draw[black!45] (0,-1.70) -- (13.55,-1.70);
    \draw[black!30] (0,-2.68) -- (13.55,-2.68);
    \draw[black!30] (0,-3.58) -- (13.55,-3.58);

    \node[font=\scriptsize\bfseries] at (0.78,-1.36) {POINTS};
    \node[font=\scriptsize] at (3.85,-1.36) {\(\dim W_n\)};
    \node[font=\scriptsize] at (8.20,-1.36)
      {\(\dim (W_n)_{\mathrm{sgn}}\)};
    \node[font=\scriptsize\bfseries] at (11.90,-1.36) {PHYSICAL DIM.};

    \node[rowlabel] at (0.78,-2.18) {\(n=5,6\)};
    \node[entry] at (3.85,-2.18) {\(1\)};
    \node[entry] at (8.20,-2.18) {\(1\)};
    \node[entry] at (11.90,-2.18) {\(1\)};

    \node[rowlabel] at (0.78,-3.13) {\(n=7\)};
    \node[entry,jump] at (3.85,-3.13)
      {\(7=6_{\rm hook}\mathbin{+}1_{\rm sgn}\)};
    \node[entry] at (8.20,-3.13) {\(1\)};
    \node[entry] at (11.90,-3.13) {\(1\)};

    \node[rowlabel] at (0.78,-4.03) {\(n=8\)};
    \node[entry] at (3.85,-4.03) {not computed};
    \node[entry,jump] at (8.20,-4.03) {\(2\)};
    \node[entry] at (11.90,-4.03) {\(1\)};
  \end{tikzpicture}
  \caption{The three stages of the numerator bootstrap.
  The degree and pair-ideal conditions already determine a line at five
  and six points.  At seven points they leave a hook in addition to the
  alternating line, while Bose symmetry removes the hook.  At eight
  points the Bose-compatible numerator space is a plane, and a normalized
  physical boundary condition selects the Hodges line.}
  \label{fig:bootstrap-hierarchy}
\end{figure}

\Cref{sec:original-bootstrap} reviews the original conjecture.
\Cref{sec:flag-method,sec:obstruction,sec:bose} give the seven-point
calculation and impose Bose symmetry.  \Cref{sec:higher-points} proves the
marked-boundary theorem and determines the eight-point Bose-compatible
space.  \Cref{sec:n8-physical-completion} applies the BCFW, collinear, and
soft constraints.  Detailed calculations, scope statements, reproducibility
details, and the Lean verification are collected in the appendices.

\section{The original vanishing bootstrap}
\label{sec:original-bootstrap}

\subsection{From the Hodges formula to numerator zeros}

Hodges' determinant gives a compact form of the reduced MHV graviton
amplitude \cite{Hodges2012}.  Earlier representations include the
twistor-space and tree formulas of
Refs.~\cite{MasonSkinner2010,NguyenEtAl2010}.  After clearing the
angle-bracket poles, the determinant defines the polynomial numerator
studied below.  Its symmetric matrix has off-diagonal entries
\begin{equation}
  \Phi_{ij}=\frac{\sqb{i}{j}}{\ang{i}{j}}
  \qquad(i\neq j),
  \label{eq:hodges-off-diagonal}
\end{equation}
while the diagonal entries are chosen using two reference spinors so that
the required null-vector relations hold.  Deleting three rows \(R\) and
three columns \(C\) gives
\begin{equation}
  \mathcal A_n=\frac{\det\Phi^R_C}{(R)(C)}.
  \label{eq:hodges-reduced}
\end{equation}
Here, for \(R=\{a,b,c\}\) and \(C=\{d,e,f\}\),
\((R)=-\ang{a}{b}\ang{b}{c}\ang{a}{c}\) and
\((C)=-\ang{d}{e}\ang{e}{f}\ang{d}{f}\).  Momentum conservation makes the
result independent of the reference spinors and of the deleted rows and
columns.

Multiplying by all angle brackets gives the polynomial numerator in
\cref{eq:common-denominator}.  The object \(\mathcal A_n\) is the reduced
amplitude: after omitting the momentum-conserving delta function and
coupling conventions, a component with negative-helicity legs \(r,s\) is
\begin{equation}
  \mathcal M_n(r^-,s^-,\text{others}^+)
  =
  \ang{r}{s}^{\,8}\mathcal A_n.
  \label{eq:helicity-stripping}
\end{equation}
Thus \(\mathcal A_n\) has uniform little-group weight \(-4\) on every leg
and is symmetric under simultaneous relabelling of all external on-shell
data.  Below \(S_n\) acts by simultaneously relabelling all external data
of this reduced amplitude.

The numerator obeys a pairwise vanishing condition.  For every pair \(i<j\),
it can be
written, modulo the spinor-helicity relations, as
\begin{equation}
  N_n=\ang{i}{j}\,r_{ij}+\sqb{i}{j}\,\widetilde r_{ij}.
  \label{eq:pair-membership-form}
\end{equation}
Consequently it vanishes on \cref{eq:pair-locus-intro}.  This is the gravity
analogue of using singularity data in a gauge-theory bootstrap: the proposal
is that the locations of these numerator zeros, together with the required
degree, may characterize the amplitude without referring to the determinant
formula.

\subsection{The ideal-intersection conjecture}

The zero condition must be imposed after all kinematic relations have been
taken into account.  Let
\begin{equation}
  R_{n,k}
  =
  k[\ang{i}{j},\sqb{i}{j}\mid 1\leq i<j\leq n],
  \qquad
  Q_{n,k}=R_{n,k}/I_{n,k},
  \label{eq:coordinate-ring-main}
\end{equation}
where \(k=\Q\) or \(\C\), and \(I_{n,k}\) contains the two sets of
Pl\"ucker relations and momentum conservation.  Two bracket polynomials
differing by an element of \(I_{n,k}\) represent the same function on the
spinor-helicity variety \(\mathrm{SH}(2,n,0)\).  A zero which is hidden in
one polynomial representative can therefore become manifest in another.

For each pair, define its ideal in the quotient ring,
\begin{equation}
  J_{ij,k}
  =
  (\ang{i}{j},\sqb{i}{j})Q_{n,k}.
  \label{eq:pair-ideals-main}
\end{equation}
Membership in \(J_{ij,k}\) imposes \cref{eq:pair-membership-form} on the
whole pair locus in the spinor-helicity quotient.  More generally,
algebraic-geometric amplitude ans\"atze
can encode behavior on singular surfaces through ideals and their symbolic
powers \cite{DeLaurentisPage2022}.

The numerator has target multidegree
\begin{equation}
  d(n)
  =
  \left(
    \frac{n^2-3n-6}{2},\,n-3;\,
    n-5,\ldots,n-5
  \right).
  \label{eq:target-degree-main}
\end{equation}
The first two entries count angle and square brackets.  The remaining
entries record the external-label weights, so little-group covariance is
already part of the original conjecture.

The full ansatz and its pairwise-vanishing subspace are
\begin{equation}
  V_{n,k}=(Q_{n,k})_{d(n)},
  \qquad
  W_{n,k}
  =
  \left(\bigcap_{i<j}J_{ij,k}\right)_{d(n)}.
  \label{eq:V-and-W-main}
\end{equation}
Define the simultaneous restriction map
\begin{equation}
  \mathsf C_n:
  V_{n,k}\longrightarrow
  \bigoplus_{i<j}(Q_{n,k}/J_{ij,k})_{d(n)},
  \qquad
  f\longmapsto(f\bmod J_{ij,k})_{i<j},
  \label{eq:constraint-map-main}
\end{equation}
so that \(W_{n,k}=\ker\mathsf C_n\).  Conjecture~2.1 of
Ref.~\cite{KoeflerEtAl2024} is
\begin{equation}
  \dim_\C W_{n,\C}=1
  \qquad(n\geq5).
  \label{eq:original-conjecture}
\end{equation}
The Hodges numerator is a nonzero element of \(W_{n,\C}\).  The conjecture
states that it spans this space.

\subsection{What was established at five and six points}

Ref.~\cite{KoeflerEtAl2024} turned
\cref{eq:constraint-map-main} into a direct linear-algebra calculation.
With a graded reverse lexicographic order, it first computed a
Gr\"obner-standard basis \(\mathcal B_n\) for \(V_{n,\C}\), then wrote a
generic element
\begin{equation}
  g=\sum_{b\in\mathcal B_n}c_b b.
  \label{eq:original-generic-element}
\end{equation}
Reducing \(g\) modulo every \(J_{ij}\) gives linear equations for the
coefficients \(c_b\).  Collecting them produces a single integer matrix
\(X_n\) representing \(\mathsf C_n\) in the chosen normal-form bases.

\begin{table}[htbp]
  \centering
  \small
  \caption{The direct quotient-ring calculation of
  Ref.~\cite{KoeflerEtAl2024}.}
  \label{tab:original-low-point-computation}
  \begin{tabular}{@{}cccc@{}}
    \toprule
    points & \(\dim V_{n,\C}\) & matrix \(X_n\) &
    \(\dim\ker X_n\)\\
    \midrule
    \(n=5\) & \(16\)  & \(20\times16\)     & \(1\)\\
    \(n=6\) & \(780\) & \(2951\times780\)  & \(1\)\\
    \bottomrule
  \end{tabular}
\end{table}

Because the known Hodges element spans both kernels, these computations
prove \cref{eq:original-conjecture} at five and six points.  At these
multiplicities, \(W_n\) is one-dimensional and carries the sign character.
The pair-ideal conditions therefore already imply numerator antisymmetry.

\subsection{The proposed induction and its seven-point test}

The proposed all-\(n\) strategy uses special kinematics recursively.
Aligning the last two angle spinors,
\begin{equation}
  |n{-}1\rangle=\alpha\,|n\rangle,
  \qquad
  \langle n{-}1,n\rangle=0,
  \label{eq:special-kinematics}
\end{equation}
is implemented in the bracket ring by
\begin{equation}
  K_{n-1,n}(\alpha)
  =
  \bigl(
    \langle n{-}1,n\rangle,\,
    \langle i,n{-}1\rangle-\alpha\langle i,n\rangle
    \ \big|\ 1\leq i\leq n-2
  \bigr)Q_n .
  \label{eq:special-k-ideal-main}
\end{equation}
Define \(Q_n^K(\alpha)=Q_n/K_{n-1,n}(\alpha)\) and let
\(\pi_{K,\alpha}:Q_n\to Q_n^K(\alpha)\) be the quotient map.  On the Hodges
numerator, the projected expression
contains the universal factor
\begin{equation}
  P_n=[n{-}1,n]
      \prod_{i=1}^{n-2}\ang{i}{n},
  \label{eq:universal-factor}
\end{equation}
and the remaining factor is a suitable image of the lower-point numerator.
There is an exact degree match behind this proposal:
\begin{equation}
  \bigl(d_\angle(n),d_\square(n)\bigr)
  -\bigl(d_\angle(n-1),d_\square(n-1)\bigr)
  =(n-2,1)
  =\deg P_n .
  \label{eq:degree-recursion-main}
\end{equation}
Here \(d_\angle\) and \(d_\square\) denote the first two entries of
\cref{eq:target-degree-main}.  The proposal identifies the boundary image
with \(P_n\iota_\alpha(W_{n-1})\) and reconstructs \(W_n\) from this image
\cite{KoeflerEtAl2024}.  We test this identification and reconstruction at
seven points.  The separate requirements are listed in the appendix.

At seven points, the direct realization was estimated to require a basis of
order \(10^7\) \cite{KoeflerEtAl2024}.  The symmetry-resolved calculation
below determines the complete seven-point kernel through smaller exact
blocks and tests the proposed one-dimensionality directly.

\subsection{Algebraic, Bose-compatible, and physical uniqueness}

Bose symmetry adds a global condition.  Restoring increasing label order
after a permutation gives
\begin{equation}
  \sigma D_n=\sgn(\sigma)D_n.
  \label{eq:D-alternating-main}
\end{equation}
Since the reduced amplitude is symmetric, its numerator must satisfy
\begin{equation}
  \sigma N=\sgn(\sigma)N
  \qquad(\sigma\in S_n).
  \label{eq:N-alternating-main}
\end{equation}
We therefore distinguish
\begin{equation}
  W_{n,k}^{\Bose}
  =
  \{N\in W_{n,k}:\sigma N=\sgn(\sigma)N
    \text{ for all }\sigma\in S_n\}
  \label{eq:physical-subspace-main}
\end{equation}
inside \(W_{n,k}\).  This gives three successive questions.  Algebraic
uniqueness asks whether \(W_{n,\C}\) is a line.  Bose-compatible uniqueness
asks for the sign sector at fixed denominator.  Physical uniqueness then
imposes normalized soft, factorization, or boundary data.  The three agree
at five and six points.  Seven points separates the first two, and eight
points separates the second and third.

\section{A symmetry-resolved flag realization}
\label{sec:flag-method}

\subsection{The same space in a different representation}

The flag construction is a change of basis for \(V_n\) and
\(\mathsf C_n\).  The numerator ansatz and pair restrictions in
\cref{eq:V-and-W-main} are unchanged.

Ref.~\cite{KoeflerEtAl2024} already noted that Hodge duality identifies
\(\mathrm{SH}(2,n,0)\) with a two-step flag variety.  We promote that
geometric observation to the computational representation of the target
degree.  For \(i<j\), fix the convention
\begin{equation}
  \sqb{i}{j}
  \longleftrightarrow
  (-1)^{i+j}p_{\{1,\ldots,n\}\setminus\{i,j\}}.
  \label{eq:hodge-map-main}
\end{equation}
An angle bracket becomes a Pl\"ucker column of height two, while a square
bracket becomes its complementary column of height \(n-2\).  Under this
identification, \(Q_{n,k}\) is the multihomogeneous coordinate ring of
\(\operatorname{Fl}(2,n-2;k^n)\); its Pl\"ucker and incidence relations
admit the standard straightening law
\cite{ElMaazouzPfisterSturmfels2024,BrionLakshmibai2003}.

Let
\begin{equation}
  A_n=\frac{n^2-3n-6}{2},
  \qquad B_n=n-3
  \label{eq:angle-square-counts-main}
\end{equation}
be the numbers of short and tall columns.  Standard-monomial theory gives a
basis indexed by semistandard tableaux.  The general shape, content, and
Hodge-sign conventions are derived in \cref{app:flag-conventions}.  At seven
points the required data are
\begin{equation}
  \Lambda_7=(15,15,4,4,4),
  \qquad
  \mu_7=(6,6,6,6,6,6,6),
  \qquad
  \dim V_{7}=65{,}870.
  \label{eq:seven-flag-data-main}
\end{equation}
This construction gives \(\dim V_7=65{,}870\) and reproduces the dimensions
\(16\) and \(780\) at five and six points.

The pair ideals are transported as well: \(\ang{i}{j}\) is the short column
\(\{i,j\}\), and \(\sqb{i}{j}\) is the complementary tall column.  Thus the
restriction of a tableau polynomial to a pair locus is exactly the image of
the original quotient map \(Q_n\to Q_n/J_{ij}\).  The Hodge map, grading,
and pair restrictions therefore identify the original \(X_n\) problem with
the tableau realization.

\subsection{Permutation-sector decomposition}

Relabelling external particles commutes with \(\mathsf C_n\), so in
characteristic zero
\begin{equation}
  V_{n,k}
  \cong
  \bigoplus_{\lambda\vdash n}
  \Specht{\lambda}\otimes M_\lambda,
  \qquad
  W_{n,k}
  \cong
  \bigoplus_{\lambda\vdash n}
  \Specht{\lambda}\otimes
  \ker\mathsf C_{n,\lambda}.
  \label{eq:isotypic-restriction-main}
\end{equation}
Here \(M_\lambda\) is the multiplicity space selected by a primitive Young
idempotent, and \(\mathsf C_{n,\lambda}\) is the corresponding restriction
block.

This decomposition replaces the \(65{,}870\)-column restriction map by
fifteen exact blocks; the largest multiplicity space has dimension \(475\).
The sign block contains the Bose-compatible numerators.  The remaining
blocks have different permutation characters.

Figure~\ref{fig:symmetry-resolved-method} compares the direct and
symmetry-resolved forms of the same restriction map.

\begin{figure}[t]
  \centering
  \begin{tikzpicture}[
      x=1cm,
      y=1cm,
      >=stealth,
      ansatz/.style={
        draw=black,
        rounded corners=2pt,
        fill=black!6,
        text width=7.9cm,
        minimum height=0.88cm,
        align=center,
        inner sep=4pt,
        font=\small
      },
      directhead/.style={
        draw=black,
        text width=4.85cm,
        minimum height=0.52cm,
        align=center,
        inner sep=2.5pt,
        font=\small\bfseries
      },
      resolvedhead/.style={
        draw=black,
        fill=black,
        text=white,
        text width=4.85cm,
        minimum height=0.52cm,
        align=center,
        inner sep=2.5pt,
        font=\small\bfseries
      },
      directstep/.style={
        draw=black,
        rounded corners=1.5pt,
        text width=4.85cm,
        minimum height=0.68cm,
        align=center,
        inner sep=3pt,
        font=\small
      },
      resolvedstep/.style={
        draw=black,
        rounded corners=1.5pt,
        fill=black!9,
        text width=4.85cm,
        minimum height=0.68cm,
        align=center,
        inner sep=3pt,
        font=\small
      },
      scalebox/.style={
        draw=black,
        rounded corners=1.5pt,
        fill=black!4,
        text width=10.65cm,
        minimum height=0.76cm,
        align=center,
        inner sep=3pt,
        font=\scriptsize
      },
      sharedline/.style={draw=black, thick},
      directarrow/.style={draw=black, dashed, ->},
      resolvedarrow/.style={draw=black, thick, ->}
    ]
    \node[ansatz] (problem) at (5.85,0)
      {\textbf{One algebraic ansatz}\\[-1pt]
       the same \(Q_n\), degree \(d(n)\), and pair ideals \(J_{ij}\)};

    \node[directhead] (direct) at (2.85,-1.35)
      {direct calculation};
    \node[resolvedhead] (resolved) at (8.85,-1.35)
      {symmetry-resolved\\calculation};

    \node[directstep] (groebner) at (2.85,-2.35)
      {bracket-ring normal forms};
    \node[directstep] (global) at (2.85,-3.35)
      {one global restriction matrix};
    \node[directstep] (raw) at (2.85,-4.35)
      {one complete kernel\\classify symmetry afterward};

    \node[resolvedstep] (tableaux) at (8.85,-2.35)
      {Hodge/flag tableaux};
    \node[resolvedstep] (blocks) at (8.85,-3.35)
      {restriction blocks by \(S_n\) sector};
    \node[resolvedstep] (characters) at (8.85,-4.35)
      {sector kernels read separately\\[-1pt]
       {\scriptsize alternating: Bose-compatible}\\[-1pt]
       {\scriptsize hook: non-Bose}};

    \coordinate (split) at (5.85,-0.72);
    \draw[sharedline] (problem.south) -- (split);
    \fill (split) circle (1.15pt);
    \draw[directarrow] (split) -| (direct.north);
    \draw[resolvedarrow]
      (split) -| (resolved.north);
    \draw[directarrow] (direct) -- (groebner);
    \draw[directarrow] (groebner) -- (global);
    \draw[directarrow] (global) -- (raw);
    \draw[resolvedarrow] (resolved) -- (tableaux);
    \draw[resolvedarrow] (tableaux) -- (blocks);
    \draw[resolvedarrow] (blocks) -- (characters);

    \node[scalebox] at (5.85,-5.55)
      {\textbf{Scale reduction at \(n{=}7\)}\\[-1pt]
       \(\mathbf{65{,}870}\) tableaux \(\longrightarrow\)
       \(\mathbf{15}\) exact blocks;
       \quad largest multiplicity space \(\mathbf{475}\)};
  \end{tikzpicture}
  \caption{Two realizations of the same restriction problem.  Hodge
  complementation gives a standard-monomial basis, and permutation symmetry
  splits the restriction map into small exact blocks.  At seven points, the
  blocks separate the Bose-compatible alternating sector from the hook.}
  \label{fig:symmetry-resolved-method}
\end{figure}

\subsection{Proof strategy}

Exact block ranks leave at most one hook copy and one sign copy in
\(\ker\mathsf C_7\).  Exact symbolic reductions provide a nonzero
representative of each type.  These two modules attain the rank bound and
form the complete kernel.  The detailed rank calculations and
ideal-membership identities are given in
\cref{app:block-ranks,app:hook,app:alternating,app:exact-methods}.

\section{The seven-point obstruction}
\label{sec:obstruction}

At seven points the target degree is
\begin{equation}
  d(7)=(11,4;2,2,2,2,2,2,2).
  \label{eq:d7-main}
\end{equation}
The symmetry-resolved calculation determines the complete pair-ideal
intersection in this degree.

\begin{theorem}[Seven-point decomposition]
\label{thm:decomposition}
Over \(\Q\),
\begin{equation}
  W_{7,\Q}
  =
  \left(
    \bigcap_{1\leq i<j\leq7}
    (\ang{i}{j},\sqb{i}{j})Q_{7,\Q}
  \right)_{d(7)}
  \cong
  \Specht{(2,1^5)}\oplus\Specht{(1^7)}.
  \label{eq:exact-decomposition}
\end{equation}
Hence \(\dim_\Q W_{7,\Q}=7\).  The same statement holds over \(\C\).
\end{theorem}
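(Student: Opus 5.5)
The proof has two halves: an upper bound on \(W_{7,\Q}\) from exact block ranks, and a matching lower bound from explicit solutions. Both run in the symmetry-resolved tableau realization of \cref{sec:flag-method}.

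\textbf{Setup.} The standard-monomial basis of semistandard tableaux of shape \(\Lambda_7=(15,15,4,4,4)\) and content \(\mu_7\) gives a \(\Q\)-basis of \(V_{7,\Q}\) with \(\dim V_7=65{,}870\). Every restriction \(Q_7\to Q_7/J_{ij}\) is computed by straightening, so it has integer coefficients. We may therefore work over \(\Q\) throughout. The statement over \(\C\) then follows because \(W_{7,\C}=W_{7,\Q}\otimes_\Q\C\), as kernels of rational matrices commute with field extension, and because rational Specht modules stay irreducible over \(\C\).

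\textbf{Upper bound.} Relabelling commutes with \(\mathsf C_7\), and the \(S_7\) action permutes the pair ideals among themselves. Hence \cref{eq:isotypic-restriction-main} applies. For each of the fifteen partitions \(\lambda\vdash7\) I fix a primitive Young idempotent \(e_\lambda\) and compute a basis of \(e_\lambda V_7\), whose dimension is at most \(475\). I then assemble the block \(\mathsf C_{7,\lambda}\) by composing with the restrictions to all \(21\) pair quotients. Computing on the image of \(e_\lambda\) suffices: \(e_\lambda\) commutes with the \(S_7\)-equivariant map \(\mathsf C_7\), so \(e_\lambda\ker\mathsf C_7=\ker(\mathsf C_7|_{e_\lambda V_7})\).

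The target is to show the following by exact rational Gaussian elimination, or by reduction modulo a large prime for a rank lower bound:
\begin{itemize}
  \item \(\ker\mathsf C_{7,\lambda}=0\) for \(\lambda\ne(2,1^5),(1^7)\);
  \item \(\dim\ker\le1\) in the two remaining sectors.
\end{itemize}
A rank computed modulo a prime never exceeds the rational rank, so kernel dimensions obtained modulo \(p\) are valid upper bounds. This gives \(\dim W_{7,\Q}\le 6+1\), with at most one copy of each of \(\Specht{(2,1^5)}\) and \(\Specht{(1^7)}\).

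\textbf{Lower bound.} For the sign sector, the Hodges numerator \(N_7^{\rm Hodges}\) is a nonzero alternating element of \(W_7\), by \cref{eq:pair-membership-form,eq:D-alternating-main,eq:N-alternating-main}. For the hook sector, I would take a kernel vector \(v\) of \(\mathsf C_{7,(2,1^5)}\) over \(\Q\) and lift it to an explicit bracket polynomial \(h\). I would then certify \(h\in J_{ij}\) for every pair by an exact membership identity \(h=\ang{i}{j}r_{ij}+\sqb{i}{j}\widetilde r_{ij}\) modulo \(I_7\). By \(S_7\)-symmetry it suffices to check one pair together with the orbit structure, or else to check all \(21\) pairs directly. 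Nonvanishing of \(h\) is checked by its standard-monomial expansion.

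Since \(h\) lies in the \((2,1^5)\)-isotypic component, its cyclic module is a nonzero quotient-free copy of the irreducible \(\Specht{(2,1^5)}\), of dimension \(6\). The two modules have distinct characters, so their sum is direct and has dimension \(7\). Together with the upper bound, this gives \cref{eq:exact-decomposition}.

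\textbf{Main obstacle.} I expect the hardest step to be the exact rank certification of the fifteen blocks. The difficulty is mostly in building \(e_\lambda V_7\) and the pair restrictions in the tableau basis without a blowup in straightening cost. The target quotients \((Q_7/J_{ij})_{d(7)}\) also need standard bases. These come from tableaux in which the column \(\{i,j\}\) and its complement are removed, and that requires care with the Hodge signs of \cref{eq:hodge-map-main}. I would first try modular ranks at several primes to locate the pivots, and then confirm full rank over \(\Z\) with a single nonzero maximal minor.
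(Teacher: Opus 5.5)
Your architecture is the paper's: Young-idempotent blocks, modular-minor rank bounds that leave at most one hook copy and one sign copy, explicit witnesses certified by exact reduction, and flat base change to \(\C\). The step that fails as written is your recipe for the target spaces. The tableaux avoiding the short column \(\{i,j\}\) and its tall complement form a basis of \((Q_7/J_{ij})_{d(7)}\) only if \(J_{ij}\) is spanned by the tableaux containing one of these two columns. With tall columns first, this holds for \(J_{67}\): the tall column \(\{1,\ldots,5\}\), the Hodge image of \(\sqb{6}{7}\), is the global minimum of the straightening order, and the short column \(\{6,7\}\) is the global maximum.

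It fails for \(J_{12}\). Momentum conservation gives \(\ang{1}{2}\sqb{2}{7}=-\sum_{k=3}^{6}\ang{1}{k}\sqb{k}{7}\). The left side lies in \(J_{12}\). Each term on the right is, up to sign, a distinct standard tableau with tall column \(\{1,\ldots,7\}\setminus\{k,7\}\) and short column \(\{1,k\}\), and it contains neither forbidden column. One can multiply by a standard monomial that keeps all four tableaux standard: three copies of the tall column \(\{1,\ldots,5\}\) in front and ten suitable short columns with bottom entry \(6\) or \(7\) at the end. This carries the relation into degree \(d(7)\). Your coordinate map therefore does not annihilate \(J_{12}\), its kernel need not contain \(W_7\), and the upper bound built from it is unfounded. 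The repair is to route every restriction through the compatible pair: \(f\in J_{ij}\iff\sigma f\in J_{67}\) for any relabelling \(\sigma\) with \(\sigma\{i,j\}=\{6,7\}\).

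The paper never constructs \((Q_7/J_{ij})_{d(7)}\) at all, so this issue, which is your main obstacle, does not arise. It evaluates the Young-projected monomials \(y_\lambda m_a\) at \(32\) integral points on each locus \(Z_{ij}=\{\ang{i}{j}=\sqb{i}{j}=0\}\). Ideal membership implies vanishing there, so \(y_\lambda W_7\subseteq\ker E_\lambda\). A nonzero modular minor of order \(\rho_\lambda\) then gives \([W_7:\Specht{\lambda}]\le m_\lambda-\rho_\lambda\), which is exactly the one-sided bound needed. For the same reason, exact membership certification of the witnesses is logically essential in the paper; in a correct exact-quotient version it would only be a cross-check.

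Two smaller differences remain. First, your bound requires \(m_\lambda=\dim e_\lambda V_7\) exactly, but a modular rank of the projected spanning set only bounds \(m_\lambda\) from below. The paper pins down every \(m_\lambda\) from lower-bound minors together with \(\sum_\lambda f^\lambda\widehat m_\lambda=65{,}870\). Second, for the sign line the paper certifies its own alternant \(H_7\) by exact reduction. Your use of the Hodges numerator is legitimate but rests on the cited pair-membership property.
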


Equivalently, since
\(\Specht{(2,1^5)}\cong\sgn\otimes\Specht{(6,1)}\), it may also be written
\begin{equation}
  W_{7,\Q}
  \cong
  \sgn\otimes
  \bigl(\Specht{(7)}\oplus\Specht{(6,1)}\bigr).
  \label{eq:permutation-module-interpretation}
\end{equation}
The expression in parentheses is the seven-dimensional permutation module:
one invariant average plus the six-dimensional standard representation.
After division by the alternating denominator, this is the representation
carried by the allowed rational functions.  It identifies the physical
average and the label-dependent departures from it.

\subsection{An explicit hook solution}

The additional sector is represented by a concrete polynomial.  In
\cref{app:hook} we construct \(F_7\) with
\begin{equation}
  F_7\neq0,\qquad
  F_7\in J_{ij,\Q}\ \text{for every }i<j,\qquad
  \Span_\Q(S_7F_7)\cong\Specht{(2,1^5)}.
  \label{eq:hook-properties-main}
\end{equation}
Thus \(F_7\) is a nonzero pair-ideal solution whose orbit is the
six-dimensional hook, and it is linearly independent of the Hodges
numerator.

The appendices give explicit hook and sign representatives and the exact
reductions proving their ideal membership.

\subsection{Restriction to the proposed special kinematics}

The hook proves \(\dim W_7>1\) and disproves
\cref{eq:original-conjecture}.  We next evaluate it on the special
kinematics proposed for induction.

Specializing
\cref{eq:special-k-ideal-main} to \(n=7\) gives
\begin{equation}
  K_{67}(\alpha)
  =
  \bigl(
    \langle67\rangle,\,
    \langle i6\rangle-\alpha\langle i7\rangle
    \ \big|\ 1\leq i\leq5
  \bigr)Q_7,
  \label{eq:K67-formal-main}
\end{equation}
with quotient map \(\pi_{K,\alpha}:Q_7\to Q_7/K_{67}(\alpha)\).

\begin{proposition}[Injectivity of the special-kinematics restriction]
\label{prop:special-k-hook}
At \(\alpha=2\), restriction to special kinematics is injective on the
seven-point hook:
\begin{equation}
  \rank_\Q\!
  \left(
    \left.\pi_{K,2}\right|_{\Span_\Q(S_7F_7)}
  \right)=6.
  \label{eq:special-k-hook-rank}
\end{equation}
The same rank holds over \(\Q(\alpha)\) for generic \(\alpha\).
\end{proposition}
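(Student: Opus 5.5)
The plan is to reduce the statement to one exact rank computation on six explicit vectors. Since \(\Span_\Q(S_7F_7)\cong\Specht{(2,1^5)}\) is irreducible and six-dimensional by \cref{eq:hook-properties-main}, I would first fix a basis of it. A convenient choice is obtained from the \(S_6\)-stabilizer structure: the vectors \(F^{(a)}=(a\,7)F_7\) for \(a=1,\ldots,6\), or more concretely six transposition images of \(F_7\) chosen so that their Gram or coordinate matrix in the tableau basis of \(V_7\) has rank \(6\). This rank is itself checked exactly in the standard-monomial basis of \cref{sec:flag-method}. The map \(\pi_{K,2}\) is linear, so the claimed rank equals the rank of the \(6\)-column matrix of the images \(\pi_{K,2}(F^{(a)})\) in a basis of the degree-\(d(7)\) part of \(Q_7/K_{67}(2)\).

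Next I need an explicit normal form for \(Q_7/K_{67}(\alpha)\). Imposing \(|6\rangle=\alpha|7\rangle\) eliminates every \(\ang{i}{6}\) in favour of \(\alpha\ang{i}{7}\) and sets \(\ang{6}{7}=0\). In flag language, the short columns containing \(6\) are rewritten, and the remaining relations are the Plücker relations among \(\lambda_1,\ldots,\lambda_5,\lambda_7\), the square-bracket Plücker relations, and momentum conservation, which now reads \(\sum_{i\le5}|i\rangle[i|+|7\rangle([7|+\alpha[6|)=0\). I would compute a Gröbner basis of the quotient ideal over \(\Q\) at \(\alpha=2\), restricted to multidegree \(d(7)\) (with the label weights of \(6\) and \(7\) now merged through \(\alpha\)). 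I would then reduce each \(F^{(a)}\) to normal form and extract the \(\Q\)-coefficient matrix. Its rank must be \(6\). It cannot exceed \(6\), so a single nonzero \(6\times6\) minor suffices; exact rational arithmetic certifies it.

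For generic \(\alpha\), the same reduction is done over \(\Q(\alpha)\), or over \(\Q[\alpha]\) with \(\alpha\) treated as a parameter. The minor found at \(\alpha=2\) is then a rational function of \(\alpha\) that is nonzero at \(\alpha=2\), hence nonzero as an element of \(\Q(\alpha)\). Lower semicontinuity of rank would give rank \(\ge 6\) generically, provided the normal-form basis specializes well. This requires checking that no leading coefficient of the parametric Gröbner basis vanishes at \(\alpha=2\). If it does, I would instead redo the reduction directly over \(\Q(\alpha)\) and exhibit the minor.

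The main obstacle is the quotient computation: the Gröbner basis of \(Q_7/K_{67}\) in degree \(d(7)\) is large, and this is the step that made the direct seven-point approach impractical. I would first try to stay inside the flag/tableau framework. Under \(|6\rangle\propto|7\rangle\) the short-column part degenerates to a \(\operatorname{Gr}(2,6)\)-type straightening on the labels \(\{1,\ldots,5,7\}\), which may allow only the images to be straightened rather than the full space. As a fallback, injectivity can be shown by numerical evaluation: pick random rational points on the special-kinematics variety, that is, momentum-conserving spinors with \(\lambda_6=2\lambda_7\), evaluate the six polynomials, and find a nonsingular \(6\times6\) evaluation matrix. Nonzero evaluations certify linear independence of the images as functions on the variety, and this implies rank \(6\) in the reduced quotient.
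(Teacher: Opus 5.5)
Your fallback is the paper's proof at \(\alpha=2\). The paper never forms a normal form in \(Q_7/K_{67}(2)\). It evaluates six hook translates at six integral momentum-conserving points with \(\lambda_6=2\lambda_7\), and shows that the resulting \(6\times6\) integer determinant is nonzero by exhibiting nonzero residues modulo two primes. Evaluation at such points factors through \(Q_7/K_{67}(2)\) itself, not only its reduction, so this already gives the stated rank. Your primary Gröbner route, which you rightly flag as the bottleneck, is therefore unnecessary.

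Your concrete basis suggestion is wrong, however. For \(a\in\{3,4,5,6\}\), the transposition \((a\,7)\) commutes with the row group \(\{e,(12)\}\) of \(T_{\mathrm h}\) and lies in its column group. Hence \((a\,7)y_{\mathrm h}=-y_{\mathrm h}\), and so \((a\,7)F_7=-F_7\). Under \(\Specht{(2,1^5)}\cong\sgn\otimes\Specht{(6,1)}\) with \(F_7\mapsto e_1-e_2\), your six vectors give only \(\pm(e_1-e_2)\), \(\pm(e_1-e_7)\), \(\pm(e_2-e_7)\). They span a two-dimensional space, so the rank check you planned would fail. The paper uses \(F_7,(2\,3)F_7,\ldots,(2\,7)F_7\), which correspond to \(\pm(e_1-e_k)\). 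Pre-certifying independence of the preimages is also redundant, since independence of the six images already implies it.

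For generic \(\alpha\), your parametric Gröbner computation with a specialization check is valid in principle. It needs exactly the large quotient computation you wanted to avoid, and your evaluation fallback, as stated, does not cover this case. The paper extends the evaluation points along the one-parameter section \(\lambda_6(\alpha)=\alpha\lambda_7\), \(\widetilde\lambda_7(\alpha)=\widetilde\lambda_7(2)+(2-\alpha)\widetilde\lambda_6\). This keeps \(\lambda_6\widetilde\lambda_6+\lambda_7\widetilde\lambda_7\) fixed, so momentum conservation holds for every \(\alpha\), and the points lie on \(V(K_{67}(\alpha))\). Evaluation is then a \(\Q(\alpha)\)-algebra map out of \(Q_{7,\Q(\alpha)}/K_{67}(\alpha)\). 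The evaluation determinant is an integer polynomial in \(\alpha\) that is nonzero at \(\alpha=2\), so it is invertible over \(\Q(\alpha)\). This gives rank \(6\) over \(\Q(\alpha)\), and also at every specific \(\alpha\) outside the finitely many roots of that polynomial, without any Gröbner basis.
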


The exact nonvanishing-minor proof is given in
\cref{app:special-kinematics}.  The hook image remains six-dimensional,
whereas \(W_{6,\Q}\) is one-dimensional.  The proposed lower-point boundary
identification therefore fails.  We retain the ordering and spinor ratio as
part of the collinear boundary data.

For an ordered pair \(a\neq b\), let \(K_{ab}(2)\)
be the relabelling of \cref{eq:K67-formal-main} that imposes
\(\lambda_a=2\lambda_b\), and define
\begin{equation}
  \mathcal R_{7,2}
  :=
  \bigoplus_{a\neq b}\pi_{ab,2}
  \colon
  W_{7,\Q}
  \longrightarrow
  \bigoplus_{a\neq b}Q_7/K_{ab}(2).
  \label{eq:all-marked-restriction-seven}
\end{equation}

\begin{corollary}[Injectivity of marked restrictions at seven points]
\label{cor:all-marked-faithful-seven}
The map \(\mathcal R_{7,2}\) is injective.  Replacing \(2\) by the
indeterminate \(\alpha\) gives an injective map after base change to
\(\Q(\alpha)\).  At \(\alpha=2\), at most two suitably relabelled boundary
restrictions already suffice to separate all elements of \(W_{7,\Q}\).
\end{corollary}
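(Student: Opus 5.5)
The plan is to exploit the decomposition of \cref{thm:decomposition} together with $S_7$-equivariance of the family of marked restrictions. The key observation is that $\mathcal R_{7,2}$ is equivariant for the action of $S_7$ that relabels the external data and, at the same time, permutes the ordered pairs $(a,b)$ indexing the summands. Indeed, for $\sigma\in S_7$ we have $\sigma K_{ab}(2)=K_{\sigma(a)\sigma(b)}(2)$, so $\sigma$ induces isomorphisms $Q_7/K_{ab}(2)\to Q_7/K_{\sigma(a)\sigma(b)}(2)$ compatible with the quotient maps. Hence $\ker\mathcal R_{7,2}$ is an $S_7$-submodule of $W_{7,\Q}\cong\Specht{(2,1^5)}\oplus\Specht{(1^7)}$. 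Since the two summands are non-isomorphic irreducibles, the only submodules are $0$, the hook, the sign line, and the whole space. It therefore suffices to show that neither irreducible summand lies in the kernel.

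For the hook, \cref{prop:special-k-hook} already shows that the single restriction $\pi_{67,2}=\pi_{K,2}$ is injective on $\Span_\Q(S_7F_7)$, so the hook is not in the kernel. For the sign line, I will show $\pi_{K,2}(N_7^{\mathrm{Hodges}})\neq0$. The cleanest route uses the factorization recalled in the discussion around \cref{eq:universal-factor}: on the Hodges numerator the image equals $P_7$ times the image of the nonzero six-point numerator. One can confirm this is nonzero by exhibiting a numerical point of $\mathrm{SH}(2,7,0)$ with $\lambda_6=2\lambda_7$ at which $N_7^{\mathrm{Hodges}}$ does not vanish. A generic rational point on this locus with momentum conservation imposed will do, because a function that is nonzero at a point of the variety cut out by $K_{67}(2)$ is nonzero in $Q_7/K_{67}(2)$. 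This proves injectivity of $\mathcal R_{7,2}$. In fact it proves more: one restriction $\pi_{67,2}$ is nonzero on each summand.

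The claim that two restrictions suffice needs care, because the kernel of the single map $\pi_{67,2}$ on $W_7$ need not be zero. That kernel is not $S_7$-stable, only stable under the stabilizer of the ordered pair $(6,7)$. It could contain mixtures $h+cN$ whose hook and sign parts have proportional images. I will handle this with an exact rank computation: choose $\pi_{67,2}$ and a relabelled $\pi_{ab,2}$, say with $(a,b)=(7,6)$ or $(1,2)$. Apply both to the explicit basis, namely six hook vectors from $S_7F_7$ plus $N_7^{\mathrm{Hodges}}$. Then verify through normal forms in the two quotient rings that the stacked $7$-column matrix has rank $7$, using a nonvanishing $7\times7$ minor as in \cref{app:special-kinematics}. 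Alternatively, evaluate at random exact rational points on the two boundary loci until a nonsingular $7\times7$ evaluation matrix appears; this certifies rank $7$ because evaluation factors through the quotients. If it turns out that the single map already has rank $7$, then "at most two" holds trivially.

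The base change to $\Q(\alpha)$ follows by semicontinuity. The rank of the stacked matrix with entries in $\Q[\alpha]$ over $\Q(\alpha)$ is at least its rank at the specialization $\alpha=2$, which is $7$. The main obstacle is this last exact two-restriction rank certificate. Representing elements of $Q_7/K_{ab}(2)$ canonically requires Gröbner or straightening normal forms modulo the combined ideal, and the evaluation-point shortcut is the approach I will try first to avoid that cost.
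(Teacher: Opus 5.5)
Your proof that $\mathcal R_{7,2}$ is injective is the same as the paper's. Equivariance makes the kernel an $S_7$-submodule of $\Specht{(2,1^5)}\oplus\Specht{(1^7)}$. \Cref{prop:special-k-hook} excludes the hook, and one nonzero evaluation on the marked locus excludes the sign line. The paper uses $H_7(x_0)\neq0$ in \cref{eq:special-k-sign-nonzero}. Your factorization remark is not needed, since nonvanishing of the factorized expression in the quotient would itself have to be checked by such an evaluation.

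The gap is in the last claim. You defer it to a $7\times7$ evaluation minor for a pre-chosen second pair such as $(7,6)$ or $(1,2)$. That minor has not been computed, and for a fixed pair it need not exist. If $\ker\pi_{67,2}|_{W_{7,\Q}}=\Q v\neq0$, your chosen $\pi_{ab,2}$ may also annihilate $v$. Evaluation functionals factor through the quotients, so no choice of points would then give a nonsingular minor.

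The missing idea is a dimension count, and you already have its input. \Cref{prop:special-k-hook} makes $\pi_{67,2}$ injective on the six-dimensional hook. Its rank on the seven-dimensional $W_{7,\Q}$ is therefore at least six, so its kernel is at most a line $\Q v$. The injectivity of $\mathcal R_{7,2}$ that you have just proved gives an ordered pair $(a,b)$ with $\pi_{ab,2}(v)\neq0$. Then $\ker(\pi_{67,2}\oplus\pi_{ab,2})\subseteq\Q v\cap\ker\pi_{ab,2}=0$. This is what ``suitably relabelled'' means in the statement, it is the paper's argument, and it needs no further computation.

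Your $\Q(\alpha)$ step inherits the problem, because you apply semicontinuity to that uncertified stacked matrix. The semicontinuity idea is right, but it also needs evaluation functionals that are polynomial in $\alpha$. Each evaluation point must therefore lie in a momentum-conserving family on $V(K_{ab}(\alpha))$, and isolated rational points at $\alpha=2$ do not supply this by themselves. The paper uses the explicit section \cref{eq:special-k-alpha-section}, $\lambda_6(\alpha)=\alpha\lambda_7$, $\widetilde\lambda_7(\alpha)=\widetilde\lambda_7(2)+(2-\alpha)\widetilde\lambda_6$, which leaves the total momentum unchanged. Along it, the hook minor and the sign evaluation become polynomials in $\alpha$ that are nonzero at $\alpha=2$, hence nonzero over $\Q(\alpha)$. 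Both Specht summands remain irreducible and non-isomorphic over $\Q(\alpha)$, so your equivariance argument then gives injectivity of $\mathcal R_{7,\alpha}$ directly, again without a two-restriction computation.
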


\begin{proof}
The map is \(S_7\)-equivariant, so its kernel is an \(S_7\)-submodule.
\Cref{prop:special-k-hook} excludes the hook, and the nonzero sign
evaluation in \cref{app:special-kinematics} excludes the sign line.
\Cref{thm:decomposition} then gives a zero kernel.  The one-parameter
argument in the same appendix proves the statement over \(\Q(\alpha)\).
Finally, one restriction has rank at least six on \(W_{7,\Q}\), so a second
restriction separates its possible one-dimensional kernel.
\end{proof}

The marked special-locus values therefore determine the unrescaled
seven-point numerator.

\section{Bose-compatible uniqueness at seven points}
\label{sec:bose}

Because \(D_7\) is alternating, Bose symmetry of \(N/D_7\) requires an
alternating numerator and therefore selects the sign component of
\(W_{7,\Q}\).  There is exactly one such component, and the
Hodges numerator is a nonzero vector in it.  Therefore
\begin{equation}
  W_{7,\Q}^{\Bose}
  =
  (W_{7,\Q})_{\sgn}
  =
  \Specht{(1^7)}
  =
  \Q\,N_7^{\mathrm{Hodges}}.
  \label{eq:bose-line}
\end{equation}
The remaining coefficient is the usual overall normalization.  It is fixed
by the convention for the gravitational coupling, or relative to lower
multiplicity by one normalized soft or factorization limit.

\Cref{eq:bose-line} gives the seven-point physical result:
\begin{equation}
  \text{degree and little group}
  +\text{ pair ideals}
  +\text{ Bose symmetry}
  \quad\Longrightarrow\quad
  N_7\propto N_7^{\mathrm{Hodges}}.
  \label{eq:physical-completion-main}
\end{equation}
At seven points, Bose symmetry is independent of the local
ideal-membership conditions.

Division by \(D_7\) preserves this seven-dimensional space of rational
functions.  The Hodges component is permutation invariant.  The six hook
functions transform in the standard representation and are excluded by Bose
symmetry.

\section{Higher multiplicity}
\label{sec:higher-points}

\subsection{Bose-compatible sectors at low multiplicity}

At higher multiplicity, Bose-compatible solutions are counted by the
multiplicity
\begin{equation}
  b_n:=
  \dim_\C\operatorname{Hom}_{S_n}(\sgn,W_{n,\C})
  =
  \dim_\C(D_n^{-1}W_{n,\C})^{S_n}.
  \label{eq:bose-multiplicity-question}
\end{equation}
We prove
\begin{equation}
  b_5=b_6=b_7=1,\qquad b_8=2.
  \label{eq:bose-multiplicities-through-eight}
\end{equation}
Thus \(b_n\) first becomes two at eight points.  It counts Bose-compatible
algebraic directions; physical boundary conditions may reduce this number
further.

\subsection{Marked collinear restrictions}

Two simple identities persist for all \(n\):
\begin{equation}
  A_n+B_n=\binom n2-6,
  \qquad
  (A_{n+1},B_{n+1})-(A_n,B_n)=(n-1,1).
  \label{eq:fixed-deficit-and-step}
\end{equation}
The second is the bidegree of \(P_{n+1}\); the first fixes the total
degree.  Divisibility by individual angle brackets requires further
information.  The precise comparison with diagonal-ideal formulas is given
in \cref{app:fixed-deficit}
\cite{LeeLi2009Notes,LeeLi2011Catalan}.

Introduce the marked restriction directly.  Treat \(\alpha\) as an
indeterminate and define
\(Q_{n,\Q(\alpha)}:=Q_{n,\Q}\otimes_\Q\Q(\alpha)\),
\(W_{n,\Q(\alpha)}:=W_{n,\Q}\otimes_\Q\Q(\alpha)\), and
\(\pi_{ab,\alpha}\) the quotient map for the oriented ideal
\begin{equation}
  K_{ab}(\alpha)
  :=
  \bigl(
    \ang{a}{b},\,
    \ang{i}{a}-\alpha\ang{i}{b}
    \ \big|\ i\notin\{a,b\}
  \bigr)Q_{n,\Q(\alpha)} .
  \label{eq:general-marked-ideal}
\end{equation}
Thus \(K_{ab}(\alpha)\) imposes
\(\lambda_a=\alpha\lambda_b\) whenever both columns are nonzero.  The
joint map contains both orientations:
\begin{equation}
  \mathcal R_{n,\alpha}
  :=
  \bigoplus_{a\neq b}\pi_{ab,\alpha}
  \colon
  W_{n,\Q(\alpha)}
  \longrightarrow
  \bigoplus_{a\neq b}
  Q_{n,\Q(\alpha)}/K_{ab}(\alpha).
  \label{eq:all-marked-restriction-general}
\end{equation}

\begin{lemma}[Fine-weight core of a marked ideal]
\label{lem:fine-weight-marked-core}
Let \(A,B\geq0\), \(c\in\Q^\times\), and let
\(Q_{n;(A,B),\chi}\) be the simultaneous angle/square bidegree and physical
little-group weight space of \(Q_{n,\Q}\).  For an oriented pair
\((a,b)\),
\begin{equation}
  K_{ab}(c)\cap Q_{n;(A,B),\chi}
  =
  \ang{a}{b}\,
  Q_{n;(A-1,B),\,\chi-e_a-e_b},
  \label{eq:fine-weight-marked-core}
\end{equation}
where the right-hand side is zero when \(A=0\).
\end{lemma}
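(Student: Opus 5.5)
The plan is to use the little-group torus to move the spinor ratio \(c\), so that the \(c\)-dependence disappears on a fixed weight space, and then to recognise what remains as divisibility by \(\ang{a}{b}\). The inclusion \(\supseteq\) is immediate. \(\ang{a}{b}\) is a generator of \(K_{ab}(c)\), and multiplication by \(\ang{a}{b}\) raises the angle degree by one and the little-group weight by \(e_a+e_b\). Everything therefore rests on \(\subseteq\). For \(A=0\) there are no angle brackets in the weight space, and the argument below will show the intersection is zero.

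\emph{Step 1: removing the dependence on \(c\).} The generators \(\ang{i}{a}-c\ang{i}{b}\) are not homogeneous in the fine grading. Consider the torus element \(\tau_t\) acting by \(\lambda_a\mapsto t\lambda_a\), \(\tilde\lambda_a\mapsto t^{-1}\tilde\lambda_a\), with \(t\in\Q^\times\). It preserves the Plücker relations and momentum conservation, so it acts on \(Q_{n,\Q}\). On the weight space \(Q_{n;(A,B),\chi}\) it acts by the scalar \(t^{\chi_a}\). It also maps \(\ang{i}{a}-c\ang{i}{b}\) to \(t\ang{i}{a}-c\ang{i}{b}\), which is a unit multiple of \(\ang{i}{a}-ct^{-1}\ang{i}{b}\). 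Hence \(\tau_t K_{ab}(c)=K_{ab}(c/t)\). If \(f\) lies in the intersection, then \(f=t^{-\chi_a}\tau_t f\in K_{ab}(c')\) for every \(c'\in\Q^\times\). Extending scalars to \(\C\) and using the same argument with \(t\in\C^\times\), we get \(f\in K_{ab}(c')_\C\) for every \(c'\in\C^\times\).

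\emph{Step 2: vanishing on the divisor.} Since \(Q_{n,\C}\) is the coordinate ring of the irreducible variety \(\mathrm{SH}(2,n,0)\) (equivalently, the flag variety \(\operatorname{Fl}(2,n-2)\) of \cref{sec:flag-method}), \(f\) vanishes on every locus \(\lambda_a=c'\lambda_b\). The union of these loci over \(c'\in\C^\times\) contains the open dense part \(\{\lambda_a\neq0,\lambda_b\neq0\}\) of the hypersurface \(\{\ang{a}{b}=0\}\). Hence \(f\) vanishes on all of \(V(\ang{a}{b})\), provided each irreducible component of that hypersurface meets this open set. A component lying entirely inside \(\{\lambda_a=0\}\) would have dimension \(\dim\mathrm{SH}-2\), so this holds by a dimension count.

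\emph{Step 3: from vanishing to divisibility, and descent.} I will show that the principal ideal \(\ang{a}{b}Q_{n,\C}\) is radical. By \(S_n\)-relabelling, which preserves everything up to permutation of \(\chi\), it suffices to treat \((a,b)=(1,2)\). Then \(\ang{1}{2}\) is the minimal short column of the flag tableaux. The standard straightening law \cite{BrionLakshmibai2003} shows that the standard monomials whose first column is \(\{1,2\}\) form a basis of \(\ang{1}{2}Q\). It also shows that the remaining standard monomials give a basis of the quotient, which is the coordinate ring of the corresponding Schubert divisor of the flag variety. Schubert varieties are reduced, and this is where I expect the main technical obstacle: one must check that the hyperplane section by \(p_{12}\) is exactly the reduced Schubert divisor and not a multiple of it. By Step 2 and reducedness, \(f=\ang{a}{b}g\) over \(\C\). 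Taking fine-weight components of \(g\) shows \(g\in Q_{n;(A-1,B),\chi-e_a-e_b}\), which is zero when \(A=0\). Finally, \(g\) can be taken over \(\Q\): multiplication by \(\ang{a}{b}\) is injective on the domain \(Q\), and the divisibility is a linear system defined over \(\Q\) with a solution over \(\C\).
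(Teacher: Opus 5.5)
Your Steps~1 and~2 follow the paper's proof. The leg-$a$ torus carries $K_{ab}(c)$ to $K_{ab}(c/t)$, and a fine-homogeneous $f$ is an eigenvector, so membership spreads to every marked ideal. The marked loci then fill a dense open part of $\{\ang{a}{b}=0\}$; your component-dimension count is a useful addition here. At that point the paper simply invokes that $D_{ab}$ is a reduced Cartier divisor. Your Step~3 tries to prove this, and the mechanism you give is false.

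In the flag tableaux the tall (square-bracket) columns stand to the left of the short ones. So a short column $\{1,2\}$ can occur in a semistandard tableau only if every tall column contains both $1$ and $2$, and multiplying a standard monomial by $\ang{1}{2}$ generally destroys standardness. Concretely, take bidegree $(1,1)$ and little-group weight $\chi=0$. Every label then occurs once, so each tableau is a tall column followed by its complementary short column. The only candidate with short column $\{1,2\}$ has tall column $\{3,\ldots,n\}$, which violates weak increase along the first row. Yet $\ang{1}{2}\sqb{1}{2}$ is a nonzero element of $\ang{1}{2}Q$ in this weight; its standard expansion, via momentum conservation, uses only $\ang{i}{j}\sqb{i}{j}$ with $\{i,j\}\neq\{1,2\}$. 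So the monomials containing $\{1,2\}$ do not span $\ang{1}{2}Q$, and their complement is not a basis of $Q/\ang{1}{2}Q$. The decisive fact therefore remains unproved, as you yourself anticipated: the zero scheme of $\ang{a}{b}$ must be the Schubert divisor with multiplicity one, so that vanishing on it implies divisibility.

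There are two ways to close the gap.

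\emph{Stay with tableaux.} Relabel to the maximal short column. Appending $\{n-1,n\}$ on the right always preserves semistandardness, so $\ang{n-1}{n}Q$ really is spanned by the standard monomials ending in that column. You would still need the standard-monomial theorem for the Schubert divisor $\{p_{n-1,n}=0\}$ to know that the remaining monomials stay independent on the reduced divisor.

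\emph{Argue globally (shorter).} The bracket $\ang{1}{2}$ is pulled back along the smooth projection $\operatorname{Fl}(2,n-2;n)\to\operatorname{Gr}(2,n)$, whose fibres $\operatorname{Gr}(n-4,n-2)$ are irreducible. On $\operatorname{Gr}(2,n)$ the divisor of $p_{12}$ is effective of class $\mathcal O(1)$, which generates the Picard group, and it is supported on the irreducible Schubert divisor. Hence it has multiplicity one, and its pullback $D_{12}$ is reduced and irreducible. Now use that $Q_n$ is the section ring $\bigoplus_{A,B}H^0(\operatorname{Fl},L_{A\omega_2+B\omega_{n-2}})$, which is projective normality, the same input behind the tableau basis. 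A section vanishing on $D_{12}$ is then $\ang{1}{2}$ times a section of $L_{(A-1)\omega_2+B\omega_{n-2}}$, and there are no nonzero such sections when $A=0$.

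With that in place, your weight bookkeeping and descent to $\Q$ finish the proof.
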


\begin{proof}
Scale leg \(a\) by
\(\lambda_a\mapsto t\lambda_a\) and
\(\widetilde\lambda_a\mapsto t^{-1}\widetilde\lambda_a\).
This sends \(K_{ab}(c)\) to \(K_{ab}(c/t)\).  A fine-homogeneous \(f\) is an
eigenvector for this action, so membership in \(K_{ab}(c)\) implies
membership in \(K_{ab}(\beta)\) for every \(\beta\in\Q^\times\).
After base change to \(\overline\Q\), these marked loci are dense in the
integral Schubert divisor
\(D_{ab}=\{\ang{a}{b}=0\}\).  Hence \(f\) vanishes on \(D_{ab}\).
Because this is a reduced Cartier divisor, \(f=\ang{a}{b}g\); the bidegree
and fine weight of \(g\) give the right-hand side of
\cref{eq:fine-weight-marked-core}.  The reverse inclusion follows from
\(\ang{a}{b}\in K_{ab}(c)\), and the \(A=0\) case has no section of negative
angle degree.
\end{proof}

\begin{theorem}[Injectivity of all marked restrictions]
\label{thm:marked-boundary-faithfulness}
Let \(n\geq5\), \(c\in\Q^\times\), and \(A<\binom n2\).  On every fixed
space \(Q_{n;(A,B),\chi}\), the direct sum containing one oriented marked
restriction \(K_{ab}(c)\) for each unordered pair is injective.  In
particular this holds on the full fixed-degree ansatz \(V_{n,\Q}\);
\(\mathcal R_{n,c}\) is injective on \(W_{n,\Q}\), and
\(\mathcal R_{n,\alpha}\) is injective over \(\Q(\alpha)\).
\end{theorem}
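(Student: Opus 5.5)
Suppose \(f\in Q_{n;(A,B),\chi}\) lies in the kernel of the direct sum. Then \(f\in K_{ab}(c)\) for one chosen orientation of every unordered pair \(\{a,b\}\). The plan is to apply \cref{lem:fine-weight-marked-core} repeatedly and peel off one angle bracket per pair, until the angle degree would have to exceed \(A\).

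\textbf{Step 1 (one pair).} \Cref{lem:fine-weight-marked-core} gives \(f=\ang{a}{b}\,g\) with \(g\) of bidegree \((A-1,B)\).

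\textbf{Step 2 (the remaining pairs).} Since \(f\in K_{a'b'}(c)\) for every other pair \(\{a',b'\}\), the lemma also gives \(\ang{a'}{b'}\mid f\) in \(Q_n\). To combine these divisibilities, I would use that \(Q_{n,\overline\Q}\) is the coordinate ring of the irreducible variety \(\operatorname{Fl}(2,n-2)\), so it is a domain, together with the fact that the Schubert divisors \(D_{ab}=\{\ang{a}{b}=0\}\) are distinct integral Cartier divisors.

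\begin{itemize}
\item For \(n\ge5\) the divisors \(D_{ab}\) are pairwise distinct, since \(\ang{a}{b}\) and \(\ang{a'}{b'}\) are not proportional.
\item Vanishing of \(f\) on each \(D_{ab}\) then gives \(\operatorname{div}(f)\ge\sum_{a<b}D_{ab}\).
\item Because \(\prod_{a<b}\ang{a}{b}\) is a section cutting out exactly \(\sum D_{ab}\), the quotient \(f/\prod\ang{a}{b}\) is a regular section on the flag variety with no poles.
\end{itemize}

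Concretely, I would argue inductively. Write \(f=\ang{a}{b}g\). On the dense open part of \(D_{a'b'}\) where \(\ang{a}{b}\neq0\), the vanishing of \(f\) forces \(g\) to vanish there. Because \(D_{a'b'}\not\subset D_{ab}\), \(g\) vanishes on all of \(D_{a'b'}\). By reducedness, \(\ang{a'}{b'}\mid g\). Each such division lowers the angle degree by one and shifts the little-group weight \(\chi\) as in the lemma.

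\textbf{Step 3 (degree count).} After all \(\binom n2\) pairs we obtain
\[
f=\Bigl(\prod_{a<b}\ang{a}{b}\Bigr)h,
\qquad h\in Q_{n;(A-\binom n2,B),\chi'}.
\]
Since \(A<\binom n2\), the angle degree of \(h\) is negative. A graded piece of negative degree is zero, so \(h=0\) and hence \(f=0\).

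\textbf{Consequences.} The target degree satisfies \(A_n=\binom n2-(n-3)-6<\binom n2\), so the result applies to \(V_{n,\Q}\), and therefore to \(W_{n,\Q}\subset V_{n,\Q}\). The map \(\mathcal R_{n,c}\) contains one orientation per unordered pair as a sub-sum, so it is injective. For \(\mathcal R_{n,\alpha}\) over \(\Q(\alpha)\), the same argument runs after base change. Alternatively, a kernel element defined over \(\Q(\alpha)\) can be cleared of denominators and specialized to all but finitely many \(c\in\Q^\times\), where it must vanish.

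\textbf{Main obstacle.} The hardest point is the step from divisibility by each bracket separately to divisibility by their product in the quotient ring. \(Q_n\) is not a UFD, so I would avoid unique factorization and argue geometrically through the divisor inclusion. This requires that the \(D_{ab}\) are integral and pairwise distinct, and that \(\ang{a}{b}\) cuts out \(D_{ab}\) reducedly, which is already used in the lemma's proof. It also uses the domain and normality property of \(Q_n\) (normality of flag varieties), so that a section with no poles is regular.
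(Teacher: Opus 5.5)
Your proposal is correct and follows essentially the same route as the paper. Both arguments use \cref{lem:fine-weight-marked-core} to get divisibility by each angle bracket, then extract the factors one at a time using that the Schubert divisors are distinct and integral, then use the angle-degree bound \(A<\binom n2\) to force \(f=0\), and handle the \(\Q(\alpha)\) case by clearing denominators and specializing to all but finitely many \(c\). Your additional global divisor argument, and your remark that the lemma could instead be rerun directly over \(\Q(\alpha)\), are harmless alternatives to the paper's steps.
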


\begin{proof}
If \(f\) lies in the fixed-\(c\) kernel, then
\cref{lem:fine-weight-marked-core} makes it divisible by every angle
bracket.  The corresponding Schubert divisors are distinct and integral,
so the factors may be extracted successively.  A nonzero element of angle
degree \(A<\binom n2\) cannot contain all \(\binom n2\) factors.  Hence
\(f=0\).  For the MHV target,
\begin{equation}
  A_n=\frac{n^2-3n-6}{2}
  =\binom n2-(n+3)<\binom n2 .
  \label{eq:angle-deficit-n-plus-three}
\end{equation}
This proves the stated application.  For the \(\Q(\alpha)\) statement,
clear all denominators in a putative kernel vector and its finite
ideal-membership expressions.  Specialization then gives a fixed-\(c\)
kernel vector for all but finitely many \(c\in\Q^\times\).  The result just
proved forces every coordinate polynomial to vanish at infinitely many
values, and hence to vanish identically.
\end{proof}

Thus all marked restrictions are jointly injective.  On the alternating
sector, permutation symmetry makes one marked restriction sufficient.

\subsection{One physical boundary and all-multiplicity uniqueness}
\label{sec:all-n-physical-uniqueness}

Because \(D_n\) is alternating, a Bose-symmetric reduced amplitude has an
alternating common numerator.  Write
\begin{equation}
  V_{n,\Q}^{\mathrm{alt}}:=(V_{n,\Q})_{\sgn},
  \qquad
  W_{n,\Q}^{\mathrm{alt}}:=(W_{n,\Q})_{\sgn}.
  \label{eq:general-alternating-spaces}
\end{equation}
Permutation covariance reduces the algebraic pair conditions to one
representative ideal:
\begin{equation}
  W_{n,\Q}^{\mathrm{alt}}
  =
  V_{n,\Q}^{\mathrm{alt}}\cap J_{12}.
  \label{eq:one-pair-in-alternating-sector}
\end{equation}
Indeed, if an alternating polynomial lies in \(J_{12}\), relabelling
transports that membership to every \(J_{ab}\).  More importantly, the same
transport makes one marked boundary injective before any pair-ideal
condition is imposed.

\begin{corollary}[Injectivity on the alternating ansatz]
\label{cor:single-boundary-alternating-general}
Let \(n\geq5\), \(c\in\Q^\times\), and let \((a,b)\) be an oriented pair.
Then
\begin{equation}
  \rho_{ab,c}
  :=
  \left.\pi_{ab,c}\right|_{V_{n,\Q}^{\mathrm{alt}}}
  \colon
  V_{n,\Q}^{\mathrm{alt}}
  \longrightarrow Q_{n,\Q}/K_{ab}(c)
  \label{eq:single-boundary-alternating-general}
\end{equation}
is injective.
\end{corollary}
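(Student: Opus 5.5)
The plan is to reduce the single-boundary statement to \cref{thm:marked-boundary-faithfulness} by transporting the kernel condition with permutations. Let \(f\in V_{n,\Q}^{\mathrm{alt}}\) satisfy \(\rho_{ab,c}(f)=0\), so \(f\in K_{ab}(c)\). For any \(\sigma\in S_n\), relabelling is a ring automorphism of \(Q_{n,\Q}\). It carries the oriented marked ideal \(K_{ab}(c)\) to \(K_{\sigma(a)\sigma(b)}(c)\), since the generators \(\ang{a}{b}\) and \(\ang{i}{a}-c\ang{i}{b}\) go, up to signs on individual generators, to the corresponding generators for the relabelled pair. Because \(\sigma f=\sgn(\sigma)f\) and ideals are closed under scalar multiplication, this gives \(f\in K_{\sigma(a)\sigma(b)}(c)\) for every \(\sigma\). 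Since \(S_n\) acts transitively on oriented pairs, \(f\) then lies in \(K_{a'b'}(c)\) for every oriented pair \((a',b')\), and in particular for one orientation of each unordered pair.

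Next I would check the hypotheses of \cref{thm:marked-boundary-faithfulness}. The ansatz \(V_{n,\Q}=(Q_{n,\Q})_{d(n)}\) is exactly one fixed space \(Q_{n;(A,B),\chi}\), with \(A=A_n\), \(B=B_n\) and uniform little-group weight. Its angle degree satisfies \(A_n=\binom n2-(n+3)<\binom n2\) by \cref{eq:angle-deficit-n-plus-three}. The theorem then says the direct sum of one oriented marked restriction per unordered pair is injective on this space, so \(f=0\).

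As a fallback, I can inline the argument instead of citing the theorem. By \cref{lem:fine-weight-marked-core}, each membership \(f\in K_{a'b'}(c)\) gives divisibility by \(\ang{a'}{b'}\). Extracting the \(\binom n2\) distinct integral Schubert divisors one after another would force \(f\) to have angle degree at least \(\binom n2\), which contradicts \(A_n<\binom n2\).

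The main obstacle is the equivariance bookkeeping. I need to confirm that the \(S_n\)-action used to define ``alternating'', which restores increasing label order via \(\ang{j}{i}=-\ang{i}{j}\) and the Hodge sign conventions, really sends \(K_{ab}(c)\) onto \(K_{\sigma(a)\sigma(b)}(c)\) with the same parameter \(c\) and the same orientation. In particular, \(\ang{i}{a}-c\ang{i}{b}\) maps to \(\pm(\ang{i'}{a'}-c\ang{i'}{b'})\) with one common sign, because both brackets share the label \(i\). Reordering may flip that common sign, but this does not change the ideal. Once this is checked, the rest is a direct appeal to the earlier results.
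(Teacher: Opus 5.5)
Your proposal is correct and matches the paper's proof: you transport \(f\in K_{ab}(c)\) by permutations using \(\sigma K_{ab}(c)=K_{\sigma(a)\sigma(b)}(c)\) and \(\sigma f=\sgn(\sigma)f\), obtain membership in one oriented marked ideal for each unordered pair, and conclude \(f=0\) from \cref{thm:marked-boundary-faithfulness}, whose divisor-extraction argument is what your fallback inlines. The sign bookkeeping you flag introduces no sign at all: with the antisymmetric convention \(\ang{j}{i}=-\ang{i}{j}\), relabelling sends \(\ang{i}{a}-c\ang{i}{b}\) exactly to \(\ang{\sigma(i)}{\sigma(a)}-c\ang{\sigma(i)}{\sigma(b)}\), and any sign from rewriting brackets in increasing order is only a change of notation for the same element.
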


\begin{proof}
Suppose \(f\in V_{n,\Q}^{\mathrm{alt}}\cap K_{ab}(c)\).  For any unordered
pair \(\{i,j\}\), choose an orientation and a permutation
\(\sigma\in S_n\) satisfying \(\sigma(a)=i\) and \(\sigma(b)=j\).
The natural label action gives
\[
  \sigma K_{ab}(c)=K_{ij}(c),
  \qquad
  \sigma f=\sgn(\sigma)f .
\]
Thus \(f\in K_{ij}(c)\) for one oriented representative of every unordered
pair.  It lies in the kernel of the joint marked map in
\cref{thm:marked-boundary-faithfulness}, and hence \(f=0\).
\end{proof}

We now translate this mathematical statement into a physical recursion.
Choose two positive-helicity legs \(a,b\), fix \(c\in\Q^\times\), and on
the marked holomorphic branch use
\begin{equation}
  |a\rangle=c|b\rangle,
  \qquad
  |P\rangle=|b\rangle,
  \qquad
  |P]=c|a]+|b].
  \label{eq:general-fused-spinors}
\end{equation}
Let \(\iota_{ab,c}\) denote the corresponding embedding of lower-point
kinematics and set
\begin{equation}
  P_{ab}
  :=
  \sqb{a}{b}
  \prod_{i\notin\{a,b\}}\ang{i}{b}.
  \label{eq:general-collinear-factor}
\end{equation}
The universal \(++\) gravity splitting law, after clearing the fixed common
denominators, fixes the following line of boundary values:
\begin{equation}
  \mathscr L_{n,ab,c}
  :=
  \Q\!\left(
    P_{ab}\,\iota_{ab,c}
    (N_{n-1}^{\mathrm{Hodges}})
  \right)
  \subset Q_{n,\Q}/K_{ab}(c).
  \label{eq:general-collinear-target-line}
\end{equation}
More precisely, standard tree factorization gives
\begin{equation}
  \rho_{ab,c}(N_n^{\mathrm{Hodges}})
  =
  C_n(c)\,
  P_{ab}\,\iota_{ab,c}(N_{n-1}^{\mathrm{Hodges}}),
  \qquad C_n(c)\in\Q^\times ,
  \label{eq:general-hodges-collinear-identity}
\end{equation}
where convention- and momentum-fraction-dependent constants have been
absorbed into \(C_n(c)\).  The common-denominator conversion is recorded in
\cref{lem:general-collinear-denominator}; the universal splitting amplitude
is the cited physical input \cite{BernEtAl1998}.

The reconstruction mechanism is summarized in
\cref{fig:marked-boundary-reconstruction}.  Factorization fixes the
lower-point boundary line, while single-boundary injectivity determines the
numerator up to overall normalization.

\begin{theorem}[All-multiplicity physical uniqueness]
\label{thm:all-n-physical-uniqueness}
Let \(n\geq6\) and take the five-point Hodges line as the recursive seed.
Within the fixed-common-denominator, fixed-degree ansatz, Bose symmetry
together with one complete normalized \(++\) marked-collinear boundary
determines the \(n\)-point numerator up to overall normalization:
\begin{equation}
  \left\{
    N\in V_{n,\Q}^{\mathrm{alt}}:
    \rho_{ab,c}(N)\in\mathscr L_{n,ab,c}
  \right\}
  =
  \Q\,N_n^{\mathrm{Hodges}} .
  \label{eq:all-n-physical-uniqueness}
\end{equation}
The same conclusion holds inside \(W_{n,\Q}^{\mathrm{alt}}\).
\end{theorem}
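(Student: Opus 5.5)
The set on the left of \cref{eq:all-n-physical-uniqueness} is a linear subspace of \(V_{n,\Q}^{\mathrm{alt}}\): it is the preimage \(\rho_{ab,c}^{-1}(\mathscr L_{n,ab,c})\), and \(\mathscr L_{n,ab,c}\) is a line. By \cref{cor:single-boundary-alternating-general}, \(\rho_{ab,c}\) is injective on \(V_{n,\Q}^{\mathrm{alt}}\). The preimage of a line under an injective linear map has dimension at most one. So it suffices to show that the preimage contains the nonzero vector \(N_n^{\mathrm{Hodges}}\). It will then equal \(\Q\,N_n^{\mathrm{Hodges}}\) exactly.

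The argument proceeds in the following order.

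First, check that \(N_n^{\mathrm{Hodges}}\) lies in \(V_{n,\Q}^{\mathrm{alt}}\). It has the target multidegree \(d(n)\) and rational coefficients, as reviewed in \cref{sec:original-bootstrap}. It is alternating because \(\mathcal A_n\) is Bose symmetric and \(D_n\) is alternating, by \cref{eq:D-alternating-main,eq:N-alternating-main}.

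Second, invoke the factorization identity \cref{eq:general-hodges-collinear-identity}, together with \cref{lem:general-collinear-denominator}, which converts the universal \(++\) splitting law into the common-denominator form. This gives \(\rho_{ab,c}(N_n^{\mathrm{Hodges}})=C_n(c)\,P_{ab}\,\iota_{ab,c}(N_{n-1}^{\mathrm{Hodges}})\in\mathscr L_{n,ab,c}\). Here the lower-point Hodges numerator plays the role of the normalized seed.

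Third, note that \(\rho_{ab,c}(N_n^{\mathrm{Hodges}})\neq0\). This follows from injectivity, since \(N_n^{\mathrm{Hodges}}\neq0\). Equivalently, \(C_n(c)\neq0\) and the boundary value \(P_{ab}\,\iota_{ab,c}(N_{n-1}^{\mathrm{Hodges}})\) is nonzero in \(Q_{n,\Q}/K_{ab}(c)\), so \(\mathscr L_{n,ab,c}\) really is a line.

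The recursive phrasing with the five-point seed is an induction on \(n\). Suppose the \((n-1)\)-point line is already characterized as \(\Q N_{n-1}^{\mathrm{Hodges}}\); at \(n-1=5\) this is supplied by \cref{tab:original-low-point-computation}, and at larger \(n\) by the inductive hypothesis. Then \(\mathscr L_{n,ab,c}\) is determined by lower-point data alone, and the argument above identifies the \(n\)-point line. The claim inside \(W_{n,\Q}^{\mathrm{alt}}\) follows immediately. That space is contained in \(V_{n,\Q}^{\mathrm{alt}}\) and contains \(N_n^{\mathrm{Hodges}}\), because the Hodges numerator satisfies the pair-ideal conditions \cref{eq:pair-membership-form}. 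Intersecting the line \(\Q N_n^{\mathrm{Hodges}}\) with it therefore leaves the line unchanged.

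The main obstacle is the second step. One must make sure that the physical collinear statement, which lives on the rational amplitude, descends to an identity in the quotient ring \(Q_{n,\Q}/K_{ab}(c)\). Two points need care there. The first is that clearing the denominator \(D_n\) produces exactly the factor \(P_{ab}\) times \(D_{n-1}\) evaluated on the fused kinematics \cref{eq:general-fused-spinors}. The second is that the momentum-fraction dependence collapses to a nonzero constant \(C_n(c)\) once \(c\) is fixed. I would handle this by citing \cref{lem:general-collinear-denominator} and the splitting amplitude of \cite{BernEtAl1998}. The remaining input is \(C_n(c)\in\Q^\times\), which follows from the explicit form of the \(++\) splitting function \(\propto\sqb{a}{b}/(z(1-z)\ang{a}{b})\) at \(z\) determined by \(c\). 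The rest of the argument is pure linear algebra resting on \cref{cor:single-boundary-alternating-general}.
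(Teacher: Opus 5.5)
Your proposal is correct and follows the paper's own proof. The paper likewise uses \cref{eq:general-hodges-collinear-identity} to exhibit \(N_n^{\mathrm{Hodges}}\) as a nonzero element of the inverse image, writes \(\rho_{ab,c}(N)=\kappa\,\rho_{ab,c}(N_n^{\mathrm{Hodges}})\), and concludes \(N=\kappa N_n^{\mathrm{Hodges}}\) from the injectivity in \cref{cor:single-boundary-alternating-general}; this is exactly your ``preimage of a line under an injective map'' argument, and your extra remarks on \(V_{n,\Q}^{\mathrm{alt}}\), \(W_{n,\Q}^{\mathrm{alt}}\), and the inductive seed only spell out what the paper leaves implicit.
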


\begin{proof}
Equation~\eqref{eq:general-hodges-collinear-identity} supplies a nonzero
element of the displayed inverse image.  If \(N\) is any other element,
then for some \(\kappa\in\Q\),
\[
  \rho_{ab,c}(N)
  =
  \kappa\,\rho_{ab,c}(N_n^{\mathrm{Hodges}}).
\]
Hence \(N-\kappa N_n^{\mathrm{Hodges}}\) lies in the kernel of
\(\rho_{ab,c}\).  \Cref{cor:single-boundary-alternating-general} makes that
kernel zero.
\end{proof}

\begin{figure}[!h]
  \centering
  \begin{tikzpicture}[
      x=1cm,
      y=1cm,
      >=stealth,
      target/.style={
        draw=black,
        rounded corners=1.5pt,
        fill=black!5,
        minimum width=4.45cm,
        minimum height=0.92cm,
        align=center,
        inner sep=4pt,
        font=\small
      },
      physical/.style={
        draw=black,
        fill=black!12,
        minimum width=4.45cm,
        minimum height=0.82cm,
        align=center,
        inner sep=4pt,
        font=\small
      },
      mainarrow/.style={->,thick},
      inclusion/.style={->,thin},
      arrowlabel/.style={
        font=\scriptsize,
        align=center,
        fill=white,
        inner xsep=2pt,
        inner ysep=1pt
      }
    ]
    \node[target] (ansatz) at (0,0)
      {alternating numerator ansatz\\\(V_n^{\mathrm{alt}}\)};
    \node[target] (boundary) at (7.15,0)
      {boundary-value space\\\(Q_n/K_{ab}(c)\)};

    \node[physical] (hodgesn) at (0,-2.35)
      {\(n\)-point Hodges line\\\(\mathbb Q N_n^{\mathrm H}\)};
    \node[physical] (boundaryline) at (7.15,-2.35)
      {normalized boundary line\\\(\mathscr L_{n,ab,c}\)};
    \node[physical] (lower) at (7.15,-4.45)
      {\((n-1)\)-point Hodges line\\\(\mathbb Q N_{n-1}^{\mathrm H}\)};

    \draw[mainarrow] (ansatz) --
      node[arrowlabel,above=2pt]
        {\(\rho_{ab,c}\)\\injective restriction}
      (boundary);

    \draw[inclusion] (hodgesn) --
      node[arrowlabel,left=2pt] {inclusion}
      (ansatz);
    \draw[inclusion] (boundaryline) --
      node[arrowlabel,right=2pt] {inclusion}
      (boundary);

    \draw[mainarrow] (lower.north) --
      node[arrowlabel,midway,right=7pt]
        {normalized \(++\)\\factorization}
      (boundaryline.south);
    \draw[mainarrow] (boundaryline) --
      node[arrowlabel,below=2pt]
        {unique up to\\normalization}
      (hodgesn);

    \node[font=\scriptsize,align=center]
      at (3.575,-1.18)
      {\(\rho_{ab,c}\bigl(\mathbb Q N_n^{\mathrm H}\bigr)
        =\mathscr L_{n,ab,c}\)};
  \end{tikzpicture}
  \caption{All-multiplicity physical reconstruction.  Within the alternating
  numerator ansatz selected by Bose symmetry, one marked restriction is injective.
  Universal \(++\) factorization supplies a complete normalized boundary
  line from the lower-point Hodges amplitude; injectivity then determines the
  numerator up to overall normalization, and the Hodges numerator supplies
  a solution.}
  \label{fig:marked-boundary-reconstruction}
\end{figure}

\subsection{The eight-point Bose-compatible sector}
\label{sec:eight-point-test}

At eight points we determine the sign sector relevant to Bose symmetry.
An exact character calculation gives a \(144\)-dimensional alternating
component of \(V_{8,\Q}\).  Exact rank evaluation bounds its intersection
with the pair ideal by two dimensions.  The appendix constructs two
alternating numerators \(A\) and \(B\), proves their global pair-ideal
membership, and proves their independence.  The character table, rank data,
and polynomial identities are given in \cref{app:n8-witness}.

\begin{theorem}[Eight-point Bose-compatible intersection]
\label{thm:n8-sign-dimension}
The alternating numerator sector required by Bose symmetry is
\begin{equation}
  (W_{8,\Q})_{\sgn}
  =
  \Q A\oplus\Q B,
  \qquad
  \dim_\Q(W_{8,\Q})_{\sgn}=2.
  \label{eq:n8-sign-dimension}
\end{equation}
The same dimension holds over \(\C\).
\end{theorem}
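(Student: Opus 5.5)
The plan is to combine an exact upper bound on the dimension with two explicit alternating witnesses that reach it, following the same two-sided pattern as the proof of \cref{thm:decomposition}. Both sides reduce to the sign-isotypic part of $V_{8,\Q}$.

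\textbf{Step 1: build a basis of $V_{8,\Q}^{\mathrm{alt}}$.} Using the flag/tableau realization of \cref{sec:flag-method}, I would take the target degree $d(8)=(17,5;3,\dots,3)$. The standard-monomial basis is indexed by semistandard tableaux of the appropriate shape and content. I would first compute the multiplicity of the sign character in $V_{8,\Q}$ by an exact character calculation, evaluating the permutation action on standard monomials on each conjugacy class after straightening. This should give $144$. I would then produce an explicit basis of $V_{8,\Q}^{\mathrm{alt}}$ by applying the antisymmetrizer $\Alt=\frac{1}{8!}\sum_\sigma \sgn(\sigma)\sigma$ to enough tableau monomials. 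To make this feasible, I would antisymmetrize through coset representatives and straighten, rather than summing over all $40320$ permutations naively. Extraction stops once the exact rank reaches $144$.

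\textbf{Step 2: the upper bound.} By \cref{eq:one-pair-in-alternating-sector}, on the alternating sector it suffices to impose membership in the single ideal $J_{12}$. So I form the restriction matrix of $V_{8,\Q}^{\mathrm{alt}}\to (Q_8/J_{12})_{d(8)}$, computing normal forms in the quotient by setting the short column $\{1,2\}$ and its complementary tall column to zero and straightening. An exact rational or modular rank computation should give rank $142$, hence $\dim\le 2$.

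For the modular route, a rank of $142$ modulo a good prime bounds the rational rank from below. That alone bounds the kernel from above, so no lifting argument is needed for the upper bound. The main obstacle is the size of this step: straightening a $144$-column block into a large target normal-form space. I would first try reducing modulo $J_{12}$ before fully straightening, and restrict the target to the $\chi$-weight piece fixed by the stabilizer of $\{1,2\}$ to shrink the rows.

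\textbf{Step 3: the lower bound.} I would exhibit two explicit alternating polynomials $A$ and $B$, for example antisymmetrized products of Hodges-type minors, or $N_8^{\mathrm{Hodges}}$ together with a second antisymmetrized candidate read off from the kernel vectors of Step 2.

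1. Membership in $J_{12}$ is proved by an explicit identity $A=\ang12 r+\sqb12\widetilde r$ modulo $I_{8,\Q}$, verified by exact reduction. Alternation then transports this to every $J_{ij}$.
2. Independence of $A$ and $B$ is proved by evaluating both at two rational kinematic points, or by a nonzero $2\times2$ minor of tableau coefficients.

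Together with Step 2 this gives $(W_{8,\Q})_{\sgn}=\Q A\oplus\Q B$.

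\textbf{Step 4: passage to $\C$.} The restriction map is defined over $\Q$, and kernel dimension is invariant under field extension, so the same dimension holds over $\C$.
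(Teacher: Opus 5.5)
Your overall structure is the paper's. You get an exact upper bound on $(V_{8,\Q})_{\sgn}\cap J_{12}$. You exhibit two explicit alternants, with membership in $I_8+J_{12}$ certified by exact reduction and independence shown by a nonzero $2\times2$ evaluation minor. You then pass to $\C$ by flat base change.

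\textbf{The gap is in Step~2.} Setting the short column $\{1,2\}$ and the tall column $\{3,\dots,8\}$ to zero and straightening does not produce normal forms in $Q_8/J_{12}$. On a flag variety, standard tableaux that avoid those two columns can still be linearly dependent modulo $J_{12}$. This already happens in bidegree $(1,1)$: for $j\geq3$, momentum conservation gives
\[
  \ang12\sqb2j=-\sum_{k\notin\{1,2,j\}}\ang1k\sqb kj .
\]
Up to the Hodge sign, each $\ang1k\sqb kj$ is a standard tableau: the tall column $\{1,\dots,8\}\setminus\{k,j\}$, which begins with $1,2$, followed by the short column $\{1,k\}$. Neither special column appears. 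So a nonzero combination of surviving tableaux lies in $J_{12}$.

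Your matrix annihilates only the span of tableaux that contain a special column, and in general this is strictly smaller than the graded piece of $J_{12}$. Its kernel on $(V_{8,\Q})_{\sgn}$ is therefore contained in the true kernel, so its rank can only overestimate the true restriction rank. An observed rank of $142$ would give $\dim\geq2$, not $\dim\leq2$. That is the wrong inequality, and the upper bound is the only thing Step~2 is meant to supply.

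\textbf{How to repair it.} One option is to compute genuine normal forms modulo $I_8+(\ang12,\sqb12)$, for example from a Gr\"obner basis. The example shows the quotient can have fewer basis elements than there are tableaux avoiding the two columns.

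The other option is the paper's route. Evaluate the $144$ alternants at exact momentum-conserving integer points of $Z_{12}=V(\ang12,\sqb12)$. These evaluations kill $J_{12}$, so they factor through $Q_8/J_{12}$, and a nonzero modular minor of the evaluation matrix is a certified lower bound on the true restriction rank. The paper stacks $142$ rows on $Z_{12}$ with two rows at marked points of $K_{78}(2)$ into one $144\times144$ matrix whose determinant is nonzero modulo two primes. This shows at once that the $144$ alternants span the whole sign component and that the $Z_{12}$ rows have rank $142$.

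\textbf{Two smaller cautions for Step~1.}
\begin{enumerate}
\item Physical relabelling differs from the plain permutation action on Pl\"ucker tableaux by a factor $\det^{-5}$ from the five Hodge-complemented columns. Under the plain action, $144$ is the trivial multiplicity and the sign multiplicity is $296$.
\item The weight space has dimension $7{,}930{,}104$. That is why the paper computes the class traces from symmetric-function identities rather than by straightening permuted tableaux.
\end{enumerate}
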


\begin{proof}
The exact rank gives the upper bound two.  The two independent alternating
polynomials \(A\) and \(B\) attain this bound.  Their membership in one pair
ideal is transported to all pairs by permutation symmetry.  Scalar
extension gives the statement over \(\C\).
\end{proof}

The standard Hodges numerator is a nonzero alternating member of the same
intersection \cite{KoeflerEtAl2024}.  In the appendix normalization, exact
evaluation fixes its coordinates:
\begin{equation}
  504\,N_8^{\mathrm{Hodges}}+6A+7B=0,
  \qquad
  N_8^{\mathrm{Hodges}}=-\frac{A}{84}-\frac{B}{72}.
  \label{eq:n8-hodges-line}
\end{equation}
Thus the Hodges numerator spans one line in the plane.  The next section
uses physical conditions to select this line.

\begin{proposition}[Injectivity on the eight-point alternating sector]
\label{prop:n8-single-boundary-sign}
At \(\alpha=2\),
\begin{equation}
  \left.\pi_{78,2}\right|_{(W_{8,\Q})_{\sgn}}
  \quad\text{is injective}.
  \label{eq:n8-single-boundary-sign}
\end{equation}
The same statement holds over \(\Q(\alpha)\) for generic \(\alpha\).
\end{proposition}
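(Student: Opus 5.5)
The claim is a direct specialization of \cref{cor:single-boundary-alternating-general}, so I would not attempt a new calculation. Take \(n=8\), the oriented pair \((a,b)=(7,8)\), and \(c=2\). That corollary says \(\rho_{78,2}=\pi_{78,2}|_{V_{8,\Q}^{\mathrm{alt}}}\) is injective on the whole alternating ansatz. By \cref{eq:general-alternating-spaces}, \((W_{8,\Q})_{\sgn}=W_{8,\Q}^{\mathrm{alt}}\subseteq V_{8,\Q}^{\mathrm{alt}}\). Restricting an injective map to a subspace keeps it injective, which gives \cref{eq:n8-single-boundary-sign}.

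The only thing to check is that the hypotheses of \cref{thm:marked-boundary-faithfulness} hold at \(n=8\):
\begin{itemize}
\item \(n\geq5\) holds.
\item \(c=2\in\Q^\times\) holds.
\item The target angle degree is \(A_8=(64-24-6)/2=17\), and \(17<\binom82=28\), consistent with \cref{eq:angle-deficit-n-plus-three}.
\end{itemize}
The orientation in \cref{eq:K67-formal-main} imposes \(\lambda_7=\alpha\lambda_8\). This matches \cref{eq:general-marked-ideal} with \((a,b)=(7,8)\) up to the sign convention \(\ang{i}{a}\) versus \(\ang{a}{i}\), which generates the same ideal.

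For the \(\Q(\alpha)\) statement, I would argue in either of two ways.
\begin{itemize}
\item Transfer by clearing denominators. Suppose \(xA+yB\in K_{78}(\alpha)\) with \(x,y\in\Q(\alpha)\). Clear denominators in \(x\), \(y\) and in the finite membership certificate. Specializing at all but finitely many \(c\in\Q^\times\) gives a kernel vector of \(\rho_{78,c}\), which must vanish. So \(x(c)=y(c)=0\) at infinitely many \(c\), and hence \(x=y=0\).
\item Direct argument. Run the proof of \cref{cor:single-boundary-alternating-general} over \(\Q(\alpha)\) using the \(\Q(\alpha)\) clause of \cref{thm:marked-boundary-faithfulness}. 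Relabelling sends \(K_{78}(\alpha)\) to \(K_{ij}(\alpha)\), so this works directly.
\end{itemize}

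As an independent check, matching the exact style of the appendices, I would compute \(\pi_{78,2}(A)\) and \(\pi_{78,2}(B)\) in tableau normal form modulo \(K_{78}(2)\), and exhibit a nonvanishing \(2\times2\) minor of their coordinate matrix. The main obstacle is only computational: getting a normal form modulo \(K_{78}(2)\) in the flag basis. The natural route is to substitute \(\lambda_7=2\lambda_8\), reducing to the seven-label angle Plücker ring with \(\lambda_8\) doubled, and keep the square brackets unchanged. One could also evaluate at random rational kinematics on the locus and show that the two images are not proportional. The structural proof above does not depend on this computation.
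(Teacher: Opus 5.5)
Your argument is correct, but it follows a different route from the paper's proof of this proposition.

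\textbf{The paper's route.} The paper proves the proposition with a finite exact certificate. It builds a $144\times144$ stacked matrix whose columns are the physical alternants. Its rows are evaluations at $142$ integral points of $Z_{12}$ and at the two marked points $y_0,y_1$ with $\lambda_7=2\lambda_8$. This matrix has nonzero determinant modulo both primes, which gives $(V_{8,\Q})_{\sgn}\cap J_{12}\cap K_{78}(2)=0$. Since $(W_{8,\Q})_{\sgn}\subset J_{12}$, the kernel vanishes. For generic $\alpha$, the paper only cites a one-parameter calculation. By analogy with the seven-point case, that would make the minor a polynomial in $\alpha$ that is nonzero at $\alpha=2$.

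\textbf{Your route.} You specialize \cref{cor:single-boundary-alternating-general}, which the paper establishes earlier and independently of this proposition, so there is no circularity. Your argument is shorter and proves more. It gives injectivity on the whole $144$-dimensional $V_{8,\Q}^{\mathrm{alt}}$, with no $J_{12}$ condition, and for every $c\in\Q^\times$.

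\textbf{Trade-offs.} That uniformity in $c$ is what makes your downward specialization from $\Q(\alpha)$ work; the paper argues upward from the single value $\alpha=2$. The cost is that everything rests on the divisor argument of \cref{lem:fine-weight-marked-core}: density of the marked loci in $D_{ab}$, and $\ang{a}{b}$ cutting out the integral divisor $D_{ab}$ with multiplicity one. The paper's certificate avoids that geometric input, so it is an independent check of the general theorem at $n=8$. It also doubles as the rank bound $\dim(W_{8,\Q})_{\sgn}\le2$.

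\textbf{Your proposed check.} Your ``independent check'' is already in the paper, and it needs no normal forms modulo $K_{78}(2)$. The points $y_0,y_1$ lie on $V(K_{78}(2))$. So the nonzero $2\times2$ determinant \cref{eq:n8-short-determinant} of $A,B$ at those points, together with \cref{thm:n8-sign-dimension}, already proves the $c=2$ statement.
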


\begin{proof}
The exact stacked rank calculation gives
\((V_{8,\Q})_{\sgn}\cap J_{12}\cap K_{78}(2)=0\), as shown in
\cref{eq:n8-two-locus-intersection}.  Since
\((W_{8,\Q})_{\sgn}\subset J_{12}\), the restriction has zero kernel.  The
one-parameter calculation in \cref{app:n8-witness} proves the generic
statement.
\end{proof}

The marked evaluations provide coordinates on
\((W_{8,\Q})_{\sgn}\).  A factorization, BCFW, or soft condition then
selects the physical line.  The next section derives this selector from
the physical boundary data.

\section{Physical completion at eight points}
\label{sec:n8-physical-completion}

The pairwise-zero conditions and Bose symmetry leave the two-dimensional
alternating space found in \cref{thm:n8-sign-dimension}:
\begin{equation}
  U_8:=(W_{8,\Q})_{\sgn}=\Q A\oplus\Q B,
  \qquad
  \Q N_8^{\mathrm{Hodges}}=\Q(6A+7B).
  \label{eq:physical-completion-plane}
\end{equation}
One ratio between \(A\) and \(B\) remains to be fixed.  We determine it
from the large-\(z\) behavior, collinear factorization, and the leading
soft theorem.

\subsection{The reduced amplitude and a same-helicity shift}
\label{sec:n8-bcfw-conventions}

For \(N\in U_8\), define the reduced amplitude candidate
\begin{equation}
  \mathcal A_8[N]:=\frac{N}{D_8},
  \qquad
  D_8=\prod_{1\leq i<j\leq8}\ang{i}{j}.
  \label{eq:n8-candidate-reduced-amplitude}
\end{equation}
The numerator and denominator are both alternating, so
\(\mathcal A_8[N]\) is Bose symmetric.  Its component with negative
helicities on legs \(1\) and \(2\) is
\begin{equation}
  \mathcal M_8[N](1^-,2^-,3^+,\ldots,8^+)
  =\ang{1}{2}^{8}\mathcal A_8[N].
  \label{eq:n8-candidate-component}
\end{equation}

We use the \([1,2\rangle\) deformation
\begin{equation}
  |\widehat 1]=|1]+z|2],
  \qquad
  |\widehat 2\rangle=|2\rangle-z|1\rangle ,
  \label{eq:n8-bcfw-shift}
\end{equation}
with all other spinors fixed.  It preserves momentum conservation and
leaves \(\ang{1}{2}\) unchanged.  Hence
\begin{equation}
  \mathcal M_8[N](z)=O(z^{-p})
  \quad\Longleftrightarrow\quad
  \mathcal A_8[N](z)=O(z^{-p}).
  \label{eq:n8-reduced-component-scaling}
\end{equation}
At generic kinematics, six denominator brackets depend on \(z\), and
\begin{equation}
  \deg_zD_8(z)=6
  \label{eq:n8-denominator-z-degree}
\end{equation}
with a nonzero leading coefficient.  Einstein-gravity tree amplitudes
obey \(O(z^{-2})\) behavior under this shift
\cite{ArkaniHamedKaplan2008,McGadyRodina2015,NguyenEtAl2010}.

\subsection{Selection by BCFW large-\texorpdfstring{\(z\)}{z} behavior}
\label{sec:n8-bcfw-completion}

Write a general numerator as
\begin{equation}
  N=c_AA+c_BB .
  \label{eq:n8-general-sign-candidate}
\end{equation}
Exact coefficient extraction at the rational kinematic point given in
\cref{app:n8pilot-bcfw} yields
\begin{equation}
  [z^6]N\big|_\star
  =
  \kappa_\star(7c_A-6c_B),
  \qquad
  \kappa_\star\in\Q^\times .
  \label{eq:n8-leading-coefficient-ratio}
\end{equation}
This defines the linear functional
\begin{equation}
  \mathcal L_\infty(c_AA+c_BB):=7c_A-6c_B .
  \label{eq:n8-leading-functional}
\end{equation}

\begin{theorem}[Eight-point uniqueness from BCFW scaling]
\label{thm:n8-bcfw-completion}
Within \(U_8\), the Einstein-gravity large-\(z\) condition selects the
Hodges line:
\begin{equation}
 \begin{split}
  U_8^{\mathrm{BCFW}}
  &:=
  \left\{
    N\in U_8:
    \mathcal A_8[N](z)=O(z^{-2})
    \text{ at generic kinematics}
  \right\} \\
  &=\ker\mathcal L_\infty
   =\Q(6A+7B)
   =\Q N_8^{\mathrm{Hodges}} .
 \end{split}
 \label{eq:n8-bcfw-completion}
\end{equation}
Thus the \(O(z^{-2})\) behavior fixes the eight-point amplitude up to
overall normalization.
\end{theorem}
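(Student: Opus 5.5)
The plan is to reduce the large-\(z\) condition to the vanishing of one polynomial coefficient, and then to show that the single-point extraction in \cref{eq:n8-leading-coefficient-ratio} captures that coefficient exactly up to a nonzero scalar.

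\emph{Step 1 (reduction to the top coefficient).} Under the \([1,2\rangle\) shift, \(N(z)\) is a polynomial in \(z\). Each of its terms has angle degree \(A_8=19\), and the shift acts linearly on \(|2\rangle\) and \(|1]\). Its \(z\)-degree is therefore bounded by the number of spinors \(|2\rangle\) and \(|1]\) it contains. The little-group weight \(n-5=3\) on legs \(1\) and \(2\) bounds this by roughly \(3+3\). I would first verify this bound, \(\deg_z N(z)\le 6\), for every element of \(V_{8,\Q}\) with the multidegree in \cref{eq:target-degree-main}. Combined with \cref{eq:n8-denominator-z-degree}, \(\mathcal A_8[N](z)\) is then \(O(1)\), and more precisely
\[
\mathcal A_8[N](z)=\frac{[z^6]N}{[z^6]D_8}+\frac{[z^5]N\,[z^6]D_8-[z^6]N\,[z^5]D_8}{([z^6]D_8)^2}\,z^{-1}+O(z^{-2}).
\]
So \(O(z^{-2})\) requires \([z^6]N=0\) and \([z^5]N=0\) as functions of the unshifted kinematics.

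\emph{Step 2 (necessity).} If \(N\) satisfies the condition at generic kinematics, then it does so at the rational point \(\star\) of \cref{app:n8pilot-bcfw}, or at points arbitrarily near it; in either case \([z^6]N|_\star=0\). By \cref{eq:n8-leading-coefficient-ratio} with \(\kappa_\star\neq0\), this gives \(7c_A-6c_B=0\), so \(N\in\Q(6A+7B)\). This proves \(U_8^{\mathrm{BCFW}}\subseteq\ker\mathcal L_\infty=\Q(6A+7B)\). By \cref{eq:n8-hodges-line}, that line equals \(\Q N_8^{\mathrm{Hodges}}\).

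\emph{Step 3 (sufficiency).} For the reverse inclusion I would avoid any further computation. \(N_8^{\mathrm{Hodges}}/D_8\) is the reduced Einstein-gravity amplitude, and the cited large-\(z\) theorem \cite{ArkaniHamedKaplan2008,NguyenEtAl2010} gives \(O(z^{-2})\) for it directly. Hence \(\Q N_8^{\mathrm{Hodges}}\subseteq U_8^{\mathrm{BCFW}}\), and the set equalities in \cref{eq:n8-bcfw-completion} follow. In particular, the subleading condition \([z^5]N=0\) is automatic on this line and never needs to be checked by hand.

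\emph{Main obstacle.} The real work is the exact extraction in \cref{eq:n8-leading-coefficient-ratio}. This requires evaluating \(A\) and \(B\) as explicit polynomials at a momentum-conserving rational point and reading off the \(z^6\) coefficient after the shift. It must be done on representatives that are consistent modulo \(I_{8}\), so one must evaluate numerically on actual spinors rather than reduce formal bracket expressions. One must also confirm \(\kappa_\star\neq0\) and \([z^6]D_8|_\star\neq0\). A lesser subtlety is the degree bound in Step 1. If it failed, higher coefficients would impose additional linear conditions, but Step 3 shows those conditions vanish on the Hodges line, so the conclusion would still hold.
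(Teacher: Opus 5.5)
Your argument is the paper's own: because \(\deg_z D_8=6\), an \(O(z^{-2})\) ratio forces \([z^6]N=0\) with no a priori bound on \(\deg_z N\); the single-point extraction \cref{eq:n8-leading-coefficient-ratio} with \(\kappa_\star\neq0\) then gives \(7c_A-6c_B=0\), and the cited Hodges falloff supplies the converse. Drop Step 1, since its bound \(\deg_z N\le6\) is false on \(V_{8,\Q}\): the little-group weight fixes only the angle-minus-square incidence count at each leg, so for example \(\sqb{1}{3}^5\ang{2}{3}^3\ang{1}{3}^3\ang{1}{4}\ang{1}{5}\ang{1}{6}\ang{1}{7}\ang{1}{8}\ang{3}{4}\ang{3}{5}\ang{4}{6}\ang{5}{7}\ang{6}{8}\ang{7}{8}\in V_{8,\Q}\) has \(z\)-degree \(8\) (also \(A_8=17\), not \(19\)), and, as your own fallback remark anticipates, necessity needs only \([z^6]N=0\).
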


\begin{proof}
The denominator has degree six.  Therefore an \(O(z^{-2})\) candidate has
numerator degree at most four, and its \(z^6\) coefficient vanishes.
Equation~\eqref{eq:n8-leading-coefficient-ratio} then gives
\begin{equation}
  7c_A-6c_B=0.
  \label{eq:n8-leading-cancellation}
\end{equation}
Every admissible numerator is therefore proportional to \(6A+7B\).
Conversely, this combination is the Hodges numerator and has the standard
gravity falloff
\cite{ArkaniHamedKaplan2008,McGadyRodina2015,NguyenEtAl2010}.
\end{proof}

\subsection{Selection by collinear and soft limits}
\label{sec:n8-finite-physical-readouts}

Collinear factorization and the soft theorem give two further forms of the
same constraint on \(U_8\).

\paragraph{Normalized collinear factorization.}
Consider the marked holomorphic \(7\parallel8\) boundary
\[
  |7\rangle=2|8\rangle,
  \qquad
  |P\rangle=|8\rangle,
  \qquad
  |P]=2|7]+|8].
\]
The pair condition fixes the zero where both
\(\ang{7}{8}\) and \(\sqb{7}{8}\) vanish.  Collinear factorization fixes
the residue on the branch \(\ang{7}{8}=0\): it is the universal \(++\)
gravity splitting function times the seven-point MHV amplitude
\cite{BernEtAl1998}.  Exact evaluation of \(A\) and \(B\) on this boundary
gives
\begin{equation}
  7c_A-6c_B=0.
  \label{eq:n8-collinear-selection}
\end{equation}
Consequently the candidates with the normalized collinear residue form
\[
  U_8^{\mathrm{coll}}=\Q(6A+7B).
\]
The normalization and exact boundary values are recorded in
\cref{app:n8pilot-collinear}.

\paragraph{The leading soft coefficient.}
Let leg \(8\) become holomorphically soft,
\(|8\rangle\mapsto\epsilon|8\rangle\), with the standard compensating
shifts that preserve momentum conservation.  The positive-helicity soft
theorem \cite{CachazoStrominger2014} gives
\begin{equation}
  \mathcal A_8(\epsilon)
  =
  \frac{\sigma_8}{\epsilon^3}S^{(0)}\mathcal A_7
  +O(\epsilon^{-2}),
  \qquad \sigma_8=-1,
  \label{eq:n8-soft-expansion}
\end{equation}
in the signed-minor convention used here.
Along this deformation,
\(D_8=O(\epsilon^7)\) and \(A,B=O(\epsilon^4)\).  Thus every candidate in
\(U_8\) is no more singular than \(O(\epsilon^{-3})\), so the pole order alone
does not select a line.  Matching the normalized leading coefficient to
\(\sigma_8S^{(0)}\mathcal A_7\) gives
\begin{equation}
  7c_A-6c_B=0.
  \label{eq:n8-soft-selection}
\end{equation}
Thus
\[
  U_8^{\mathrm{soft}}=\Q(6A+7B).
\]
The momentum-conserving paths and exact coefficients are given in
\cref{app:n8pilot-soft}.

The three physical conditions select the same line:
\begin{equation}
  U_8^{\mathrm{BCFW}}
  =
  U_8^{\mathrm{coll}}
  =
  U_8^{\mathrm{soft}}
  =
  \Q(6A+7B).
  \label{eq:n8-three-physical-readouts}
\end{equation}
They probe, respectively, the large-\(z\) behavior, a finite collinear
residue, and a soft Laurent coefficient.  In \(U_8\), each fixes the
basis-independent Hodges line.

\section{Conclusion}
\label{sec:conclusion}

Using a flag-variety standard-monomial basis and permutation symmetry, we
computed the seven-point pair-ideal intersection exactly:
\[
  W_{7,\Q}\cong S^{(2,1^5)}\oplus S^{(1^7)}.
\]
The local numerator-zero conditions therefore admit a six-dimensional hook
in addition to the Hodges sign line.  Because the common denominator is
alternating, Bose symmetry retains only the sign component and determines
the seven-point amplitude up to normalization.

The hook remains six-dimensional under the special-kinematics restriction
proposed for induction.  At general multiplicity, the collection of marked
collinear restrictions is injective on the fixed-degree ansatz, and one such
restriction is injective on its alternating sector.  Universal \(++\)
factorization and the five-point seed then determine the numerator up to
normalization within the fixed-common-denominator ansatz.

At eight points the pair-ideal conditions and Bose symmetry leave
\[
  (W_{8,\Q})_{\sgn}=\Q A\oplus\Q B.
\]
The Hodges line is \(\Q(6A+7B)\).  Under a two-negative-line BCFW shift, the
gravity behavior \(O(z^{-2})\) selects this line.  Normalized \(++\)
collinear factorization and the leading positive-helicity soft coefficient
select the same line.

The distinction is therefore between algebraic and physical uniqueness.
Numerator zeros can leave additional algebraic directions.  Bose symmetry
removes them at seven points but leaves two directions at eight points,
where one normalized BCFW, collinear, or soft condition selects the
amplitude.

\appendix
\crefalias{section}{appendix}

\section{Spinor-helicity and flag-variety conventions}
\label{app:flag-conventions}

\subsection{Integral model and base change}

All polynomial identities are constructed over the integral quotient
\(R_{7,\Z}/I_{7,\Z}\).  The degree-\(d(7)\) pieces of the source and target
of \(\mathsf C_7\) are finite-dimensional after tensoring with \(\Q\) or
\(\C\).

\begin{lemma}[Base change]
\label{lem:base-change}
There is a natural isomorphism
\begin{equation}
  W_{7,\C}\cong W_{7,\Q}\otimes_\Q\C.
\end{equation}
\end{lemma}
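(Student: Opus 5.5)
The plan is to realize \(W_{7,k}\) as the kernel of one linear map defined over \(\Q\). Once that is in place, the standard fact that kernels commute with flat base change finishes the argument. First I would note that \(R_{7,\Z}\) and \(I_{7,\Z}\) are generated by integral (hence rational) data: the Plücker relations and momentum conservation have integer coefficients. Therefore \(Q_{7,k}=Q_{7,\Q}\otimes_\Q k\) for \(k=\C\), since tensoring the presentation \(R\to R\to Q\to0\) with the flat extension \(\C/\Q\) preserves the quotient. The same holds for each pair quotient \(Q_{7,k}/J_{ij,k}\), because \(J_{ij,k}\) is generated by the rational elements \(\ang{i}{j},\sqb{i}{j}\). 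Both constructions respect the multigrading, so the degree-\(d(7)\) pieces satisfy \((Q_{7,\C})_{d(7)}=(Q_{7,\Q})_{d(7)}\otimes_\Q\C\), and likewise for each target summand. All of these spaces are finite-dimensional; for instance, the standard-monomial basis of \(V_7\) consists of \(65{,}870\) tableaux with integer structure constants.

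Next I would check that the restriction map \(\mathsf C_7\) of \cref{eq:constraint-map-main} over \(\C\) is the scalar extension of the map over \(\Q\). This is clear, because \(\mathsf C_7\) is a direct sum of the natural quotient maps \(Q_7\to Q_7/J_{ij}\), and these commute with \(-\otimes_\Q\C\). In the tableau bases of \cref{sec:flag-method}, \(\mathsf C_{7,\C}\) is represented by the same integer matrix as \(\mathsf C_{7,\Q}\). The same integral matrix describes the map over any field of characteristic zero, provided the standard-monomial basis is a basis there too; this is the straightening law over \(\Z\) cited from \cite{BrionLakshmibai2003}.

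Finally, \(\C\) is flat over \(\Q\), so \(-\otimes_\Q\C\) is exact and preserves kernels:
\[
W_{7,\C}=\ker\mathsf C_{7,\C}=\ker(\mathsf C_{7,\Q}\otimes\C)\cong(\ker\mathsf C_{7,\Q})\otimes_\Q\C=W_{7,\Q}\otimes_\Q\C .
\]
Equivalently, the rank of a rational matrix does not change under field extension. The isomorphism is natural: it is induced by the inclusion \(W_{7,\Q}\otimes\C\to V_{7,\C}\). It is also \(S_7\)-equivariant, which is what transfers \cref{thm:decomposition} from \(\Q\) to \(\C\).

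I expect the main obstacle to be the first step, and specifically ensuring that the ideal intersection commutes with base change. Taking \(J_{ij,\C}=J_{ij,\Q}\otimes\C\) for each ideal separately is easy. Intersections of subspaces also commute with flat extension, but only when everything is computed inside a fixed finite-dimensional graded piece. So I would make the degree restriction first and phrase the whole argument through the kernel of \(\mathsf C_7\) on \((Q_7)_{d(7)}\), not through ideal intersections in the full ring.
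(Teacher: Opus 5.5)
Your proposal is correct and follows the paper's argument: \(W_{7}\) is the kernel of the finite direct-sum restriction map \(\mathsf C_7\) on the degree-\(d(7)\) piece, all of whose data descend from the integral model, and flatness of \(\Q\to\C\) preserves that kernel. Your caution about intersections is not needed, since finite intersections of submodules already commute with flat base change, but phrasing everything through \(\ker\mathsf C_7\) in fixed degree is exactly what the paper does.
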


\begin{proof}
The intersection defining \(W_7\) is the kernel of the finite direct-sum map
\(\mathsf C_7\).  All entries descend from the integral model.  Tensoring the
kernel sequence with the flat extension \(\Q\to\C\) preserves exactness and
produces the corresponding complex constraint map.
\end{proof}

\subsection{The common denominator on a marked collinear branch}
\label{app:general-collinear-denominator}

We spell out the denominator conversion used in the all-multiplicity
physical uniqueness theorem.  It is a kinematic identity, separate from the
marked-boundary injectivity proof.  For notational simplicity take
\(a=n-1\), \(b=n\), and
\(I=\{1,\ldots,n-2\}\).  On
\[
  |a\rangle=c|b\rangle,
  \qquad
  |P\rangle=|b\rangle,
  \qquad
  |P]=c|a]+|b],
  \qquad c\in\Q^\times,
\]
write
\[
  D_I=\prod_{\substack{i<j\\i,j\in I}}\ang{i}{j},
  \qquad
  D_{n-1}^{(P)}
  =
  D_I\prod_{i\in I}\ang{i}{P}.
\]

\begin{lemma}[Common-denominator collinear conversion]
\label{lem:general-collinear-denominator}
The common denominator obeys
\begin{equation}
  \left.
  \frac{D_n}{\ang{a}{b}}
  \right|_{K_{ab}(c)}
  =
  c^{\,n-2}\,
  D_{n-1}^{(P)}
  \prod_{i\in I}\ang{i}{b}.
  \label{eq:general-collinear-denominator}
\end{equation}
Consequently, if the generic \(++\) collinear residue is written as
\begin{equation}
  \lim_{\ang{a}{b}\to0}
  \ang{a}{b}\,\mathcal A_n
  =
  s_{++}(c)\,\sqb{a}{b}\,
  \mathcal A_{n-1}(P),
  \qquad s_{++}(c)\neq0,
  \label{eq:general-gravity-splitting}
\end{equation}
then its fixed-common-denominator form is
\begin{equation}
  \pi_{ab,c}(N_n)
  =
  c^{\,n-2}s_{++}(c)\,
  \sqb{a}{b}
  \left(\prod_{i\in I}\ang{i}{b}\right)
  \iota_{ab,c}(N_{n-1}).
  \label{eq:general-raw-collinear-identity}
\end{equation}
\end{lemma}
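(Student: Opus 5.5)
The identity is a direct bracket computation, so I will factor \(D_n/\ang{a}{b}\) according to how many of its two labels lie in \(I\), restrict each factor to the marked branch, and then divide the collinear residue by the common denominator. With \(a=n-1\) and \(b=n\), the ordered product \(D_n=\prod_{i<j}\ang{i}{j}\) splits as
\[
  D_n=D_I\,\prod_{i\in I}\ang{i}{a}\,\prod_{i\in I}\ang{i}{b}\,\ang{a}{b}.
\]
Here every \(i\in I\) precedes both \(a\) and \(b\), so no reordering signs appear. Dividing by \(\ang{a}{b}\) leaves three factors: \(D_I\), \(\prod_{i\in I}\ang{i}{a}\), and \(\prod_{i\in I}\ang{i}{b}\).

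On \(K_{ab}(c)\) the generators give \(\ang{i}{a}=c\ang{i}{b}\) for every \(i\in I\). The middle factor therefore becomes \(c^{n-2}\prod_{i\in I}\ang{i}{b}\), since \(|I|=n-2\). With \(|P\rangle=|b\rangle\) we have \(\ang{i}{P}=\ang{i}{b}\), so \(D_I\prod_{i\in I}\ang{i}{a}\) restricts to \(c^{n-2}D_{n-1}^{(P)}\). The remaining factor \(\prod_{i\in I}\ang{i}{b}\) passes through unchanged, which gives \cref{eq:general-collinear-denominator}.

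One point needs care: \(\ang{a}{b}\) lies in \(K_{ab}(c)\), so "restricting \(D_n/\ang{a}{b}\)" must mean taking the polynomial quotient \(D_n/\ang{a}{b}\) in the bracket ring first and then applying \(\pi_{ab,c}\). The factorization above makes this quotient a monomial in brackets, so there is no ambiguity. By \cref{lem:fine-weight-marked-core}, any other representative differs only by multiples of \(\ang{a}{b}\) in the relevant weight space.

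For the second statement, I write \(N_n=D_n\mathcal A_n\). Then \(\ang{a}{b}\mathcal A_n=N_n\,(D_n/\ang{a}{b})^{-1}\). Taking \(\ang{a}{b}\to0\) along the branch \(|a\rangle=c|b\rangle\) and applying \cref{eq:general-gravity-splitting} gives
\[
  \pi_{ab,c}(N_n)=s_{++}(c)\,\sqb{a}{b}\,\mathcal A_{n-1}(P)\cdot c^{n-2}D_{n-1}^{(P)}\prod_{i\in I}\ang{i}{b}.
\]
Since \(\mathcal A_{n-1}(P)D_{n-1}^{(P)}=\iota_{ab,c}(N_{n-1})\) by definition of the lower-point common-denominator numerator in the relabelled kinematics, this is exactly \cref{eq:general-raw-collinear-identity}.

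I expect the main obstacle to be justifying that the generic residue limit agrees with the quotient-ring image \(\pi_{ab,c}(N_n)\). Both sides are polynomials in the bracket ring, so I will argue that two elements of the fine-weight space agreeing as functions on the dense marked locus coincide modulo \(K_{ab}(c)\); this is where reducedness of the marked ideal enters, as in the proof of \cref{lem:fine-weight-marked-core}. A secondary point is that the fused kinematics \(|P]=c|a]+|b]\) must satisfy lower-point momentum conservation, so that \(\iota_{ab,c}\) is well defined on \(Q_{n-1}\). This holds because \(|a\rangle|a]+|b\rangle|b]=|b\rangle(c|a]+|b])\).
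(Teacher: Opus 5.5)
Your argument is correct and is essentially the paper's proof. You factor \(D_n/\ang{a}{b}=D_I\prod_{i\in I}\ang{i}{a}\ang{i}{b}\), use \(\ang{i}{a}=c\ang{i}{b}\) and \(\ang{i}{P}=\ang{i}{b}\) on the marked branch, and then substitute \(\mathcal A_n=N_n/D_n\) and \(\mathcal A_{n-1}=N_{n-1}/D_{n-1}^{(P)}\) into the splitting law.

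The paper treats the result only as a kinematic identity on the branch and does not attempt the function-to-quotient-ring upgrade that you flag as the main obstacle. If you pursue that upgrade, there is one missing step. The factor \(\iota_{ab,c}(N_{n-1})\) contains \(c\sqb{i}{a}+\sqb{i}{b}\), so the right-hand side is not fine-homogeneous. You must first rewrite it in the fine weight of \(N_n\), trading \(\ang{i}{b}\) for \(c^{-1}\ang{i}{a}\) modulo \(K_{ab}(c)\), before the argument of \cref{lem:fine-weight-marked-core} applies.
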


\begin{proof}
Removing the vanishing factor \(\ang{a}{b}\) gives
\[
  \frac{D_n}{\ang{a}{b}}
  =
  D_I\prod_{i\in I}\ang{i}{a}\ang{i}{b}.
\]
On the marked branch,
\(\ang{i}{a}=c\ang{i}{b}\) and
\(\ang{i}{P}=\ang{i}{b}\), which proves
\cref{eq:general-collinear-denominator}.  Substituting
\(\mathcal A_n=N_n/D_n\) and
\(\mathcal A_{n-1}=N_{n-1}/D_{n-1}^{(P)}\) into
\cref{eq:general-gravity-splitting} then gives
\cref{eq:general-raw-collinear-identity}.
\end{proof}

Equation~\eqref{eq:general-gravity-splitting} is the standard universal
gravity splitting law \cite{BernEtAl1998}; all helicity, sign, and
momentum-fraction conventions are contained in the nonzero scalar
\(s_{++}(c)\).  The main text uses the resulting unnormalized boundary value
and never divides a general candidate by the polynomial multiplying
\(N_{n-1}\).

\subsection{Hodge convention and tableaux}

For \(i<j\), we use the Hodge-complement convention
\begin{equation}
  \sqb{i}{j}
  \longleftrightarrow
  (-1)^{i+j}p_{\{1,\ldots,n\}\setminus\{i,j\}},
  \label{eq:hodge-convention}
\end{equation}
where the complementary indices are written in increasing order.  An angle
bracket gives a column of height two, while a square bracket gives a column
of height \(n-2\).  The \(B_n\) tall columns precede the \(A_n\) short
columns.  Hence the first two rows have length \(A_n+B_n\), the next
\(n-4\) rows have length \(B_n\), and Hodge complementation shifts the
content of every label from \(r_n\) to \(r_n+B_n=2n-8\).  Explicitly,
\begin{equation}
  \begin{aligned}
    A_n&=\frac{n^2-3n-6}{2},
    &B_n&=n-3,\\
    \Lambda_n
      &=(A_n+B_n,A_n+B_n,
        \underbrace{B_n,\ldots,B_n}_{n-4}),\\
    \mu_n&=(2n-8)^n .
  \end{aligned}
  \label{eq:shape-content}
\end{equation}
At seven points,
\(\Lambda_7=(15,15,4,4,4)\) and
\(\mu_7=(6,6,6,6,6,6,6)\).

The standard monomials are indexed by semistandard Young tableaux of shape
\(\Lambda_7\) and content \(\mu_7\): entries increase strictly down columns,
weakly along rows, and the columns obey the flag standard-monomial order.
We order the tableaux row-major lexicographically and denote the zero-based
basis by
\begin{equation}
  \mathcal B_7=(m_0,m_1,\ldots,m_{65869}).
  \label{eq:canonical-basis}
\end{equation}
This convention fixes the identifiers used in
\cref{app:hook,app:alternating}.

Two independent exact calculations give the number of tableaux.  The first
is a dynamic program on Gelfand--Tsetlin patterns; the second extracts the
coefficient of \(x_1^6\cdots x_7^6\) in
\(s_{(15,15,4,4,4)}(x_1,\ldots,x_7)\).  Thus
\begin{equation}
  \dim_\Q V_{7,\Q}
  =
  K_{(15,15,4,4,4),(6^7)}
  =
  65{,}870.
  \label{eq:ambient-dimension-main}
\end{equation}
The same conventions give \(\dim V_5=16\) and \(\dim V_6=780\), agreeing
with the low-point calculations of Ref.~\cite{KoeflerEtAl2024}.

\begin{remark}[Comparison with the earlier size estimate]
Ref.~\cite{KoeflerEtAl2024} gave a rough estimate of order \(10^7\) for the
seven-point calculation.  The flag-weight realization replaces that
estimate, for the precise graded quotient component used here, by the exact
Kostka multiplicity \(65{,}870\).  The earlier estimate was not used as a
mathematical input.
\end{remark}

\section{The complete isotypic rank calculation}
\label{app:block-ranks}

\subsection{Young projectors and ambient multiplicities}

For each partition \(\lambda\vdash7\), choose the row-major standard tableau
\(T_\lambda\).  With the left action
\((\sigma f)(x)=f(\sigma^{-1}x)\), define
\begin{equation}
  y_\lambda
  =
  \left(\sum_{r\in R(T_\lambda)}r\right)
  \left(\sum_{c\in C(T_\lambda)}\sgn(c)c\right)
  \in\Z[S_7].
  \label{eq:young-symmetrizer}
\end{equation}
The Young-symmetrizer theorem gives
\begin{equation}
  y_\lambda^2=h_\lambda y_\lambda,
  \label{eq:young-square}
\end{equation}
where \(h_\lambda\) is the hook product, and
\(e_\lambda=y_\lambda/h_\lambda\) is a primitive idempotent over \(\Q\)
\cite{Fulton1997}.

Let \(\widehat m_\lambda\) denote the candidate multiplicity displayed in
\cref{tab:block-calculation}.  For every \(\lambda\), exact evaluation of
projected standard monomials produces a square integer minor of order
\(\widehat m_\lambda\) that is nonzero modulo each of the two primes used
below.  Hence
\[
  \dim_\Q(e_\lambda V_{7,\Q})\geq\widehat m_\lambda.
\]
The candidate lower bounds satisfy
\begin{equation}
  \sum_{\lambda\vdash7}
  \dim S^\lambda\,\widehat m_\lambda=65{,}870.
  \label{eq:weighted-ambient}
\end{equation}
On the other hand, characteristic-zero semisimplicity and
\cref{eq:ambient-dimension-main} give
\[
  65{,}870
  =
  \sum_{\lambda\vdash7}
  \dim S^\lambda\,\dim_\Q(e_\lambda V_{7,\Q}).
\]
Every difference
\(\dim_\Q(e_\lambda V_{7,\Q})-\widehat m_\lambda\) is nonnegative, and their
positively weighted sum is zero.  Equality therefore holds for every
partition.  We may consequently set
\begin{equation}
  m_\lambda
  :=
  \dim_\Q(e_\lambda V_{7,\Q})
  =
  \widehat m_\lambda.
  \label{eq:ambient-multiplicity}
\end{equation}

\subsection{Pair-locus restriction ranks}

For each \(\lambda\), projected columns \(y_\lambda m_a\) are evaluated at
\(32\) kinematic points with integral coordinates on each of the \(21\) loci
\[
  Z_{ij}=\{\ang{i}{j}=\sqb{i}{j}=0\}.
\]
This gives
\begin{equation}
  E_\lambda:y_\lambda V_{7,\Q}\longrightarrow\Q^{672}.
\end{equation}
Finite-field methods are a standard exact tool in amplitude computation
\cite{Peraro2016}.  Here they have the narrower role of exposing nonzero
integer minors: they are not used to reconstruct a rational function or to
infer ideal membership.  The integral matrices are reduced modulo
\begin{equation}
  p_1=1{,}000{,}003,
  \qquad
  p_2=998{,}244{,}353.
  \label{eq:chosen-primes}
\end{equation}
The same rank is obtained for both primes.

\begin{lemma}[Modular minor]
\label{lem:modular-minor}
Let \(M\) be an integer matrix.  If an \(r\times r\) minor of \(M\) is
nonzero modulo a prime \(p\), then \(\rank_\Q M\geq r\).
\end{lemma}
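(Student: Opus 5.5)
The plan is to show that the reduction map \(\Z\to\mathbb F_p\) commutes with taking determinants, and then apply the standard characterization of rank by nonvanishing minors. Let \(\Delta\in\Z\) be the determinant of the chosen \(r\times r\) submatrix \(M'\) of \(M\). The determinant is a polynomial with integer coefficients in the entries, and reduction modulo \(p\) is a ring homomorphism \(\Z\to\mathbb F_p\). Hence \(\det(M'\bmod p)=\Delta\bmod p\).

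By hypothesis this residue is nonzero, so \(\Delta\not\equiv0\pmod p\). In particular \(\Delta\neq0\) as an integer.

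Viewing \(M\) over \(\Q\supset\Z\), the submatrix \(M'\) has nonzero determinant and is therefore invertible over \(\Q\). Its \(r\) columns are linearly independent in \(\Q^r\). The corresponding \(r\) columns of \(M\) remain linearly independent in the larger coordinate space, because any linear relation among them would restrict to a relation among the columns of \(M'\). Therefore \(\rank_\Q M\geq r\).

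No step is a genuine obstacle. The only point needing care is the direction of the implication. A minor vanishing modulo \(p\) does not imply that it vanishes over \(\Q\), so modular computations give only lower bounds on rank, never upper bounds. This is exactly how the lemma is used in \cref{app:block-ranks}: upper bounds on kernel dimensions come from the rank lower bounds, and the complementary inequalities are supplied separately, by explicit kernel witnesses or by the weighted dimension count.
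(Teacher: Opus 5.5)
Your proof is correct and follows the same argument as the paper: a nonzero residue modulo \(p\) forces the integer minor to be nonzero, so it is nonzero over \(\Q\) and gives \(\rank_\Q M\geq r\). You also make explicit two steps the paper leaves implicit, namely that the determinant commutes with reduction modulo \(p\), and that a nonzero \(r\times r\) minor yields \(r\) linearly independent columns of \(M\).
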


\begin{proof}
The minor is an integer determinant.  A nonzero residue modulo \(p\) implies
that this integer, and hence the same determinant over \(\Q\), is nonzero.
\end{proof}

Let \(\rho_\lambda\) denote the rank lower bound furnished by such a minor.
The inclusion \(y_\lambda W_7\subseteq\ker E_\lambda\) gives
\begin{equation}
  [W_7:S^\lambda]\leq m_\lambda-\rho_\lambda.
  \label{eq:appendix-multiplicity-bound}
\end{equation}
The complete calculation is shown in \cref{tab:block-calculation}.

\begin{table}[htbp]
  \centering
  \small
  \caption{Ambient multiplicities and characteristic-zero restriction-rank
  bounds.  Here \(f^\lambda=\dim S^\lambda\) and
  \(\delta_\lambda=m_\lambda-\rho_\lambda\).}
  \label{tab:block-calculation}
  \begin{tabular}{@{}lrrrr@{}}
    \toprule
    \(\lambda\) & \(f^\lambda\) & \(m_\lambda\) &
    \(\rho_\lambda\) & \(\delta_\lambda\)\\
    \midrule
    \((7)\)                 &  1 &  13 &  13 & 0\\
    \((6,1)\)               &  6 &  98 &  98 & 0\\
    \((5,2)\)               & 14 & 201 & 201 & 0\\
    \((5,1,1)\)             & 15 & 236 & 236 & 0\\
    \((4,3)\)               & 14 & 203 & 203 & 0\\
    \((4,2,1)\)             & 35 & 475 & 475 & 0\\
    \((4,1,1,1)\)           & 20 & 264 & 264 & 0\\
    \((3,3,1)\)             & 21 & 293 & 293 & 0\\
    \((3,2,2)\)             & 21 & 255 & 255 & 0\\
    \((3,2,1,1)\)           & 35 & 437 & 437 & 0\\
    \((3,1,1,1,1)\)         & 15 & 160 & 160 & 0\\
    \((2,2,2,1)\)           & 14 & 165 & 165 & 0\\
    \((2,2,1,1,1)\)         & 14 & 163 & 163 & 0\\
    \((2,1,1,1,1,1)\)       &  6 &  60 &  59 & 1\\
    \((1,1,1,1,1,1,1)\)     &  1 &  13 &  12 & 1\\
    \bottomrule
  \end{tabular}
\end{table}

Finite evaluation has only the upper-bound role expressed by
\cref{eq:appendix-multiplicity-bound}.  It does not establish that a
polynomial vanishes identically on a pair locus.  The lower-bound
solutions are proved to belong to the pair ideals by exact reductions, as
described next.

\section{The hook component}
\label{app:hook}

\subsection{Construction and orbit dimension}

Let
\begin{equation}
  T_{\mathrm h}=
  \begin{array}{cc}
    1&2\\[-1mm]
    3\\[-1mm]
    4\\[-1mm]
    5\\[-1mm]
    6\\[-1mm]
    7
  \end{array},
  \qquad
  \lambda_{\mathrm h}=(2,1^5).
  \label{eq:hook-tableau}
\end{equation}
Its Young symmetrizer \(y_{\mathrm h}\), with the convention
\cref{eq:young-symmetrizer}, satisfies
\begin{equation}
  y_{\mathrm h}^2=840\,y_{\mathrm h}.
\end{equation}
Define
\begin{equation}
  G=\sum_{a\in\mathcal I_{\mathrm h}}c_a m_a,
  \qquad
  F_7=y_{\mathrm h}G,
  \label{eq:hook-generator}
\end{equation}
where the \(60\) coefficient pairs are listed in
\cref{tab:hook-coefficients}.  The orbit is an image of
\(\Q[S_7]y_{\mathrm h}\cong S^{(2,1^5)}\), so it has dimension at most six.
The six translates
\begin{equation}
  F_7,\ (23)F_7,\ (24)F_7,\ (25)F_7,\ (26)F_7,\ (27)F_7
  \label{eq:hook-six-translates}
\end{equation}
have an integer evaluation minor with nonzero residues
\[
  222{,}653\pmod{1{,}000{,}003},
  \qquad
  494{,}436{,}113\pmod{998{,}244{,}353}.
\]
\Cref{lem:modular-minor} proves their independence over \(\Q\), and hence
\(\Span_\Q(S_7F_7)\cong S^{(2,1^5)}\).

\subsection{Pair-ideal membership}

For the representative pair \((1,2)\), the ideal
\[
  I_7+(\ang{1}{2},\sqb{1}{2})
\]
is generated by the \(35\) angle Pl\"ucker relations, \(35\) square
Pl\"ucker relations, \(49\) momentum-conservation relations, and the two
pair generators.  We construct an ordered reduction system whose elements
are given as exact combinations of these generators.  Its leading terms
decrease strictly, so zero remainder is an exact ideal-membership statement.

The pullbacks
\begin{equation}
  F_7,\qquad (23)F_7,\qquad (1\,3)(2\,4)F_7
  \label{eq:three-hook-pullbacks}
\end{equation}
all reduce to zero.  Row- and column-stabilizer identities for
\(y_{\mathrm h}\) reduce the \(21\) pair transports, up to sign, to these
three cases.  Thus \(F_7\in J_{ij}\) for every \(i<j\).  Since the
intersection is \(S_7\)-stable, the full hook orbit lies in \(W_7\).

\subsection{Coefficient list}

\begingroup
\scriptsize
\ttfamily
\setlength{\tabcolsep}{4pt}
\begin{longtable}{@{}r r@{\hspace{1.3em}}r r@{}}
  \caption{The \(60\) exact coefficient pairs defining \(G\) in
  \cref{eq:hook-generator}.}
  \label{tab:hook-coefficients}\\
  \normalfont\bfseries ID & \normalfont\bfseries coefficient &
  \normalfont\bfseries ID & \normalfont\bfseries coefficient\\
  \midrule
  \endfirsthead
  \normalfont\bfseries ID & \normalfont\bfseries coefficient &
  \normalfont\bfseries ID & \normalfont\bfseries coefficient\\
  \midrule
  \endhead
5027 & 546109985267345242720 & 33177 & -1651389918916989956696\\
15187 & -887281203525477494432 & 57220 & 57939437692170865918136\\
17232 & -24721484109094053130456 & 64682 & 624005713338958382988\\
28079 & 14818441247497558229528 & 44787 & 17342614324903490362904\\
15486 & -11915732322269922010400 & 55423 & -4319770447187951215568\\
13895 & 1150275346894630267088 & 2626 & 610047250842110570064\\
22447 & -9057940780171382790584 & 24722 & 8454728852589570638685\\
21633 & 3796504860261186568880 & 19680 & 831392571167038214752\\
46991 & 7873950998020598035704 & 60692 & -13477673263660360633320\\
45055 & -4328351314288546767440 & 37468 & 4228159821205841125856\\
57941 & -1158506341749152596560 & 39811 & -6096983068005759001488\\
52616 & 1514440940668580006448 & 49229 & -12872185197250298717040\\
24675 & 5278101005421254247936 & 43699 & 247985164947332021112\\
13939 & 62272195390452942104090 & 25100 & -27343654325513735592560\\
43532 & -12917791543066463567664 & 65690 & 19528811789842647981776\\
56918 & -16337627547919235318208 & 27041 & 327113770582114381808\\
63876 & -18592902081004361150176 & 32722 & 11622356051712865570960\\
41240 & -15190418417660648078930 & 58358 & -46013434617695800318416\\
55896 & 39706504671281771527808 & 31157 & 34128770282660241072688\\
52161 & 35409103693584572983936 & 50644 & -3524641781188333565792\\
33928 & 1228218061786005728572 & 36801 & -476432262658108379152\\
40129 & -767455305592272887408 & 1168 & -6821127350053488309176\\
43963 & -16763697388638381174688 & 18877 & 25842394753116538238432\\
65109 & 16781552198859564560678 & 59125 & -12224225473941255747264\\
25258 & -41044877586218598795622 & 35934 & 8867345112215126382328\\
4915 & 7587490349335360764744 & 52294 & -12963961165048527252584\\
15932 & 408930443025763059480 & 40645 & 17730521630083821414314\\
57683 & 80743477146074355708568 & 41437 & -17813334613808168295128\\
43986 & 13391291097080023622976 & 16838 & -7606875186747079972224\\
19955 & 8330545751740037708592 & 7304 & 5643034542691563623256\\
  \bottomrule
\end{longtable}
\endgroup

\section{The alternating component}
\label{app:alternating}

\subsection{Explicit generator}

Let
\begin{equation}
  \Alt=\sum_{\sigma\in S_7}\sgn(\sigma)\sigma\in\Z[S_7].
\end{equation}
In the basis \(\mathcal B_7\), define
\begin{equation}
  H_7=2\Alt(m_{17232})-\Alt(m_{13895}).
  \label{eq:H-explicit}
\end{equation}
For direct reconstruction, the row-major tableau words are
\begin{align}
  \operatorname{rows}(T_{13895})
    &=(\texttt{111111222333444};\texttt{222455556667777};
       \texttt{3335};\texttt{4466};\texttt{5677}),\notag\\
  \operatorname{rows}(T_{17232})
    &=(\texttt{111111222334455};\texttt{222334445666777};
       \texttt{3355};\texttt{4566};\texttt{6777}).
  \label{eq:H-tableaux}
\end{align}
Together with \cref{eq:hodge-convention}, these words determine the bracket
monomials and their signs.

By construction, every adjacent transposition satisfies \(s_iH_7=-H_7\).
Exact reduction gives
\[
  H_7=0
  \quad\text{in}\quad
  Q_7/(\ang{1}{2},\sqb{1}{2}).
\]
The identity
\(\gamma\Alt=\sgn(\gamma)\Alt\) transports this membership to every pair,
so \(H_7\in W_7\).

\subsection{A nonzero rational point}

Write the spinors as seven row vectors and take
\begin{equation}
  \lambda=
  \begin{pmatrix}
    1&0\\
    0&1\\
    -1&-1\\
    -2&-5\\
    5&-1\\
    -3&5\\
    5&1
  \end{pmatrix},
  \qquad
  \widetilde\lambda=
  \begin{pmatrix}
    36&24\\
    -45&17\\
    1&-4\\
    -5&1\\
    -1&-3\\
    5&-3\\
    -5&-4
  \end{pmatrix}.
  \label{eq:sign-point}
\end{equation}
All Pl\"ucker and momentum-conservation relations vanish at this point.
The two components of \cref{eq:H-explicit} evaluate to
\begin{align}
  \Alt(m_{17232})&=-67{,}490{,}538{,}316{,}185{,}600,\\
  \Alt(m_{13895})&=-119{,}160{,}835{,}684{,}531{,}200,
\end{align}
and hence
\begin{equation}
  H_7=-15{,}820{,}240{,}947{,}840{,}000\neq0.
  \label{eq:H-nonzero-value}
\end{equation}
Evaluation factors through \(Q_{7,\Q}\), proving that \(H_7\) is nonzero in
the quotient ring.

\subsection{Nonvanishing of the common denominator}

For completeness, take
\[
  \lambda_i=(1,i),\qquad
  \widetilde\lambda_i=(0,0),
  \qquad i=1,\ldots,7.
\]
Then \(\ang{i}{j}=j-i\), every defining relation of \(Q_7\) vanishes, and
\begin{equation}
  D_7=\prod_{i<j}(j-i)=24{,}883{,}200\neq0.
\end{equation}
This proves the nonvanishing used in the localization argument of
\cref{sec:bose}.

Since \(Q_{7,\Q}\) is a domain, localization gives an injective map
\[
  W_{7,\Q}\longrightarrow Q_{7,\Q}[D_7^{-1}],
  \qquad N\longmapsto N/D_7.
\]
Division by the alternating denominator twists the permutation character,
so
\[
  D_7^{-1}W_{7,\Q}
  \cong \Specht{(6,1)}\oplus\Specht{(7)}.
\]
The hook directions are therefore distinct rational functions rather than
different polynomial presentations of the Hodges amplitude.

\section{Special kinematics and the surviving hook}
\label{app:special-kinematics}

We now prove \cref{prop:special-k-hook}.  Use the six hook translates
\[
  F_7,\ (23)F_7,\ (24)F_7,\ (25)F_7,\ (26)F_7,\ (27)F_7
\]
from \cref{eq:hook-six-translates}.  We use six fixed integral
momentum-conserving points \(x_0,\ldots,x_5\) satisfying
\[
  \lambda_6=2\lambda_7,\qquad
  [67]\prod_{i=1}^{5}\langle i7\rangle\neq0,
\]
with no additional angle or square collision.  Let \(M_K\) be the
\(6\times6\) integer matrix obtained by evaluating these six translates at
the six points.  Exact elimination gives
\begin{equation}
  \det M_K
  \equiv 933{,}171\pmod{1{,}000{,}003},
  \qquad
  \det M_K
  \equiv 501{,}282{,}053\pmod{998{,}244{,}353}.
  \label{eq:special-k-minor-residues}
\end{equation}
Both residues are nonzero.  Hence \(\det M_K\) is a nonzero integer, the
six restricted translates are linearly independent over \(\Q\), and
\(\pi_{K,2}\) is injective on the hook.

The same calculation evaluates the alternating generator at the first fixed
point to
\begin{equation}
  H_7(x_0)=-1{,}165{,}857{,}497{,}344{,}512{,}000\neq0.
  \label{eq:special-k-sign-nonzero}
\end{equation}
This single evaluation is used only to prove that the sign line is not
annihilated by the special-\(K\) restriction.  Together with the
multiplicity-free decomposition of \cref{thm:decomposition}, it supplies
the remaining input to
\cref{cor:all-marked-faithful-seven}.

The generic statement follows from an explicit one-parameter section.
Starting from any such fixed point at \(\alpha=2\), set
\begin{equation}
  \lambda_6(\alpha)=\alpha\lambda_7,\qquad
  \widetilde\lambda_7(\alpha)
  =
  \widetilde\lambda_7(2)+(2-\alpha)\widetilde\lambda_6,
  \label{eq:special-k-alpha-section}
\end{equation}
and leave all other spinors fixed.  The contribution of legs \(6\) and
\(7\) to the momentum matrix is independent of \(\alpha\):
\[
  \lambda_6(\alpha)\widetilde\lambda_6
  +\lambda_7\widetilde\lambda_7(\alpha)
  =
  2\lambda_7\widetilde\lambda_6
  +\lambda_7\widetilde\lambda_7(2).
\]
Every entry of the same evaluation minor is therefore a polynomial in
\(\alpha\) with integer coefficients.  Since its value at \(\alpha=2\) is
nonzero, the minor is not the zero polynomial.  This proves rank six over
\(\Q(\alpha)\).

For clarity, this argument proves neither divisibility by \(P_7\) nor an
identity after removing that factor.  It also does not prove that projection
commutes with the pair-ideal intersection, that the residual space is
isomorphic to \(W_6\), or that lifting from special kinematics is unique.
Those are separate statements in the proposed induction.  The finite-point
comparison between the sign and hook components is not used as a polynomial
identity anywhere in this paper.

\section{The eight-point alternating-sector calculation}
\label{app:n8-witness}

\subsection{Exact character calculation}

At eight points the flag shape and equal content are
\[
  \Lambda_8=(22,22,5,5,5,5),
  \qquad
  \mu_8=(8^8).
\]
Write \(E_8=(S_{\Lambda_8}\Q^8)_{\mu_8}\) for the corresponding weight
space with the plain permutation-matrix action.  Its dimension is
\(7{,}930{,}104\), so constructing the complete tableau basis is
unnecessary for the sign-multiplicity question.

For a permutation of cycle type \(\rho\), let \(T_\rho\) be its trace on
\(E_8\).  Introduce one monomial \(y_c=\prod_{i\in c}x_i\) for each cycle
\(c\) of length \(d\).  The complete symmetric functions of the
corresponding eigenvalue alphabet obey
\begin{equation}
  \sum_{r\geq0}h_r t^r
  =
  \prod_{c\in\rho}\frac{1}{1-y_c t^{|c|}}.
  \label{eq:n8-cycle-alphabet}
\end{equation}
Schur coproduct followed by the Jacobi--Trudi determinant therefore reduces
each \(T_\rho\) to integer arithmetic on partitions contained in
\(\Lambda_8\).  The complete trace table is
\begin{center}
\small
\begin{tabular}{rr@{\qquad}rr@{\qquad}rr}
\toprule
\(\rho\) & \(T_\rho\) & \(\rho\) & \(T_\rho\) &
\(\rho\) & \(T_\rho\)\\
\midrule
\((8)\) & \(0\) &
\((7,1)\) & \(0\) &
\((6,2)\) & \(1\)\\
\((6,1^2)\) & \(-1\) &
\((5,3)\) & \(-1\) &
\((5,2,1)\) & \(-1\)\\
\((5,1^3)\) & \(-1\) &
\((4,4)\) & \(0\) &
\((4,3,1)\) & \(0\)\\
\((4,2^2)\) & \(-84\) &
\((4,2,1^2)\) & \(-12\) &
\((4,1^4)\) & \(-12\)\\
\((3^2,2)\) & \(15\) &
\((3^2,1^2)\) & \(33\) &
\((3,2^2,1)\) & \(-14\)\\
\((3,2,1^3)\) & \(-6\) &
\((3,1^5)\) & \(1{,}734\) &
\((2^4)\) & \(6{,}424\)\\
\((2^3,1^2)\) & \(644\) &
\((2^2,1^4)\) & \(424\) &
\((2,1^6)\) & \(-115{,}236\)\\
\((1^8)\) & \(7{,}930{,}104\) & & & &\\
\bottomrule
\end{tabular}
\end{center}
If \(z_\rho=\prod_d d^{m_d}m_d!\), exact character averaging gives
\begin{equation}
  \sum_{\rho\vdash8}\frac{T_\rho}{z_\rho}=144,
  \qquad
  \sum_{\rho\vdash8}
  \frac{(-1)^{8-\ell(\rho)}T_\rho}{z_\rho}=296.
  \label{eq:n8-character-averages-appendix}
\end{equation}
These are the trivial and sign multiplicities for the plain action.
Hodge complementation of the five square-bracket columns contributes
\(\det^{-5}\), so physical relabeling multiplies \(T_\rho\) by
\((-1)^{8-\ell(\rho)}\).  The alternating multiplicity relevant for a
Bose-symmetric amplitude is consequently \(144\).  The implementation
evaluates all twenty-two rows independently
with integers and checks the averages as exact rational numbers; no
floating-point character reconstruction is used.

\subsection{The stacked rank calculation}

The rank calculation begins with a deterministic sample of \(600\) tableaux of shape
\(\Lambda_8\) and content \(\mu_8\).  Physical alternation of the selected
\(144\) bracket monomials gives columns
\[
  a_j=\sum_{\sigma\in S_8}\sgn(\sigma)\,\sigma m_j,
  \qquad j=1,\ldots,144.
\]
The input records specify the sample, seed, and selection indices
from which the complete monomial list is reconstructed.  The selection is
made before either of the two determinant primes is used.

The first \(142\) rows are evaluations at exact integral,
momentum-conserving points of
\[
  Z_{12}=V(\ang{1}{2},\sqb{1}{2}).
\]
They are followed by the two marked points \(y_0,y_1\) below.  Each satisfies
\(\lambda_7=2\lambda_8\), has no other angle collision and no square
collision, and hence lies on \(V(K_{78}(2))\):
\begin{align}
 y_0:\quad
 \lambda&=((1,0),(0,1),(-2,-4),(-4,-2),(-5,-4),(1,-3),
           (2,-8),(1,-4)),\notag\\
 \widetilde\lambda&=((-13,2),(-73,-41),(2,-2),(-3,-2),(-4,1),
           (-1,-3),(-5,-1),(-4,-4)),\label{eq:n8-marked-point-zero}\\
 y_1:\quad
 \lambda&=((1,0),(0,1),(5,4),(-1,-4),(5,2),(1,2),
           (10,10),(5,5)),\notag\\
 \widetilde\lambda&=((-23,-30),(-41,-18),(-1,-1),(-3,1),(-1,2),
           (5,-4),(2,1),(1,4)).
 \label{eq:n8-marked-point-one}
\end{align}
Direct integer multiplication verifies
\(\sum_i\lambda_i\widetilde\lambda_i=0\) at both points.

Let \(M_8^{\mathrm{stack}}\) be the resulting evaluation matrix.  The exact
calculation expands all \(8!\) terms of every alternant modulo each
prime, validates all point equations before evaluation, and performs
independent modular elimination.  It obtains
\begin{equation}
  \det M_8^{\mathrm{stack}}
  \equiv 47{,}590\pmod{1{,}000{,}003},
  \qquad
  \det M_8^{\mathrm{stack}}
  \equiv 169{,}757{,}478\pmod{998{,}244{,}353}.
  \label{eq:n8-stacked-minor}
\end{equation}
Thus the columns are independent over \(\Q\);
by \cref{eq:n8-character-averages-appendix} they form the full alternating
component of \(V_{8,\Q}\).  Moreover,
\begin{equation}
  (V_{8,\Q})_{\sgn}\cap J_{12}\cap K_{78}(2)=0.
  \label{eq:n8-two-locus-intersection}
\end{equation}
Indeed, membership in \(J_{12}\) kills the first \(142\) functionals and
membership in \(K_{78}(2)\) kills the last two.  Since
\((W_{8,\Q})_{\sgn}\subset J_{12}\), this proves
\cref{prop:n8-single-boundary-sign}.

An independent script regenerates the inputs and recomputes both matrices.
This supplies the rank half of the dimension proof.

The sampled pair rows are used only as exact linear functionals.  Since
adding the two marked rows raises the stack to rank \(144\), the first
\(142\) rows have rank exactly \(142\).  Thus
\[
 \dim_\Q\bigl((V_{8,\Q})_{\sgn}\cap J_{12}\bigr)\leq2.
\]
This is an upper bound from evaluation, not an ideal-membership test for a
kernel vector.  The lower bound requires the global identities below.

\subsection{Two short seven-point-derived alternants}

For every \(r\), we use the unnormalized alternation convention
\begin{equation}
  \Alt_r(f):=\sum_{\sigma\in S_r}\sgn(\sigma)\,\sigma f .
  \label{eq:alternation-normalization-appendix}
\end{equation}
Embed the two seven-point monomials in \cref{eq:H-explicit} using
labels \(1,\ldots,7\), and define
\[
 L=\sqb{1}{8}\ang{1}{2}\ang{1}{8}\ang{3}{8}
      \ang{4}{7}\ang{5}{8}\ang{6}{8}.
\]
The two eight-point polynomials are
\begin{equation}
 A=\Alt_8(Lm_{17232}),\qquad
 B=\Alt_8(Lm_{13895}).
 \label{eq:n8-short-generators-appendix}
\end{equation}
The embedded seven-point monomials have little-group weight two on the
first seven labels and zero on label eight.  The multiplier changes these
weights to three on every label and raises the bidegree from \((11,4)\) to
\((17,5)\).  Hence \(A,B\in(V_{8,\Q})_{\sgn}\).

Each orbit has \(40{,}320\) terms before and after canonical collection, with
primitive content one.

\subsection{Global ideal membership}

For the representative pair, start from the \(140\) Pl\"ucker relations,
the \(64\) momentum-conservation relations, and
\(\ang{1}{2},\sqb{1}{2}\).  A \(230\)-element reducer has degree inventory
\[
  2\text{ linear},\qquad
  204\text{ quadratic},\qquad
  24\text{ cubic}.
\]
All reducers are accompanied by exact polynomial-coefficient lifts over
\(\Q\) to those \(206\) original generators; the lift matrix has \(390\)
nonzero entries.  An independent exact-\(\Z\) reduction, with a fixed
strictly descending
degree-reverse-lexicographic rule, takes \(128{,}952\) steps for \(A\) and
\(2{,}867{,}803\) steps for \(B\), with zero remainder in both cases.  The
accumulated subtractions are equivalently the exact identities
\begin{equation}
  A=\sum_{\alpha=1}^{230}q^A_\alpha g_\alpha,
  \qquad
  B=\sum_{\alpha=1}^{230}q^B_\alpha g_\alpha.
  \label{eq:n8-membership-identities}
\end{equation}
No finite evaluation or floating-point reconstruction enters this
calculation.  The reducer lifts therefore prove \(A,B\in I_8+J_{12}\).
Physical alternation and
\(\sigma J_{12}=J_{\sigma\{1,2\}}\) transport membership to all twenty-eight
pair ideals.

\subsection{Independence and the Hodges line}

Exact evaluation on \cref{eq:n8-marked-point-zero,eq:n8-marked-point-one}
gives
 \[
 \begin{aligned}
A(y_0)&=-411{,}186{,}841{,}639{,}890{,}909{,}659{,}136,\\
B(y_0)&=-380{,}789{,}618{,}083{,}435{,}043{,}094{,}528,\\
N_8^{\rm Hodges}(y_0)
&=10{,}183{,}826{,}143{,}697{,}203{,}888{,}128,\\
A(y_1)&=-595{,}698{,}622{,}989{,}120{,}000{,}000{,}000,\\
B(y_1)&=476{,}590{,}755{,}553{,}920{,}000{,}000{,}000,\\
N_8^{\rm Hodges}(y_1)
&=472{,}334{,}224{,}320{,}000{,}000{,}000.
 \end{aligned}
 \]
and
\begin{equation}
 \begin{aligned}
 \det
 \begin{pmatrix}
  A(y_0)&B(y_0)\\
  A(y_1)&B(y_1)
 \end{pmatrix}
 &=
 -2^{46}3^{12}5^{10}7^3\cdot73\\[-1mm]
 &\hspace{1.5em}\cdot
  76{,}791{,}877\cdot602{,}104{,}141
 \neq0.
 \end{aligned}
 \label{eq:n8-short-determinant}
\end{equation}
The upper bound and these two independent
global ideal members prove \cref{thm:n8-sign-dimension}.  The two rows also
give, exactly,
\[
  504N_8^{\rm Hodges}+6A+7B=0.
\]
Here membership of the standard Hodges numerator is the known pair-zero
property reviewed in \cref{sec:original-bootstrap}; the displayed values
come from a cancelled-cycle exact reduced-determinant evaluation.  The
stacked injectivity then upgrades the two evaluation identities to the
global relation \cref{eq:n8-hodges-line}.

This calculation does not determine the non-sign part of \(W_8\), prove
\(P_8\)-divisibility or factor removal, or turn the transverse algebraic
quotient into a physical amplitude.

\section{Exact calculations for physical selection in the eight-point plane}
\label{app:n8-physical-completion}

This appendix records the exact calculations behind the physical
selection in the two-dimensional eight-point alternating sector.  We use the
basis and normalization of
\cref{eq:n8-short-generators-appendix,eq:n8-hodges-line},
\begin{equation}
  U_8:=(W_{8,\Q})_{\sgn}=\Q A\oplus\Q B,
  \qquad
  504N_8^{\mathrm{Hodges}}+6A+7B=0.
  \label{eq:n8pilot-basis}
\end{equation}
Thus an arbitrary Bose-compatible algebraic solution is
\(N=xA+yB\), and the Hodges line is characterized by
\begin{equation}
  7x-6y=0.
  \label{eq:n8pilot-hodges-ratio}
\end{equation}
The purpose of the calculations below is to obtain this equation without
using the Hodges coordinates as the constraint.

\paragraph{Exact arithmetic and scope.}
All entries below use integer or rational arithmetic; no floating-point
comparison is involved.  The exact-arithmetic calculation reconstructs
\(A\) and \(B\) from the recorded input data, checks momentum conservation
coefficientwise, and recomputes the deformation polynomials, degrees,
determinants, and boundary comparisons displayed below.

These calculations take the two-dimensional statement
\cref{eq:n8pilot-basis} as input.  They evaluate on \(U_8\) the consequences
of the standard Einstein-gravity large-\(z\) falloff, complex-collinear
factorization, and leading soft-graviton theorem.  They do not compute the
non-alternating part of \(W_8\), prove those cited physical results, or
identify the transverse algebraic direction with a second physical
amplitude.

\subsection{Exact BCFW large-\texorpdfstring{\(z\)}{z} coefficients}
\label{app:n8pilot-bcfw}

Consider the two-negative-line deformation
\begin{equation}
  |\widehat 1]
  =
  |1]+z|2],
  \qquad
  |\widehat 2\rangle
  =
  |2\rangle-z|1\rangle .
  \label{eq:n8pilot-bcfw-shift}
\end{equation}
It preserves momentum conservation and leaves
\(\ang{1}{2}\) invariant.  Consequently the helicity-stripping factor
\(\ang{1}{2}^{8}\) in \cref{eq:helicity-stripping} has no \(z\)-dependence,
so the reduced amplitude and the \(1^-,2^-\) component have the same
large-\(z\) degree.  Einstein-gravity tree amplitudes obey the bonus
falloff \(O(z^{-2})\) under this shift
\cite{McGadyRodina2015,NguyenEtAl2010}.

The exact calculation uses the momentum-conserving integer point
\begin{equation}
\begin{split}
  (\lambda_1,\ldots,\lambda_8)
  ={}&
  \bigl(
    (4,-5),(-5,4),(1,-5),(-2,1),\\[-1mm]
  &\hspace{27mm}(1,-1),(1,2),(1,0),(0,1)
  \bigr),\\
  (\widetilde\lambda_1,\ldots,\widetilde\lambda_8)
  ={}&
  \bigl(
    (-4,-4),(4,2),(-5,2),(5,-3),\\[-1mm]
  &\hspace{27mm}(2,-3),(3,4),(46,17),(-70,-26)
  \bigr).
  \label{eq:n8pilot-bcfw-point}
\end{split}
\end{equation}
Direct multiplication gives
\(\sum_i\lambda_i\widetilde\lambda_i=0\).

Let \(D_\star(z),A_\star(z),B_\star(z)\) denote exact evaluation at
\cref{eq:n8pilot-bcfw-point} after applying
\cref{eq:n8pilot-bcfw-shift}.  Exact polynomial interpolation gives
\(\deg_zD_\star=\deg_zA_\star=\deg_zB_\star=6\).  The two highest
coefficients are
\begin{equation}
\begin{array}{c|rr}
  & [z^6] & [z^5]\\ \hline
 D_\star
  & 372{,}559{,}824{,}000{,}000
  & 1{,}500{,}269{,}752{,}800{,}000\\
 A_\star
  & 188{,}643{,}592{,}791{,}972{,}864{,}000
  & 531{,}808{,}396{,}063{,}972{,}070{,}400\\
 B_\star
  & -161{,}694{,}508{,}107{,}405{,}312{,}000
  & -455{,}835{,}768{,}054{,}833{,}203{,}200
\end{array}
\label{eq:n8pilot-bcfw-coefficients}
\end{equation}
The numerator rows factor as
\begin{align}
  \bigl([z^6]A_\star,[z^6]B_\star\bigr)
  &=
  26{,}949{,}084{,}684{,}567{,}552{,}000\,(7,-6),
  \label{eq:n8pilot-bcfw-leading-row}\\
  \bigl([z^5]A_\star,[z^5]B_\star\bigr)
  &=
  75{,}972{,}628{,}009{,}138{,}867{,}200\,(7,-6).
  \label{eq:n8pilot-bcfw-subleading-row}
\end{align}
It follows already from the leading row that any element of \(U_8\) with
the required large-\(z\) behavior must satisfy
\begin{equation}
  [z^6](xA_\star+yB_\star)=0
  \quad\Longrightarrow\quad
  7x-6y=0.
  \label{eq:n8pilot-bcfw-selector}
\end{equation}
The \(z^5\) row vanishes on precisely the same line.  The full exact
interpolation gives
\begin{equation}
  \deg_z(6A_\star+7B_\star)=4,
  \qquad
  \deg_z(7A_\star-6B_\star)=6.
  \label{eq:n8pilot-bcfw-degrees}
\end{equation}
Since \(D_\star\) has degree six, the first combination has the required
\(z^{-2}\) falloff, whereas a transverse combination is generically
\(O(z^0)\).

This one point is enough for the selection argument.  A numerator whose
amplitude is globally \(O(z^{-2})\) must, in particular, annihilate the
nonzero functional in \cref{eq:n8pilot-bcfw-leading-row}; its coefficients
therefore lie on the Hodges line.  Conversely, the line
\(\Q(6A+7B)=\Q N_8^{\mathrm{Hodges}}\) has the known gravity falloff.  The
exact coefficient calculation and the standard Hodges falloff therefore
give
\begin{equation}
  \left\{
    N\in U_8:
    \frac{N(z)}{D_8(z)}=O(z^{-2})
  \right\}
  =
  \Q N_8^{\mathrm{Hodges}}.
  \label{eq:n8pilot-bcfw-target}
\end{equation}

\subsection{The normalized \texorpdfstring{\(K_{78}(2)\)}{K78(2)} residue}
\label{app:n8pilot-collinear}

The marked boundary \(K_{78}(2)\) is the holomorphic collinear branch
\begin{equation}
  |7\rangle=2|8\rangle,
  \qquad
  |P\rangle=|8\rangle,
  \qquad
  |P]=2|7]+|8].
  \label{eq:n8pilot-fused-convention}
\end{equation}
The last two equations make
\(p_P=p_7+p_8\) manifest.  They also define the lower-point embedding used
in this subsection:
\begin{equation}
  \iota_2\!\left(\ang{i}{P}\right)=\ang{i}{8},
  \qquad
  \iota_2\!\left(\sqb{i}{P}\right)
  =2\sqb{i}{7}+\sqb{i}{8},
  \qquad 1\leq i\leq6.
  \label{eq:n8pilot-fused-embedding}
\end{equation}

After removing the vanishing factor \(\ang{7}{8}\), the common denominator
restricts as
\begin{align}
  \left.
  \frac{D_8}{\ang{7}{8}}
  \right|_{K_{78}(2)}
  &=
  D_6\prod_{i=1}^{6}\ang{i}{7}\ang{i}{8}
  \nonumber\\
  &=
  2^6D_6\prod_{i=1}^{6}\ang{i}{8}^{\,2}
  =
  2^6\iota_2(D_7)\prod_{i=1}^{6}\ang{i}{8}.
  \label{eq:n8pilot-denominator-restriction}
\end{align}
Introduce the channel factor
\begin{equation}
  P_8
  :=
  \sqb{7}{8}\prod_{i=1}^{6}\ang{i}{8}.
  \label{eq:n8pilot-P8}
\end{equation}
The required factorization relation in the convention
\cref{eq:n8pilot-fused-convention} is
\begin{equation}
  \left.N_8^{\mathrm{Hodges}}\right|_{K_{78}(2)}
  =
  -16P_8\,\iota_2(N_7^{\mathrm{Hodges}}).
  \label{eq:n8pilot-collinear-numerator}
\end{equation}
Combining \cref{eq:n8pilot-denominator-restriction} and
\cref{eq:n8pilot-collinear-numerator} gives
\begin{equation}
  \mathop{\mathrm{Res}}_{\ang{7}{8}=0}\mathcal A_8
  =
  -\frac{\sqb{7}{8}}{4}\,
  \iota_2(\mathcal A_7),
  \label{eq:n8pilot-collinear-amplitude}
\end{equation}
which is the \(++\) gravity splitting residue in this little-group
convention \cite{BernEtAl1998}.

To determine which line in \(U_8\) satisfies this relation, evaluate both
sides at the two exact marked points \(y_0,y_1\) in
\cref{eq:n8-marked-point-zero,eq:n8-marked-point-one}.  The resulting
values are
\begin{equation}
\begin{array}{c|r|r|r}
  &P_8&
  \iota_2(N_7^{\mathrm{Hodges}})&
  -16P_8\,\iota_2(N_7^{\mathrm{Hodges}})\\ \hline
 y_0&
 -331{,}776&
 1{,}918{,}430{,}308{,}343{,}808&
 10{,}183{,}826{,}143{,}697{,}203{,}888{,}128\\
 y_1&
 984{,}375&
 -29{,}989{,}474{,}560{,}000&
 472{,}334{,}224{,}320{,}000{,}000{,}000
\end{array}
\label{eq:n8pilot-collinear-readouts}
\end{equation}
The last column agrees exactly with the separate evaluation of the
restricted Hodges numerator,
\[
  \left.N_8^{\mathrm{Hodges}}\right|_{K_{78}(2)}.
\]

For \(N=xA+yB\), imposing the ratio defined by the last column at
the two points and eliminating the overall normalization gives
\begin{equation}
  7x-6y=0.
  \label{eq:n8pilot-collinear-selector}
\end{equation}
The nonzero evaluation determinant in
\cref{eq:n8-short-determinant} makes the logical role of the two points
transparent: their evaluations are coordinates on \(U_8\), so agreement with
the normalized seven-point residue fixes a unique projective line.

This condition contains information absent from pair-ideal membership.
The latter controls the simultaneous codimension-two locus
\(\ang{7}{8}=\sqb{7}{8}=0\), and on the holomorphic branch it supplies at
most the factor \(\sqb{7}{8}\).  It does not determine the remaining
\(\prod_i\ang{i}{8}\) factor, the dependence on the fused seven-point
numerator, or the coefficient \(-16\).  The passage from a zero location to
\cref{eq:n8pilot-collinear-numerator} is therefore a genuinely new
physical boundary-value condition.

\subsection{The momentum-conserving holomorphic soft limit}
\label{app:n8pilot-soft}

Let leg \(8\) become holomorphically soft and use legs \(1,2\) as
momentum-conservation compensators.  Starting from seven hard momenta that
sum to zero, set
\begin{align}
  \lambda_8(\epsilon)&=\epsilon\lambda_8,
  &
  \widetilde\lambda_8(\epsilon)&=\widetilde\lambda_8,
  \nonumber\\
  \widetilde\lambda_1(\epsilon)
  &=
  \widetilde\lambda_1
  -\epsilon\frac{\ang{2}{8}}{\ang{2}{1}}\widetilde\lambda_8,
  &
  \widetilde\lambda_2(\epsilon)
  &=
  \widetilde\lambda_2
  -\epsilon\frac{\ang{1}{8}}{\ang{1}{2}}\widetilde\lambda_8,
  \label{eq:n8pilot-soft-path}
\end{align}
with all other spinors fixed.  Schouten's identity makes the total momentum
zero for every \(\epsilon\).

The leading positive-helicity soft theorem reads
\begin{equation}
  \mathcal A_8(\epsilon)
  =
  -\epsilon^{-3}S_8^{(0)}\mathcal A_7
  +O(\epsilon^{-2}),
  \label{eq:n8pilot-leading-soft-theorem}
\end{equation}
in the signed-minor and common-denominator normalization fixed in
\cref{eq:hodges-reduced,eq:common-denominator}.  The displayed overall
minus is convention-dependent; it is the same adjacent-multiplicity sign
already fixed by the collinear relation
\cref{eq:n8pilot-collinear-numerator}.  The Weinberg operator itself is,
for arbitrary reference spinors \(x,y\),
\begin{equation}
  S_8^{(0)}
  =
  \sum_{a=1}^{7}
  \frac{\sqb{8}{a}}{\ang{8}{a}}\,
  \frac{\ang{x}{a}\ang{y}{a}}
       {\ang{x}{8}\ang{y}{8}}
  \label{eq:n8pilot-weinberg-factor}
\end{equation}
in the stripped convention \cite{CachazoStrominger2014}.  Since
\begin{equation}
  D_8(\epsilon)
  =
  \epsilon^7D_7\prod_{a=1}^{7}\ang{a}{8},
  \label{eq:n8pilot-soft-denominator}
\end{equation}
choose \(x=1,y=2\).  The corresponding numerator lift is the polynomial
\begin{equation}
  [\epsilon^4]N_8(\epsilon)
  =
  -N_7
  \sum_{a=3}^{7}
  \sqb{a}{8}\ang{1}{a}\ang{2}{a}
  \prod_{\substack{3\leq i\leq7\\i\neq a}}\ang{i}{8}.
  \label{eq:n8pilot-soft-numerator-lift}
\end{equation}
The right-hand side, rather than the power \(\epsilon^4\) alone, is the
physical datum.

For completeness, the two exact momentum-conserving paths used in the
calculation are listed next.  In each display, only \(\lambda_8\) carries the
explicit overall factor \(\epsilon\):
\begin{equation}
\begin{split}
 {\mathsf s}_0:\quad
 (\lambda_1,\ldots,\lambda_8)
 ={}&
 \bigl(
 (1,0),(0,1),(2,1),(1,3),(-2,3),\\[-1mm]
 &\hspace{25mm}(4,-1),(3,2),\epsilon(2,-5)
 \bigr),\\
 (\widetilde\lambda_1,\ldots,\widetilde\lambda_8)
 ={}&
 \bigl(
 (1,-10)+\epsilon(-8,-2),\\[-1mm]
 & (0,7)+\epsilon(20,5),(1,2),(-2,1),\\[-1mm]
 & (3,-1),(2,3),(-1,-3),(4,1)
 \bigr).
 \label{eq:n8pilot-soft-path-zero}
\end{split}
\end{equation}
\begin{equation}
\begin{split}
 {\mathsf s}_1:\quad
 (\lambda_1,\ldots,\lambda_8)
 ={}&
 \bigl(
 (1,0),(0,1),(3,-1),(2,5),(-1,4),\\[-1mm]
 &\hspace{25mm}(5,2),(1,-3),\epsilon(4,1)
 \bigr),\\
 (\widetilde\lambda_1,\ldots,\widetilde\lambda_8)
 ={}&
 \bigl(
 (-34,2)+\epsilon(8,-12),\\[-1mm]
 & (10,-10)+\epsilon(2,-3),(2,-1),(1,4),\\[-1mm]
 & (-3,2),(4,-2),(3,5),(-2,3)
 \bigr).
 \label{eq:n8pilot-soft-path-one}
\end{split}
\end{equation}
Coefficientwise evaluation verifies
\(\sum_i\lambda_i(\epsilon)\widetilde\lambda_i(\epsilon)=0\) on both
paths.  The hard--soft angle brackets, evaluated on the unscaled soft
direction, are
\[
 \bigl(\ang{1}{8},\ldots,\ang{7}{8}\bigr)
 =
 \begin{cases}
  (-5,-2,-12,-11,4,-18,-19),&{\mathsf s}_0,\\
  (1,-4,7,-18,-17,-3,13),&{\mathsf s}_1.
 \end{cases}
\]
Thus neither path lies on an additional holomorphic collinear divisor, and
\(D_8\) has precisely order \(\epsilon^7\) on each.

Define
\(\mathsf L_k(N):=[\epsilon^4]N({\mathsf s}_k(\epsilon))\).
Exact expansion, separately cross-checked by interpolation, gives the
following four entries of the two-path matrix:
\begin{equation}
\begin{array}{c|r}
 \text{entry}&\text{exact value}\\ \hline
 \mathsf L_0(A)&   -905{,}558{,}217{,}227{,}274{,}240\\
 \mathsf L_0(B)& -4{,}353{,}722{,}697{,}659{,}166{,}720\\
 \mathsf L_1(A)&-47{,}613{,}862{,}521{,}491{,}643{,}189{,}120\\
 \mathsf L_1(B)& -8{,}133{,}716{,}292{,}008{,}623{,}683{,}840
\end{array}
  \label{eq:n8pilot-soft-matrix}
\end{equation}
Its determinant is
\begin{equation}
  -199{,}932{,}000{,}358{,}217{,}526{,}032{,}715{,}783{,}270{
  ,}551{,}833{,}804{,}800
  \neq0.
  \label{eq:n8pilot-soft-determinant}
\end{equation}
The leading-soft coefficient required by
\cref{eq:n8pilot-soft-numerator-lift} is evaluated independently of the
coordinates of \(N_8^{\mathrm{Hodges}}\).  Define
\begin{equation}
  \mathsf T_k
  :=
  -N_7^{\mathrm{Hodges}}({\mathsf s}_k|_{\epsilon=0})
  \sum_{a=3}^{7}
  \sqb{a}{8}\ang{1}{a}\ang{2}{a}
  \prod_{\substack{3\leq i\leq7\\i\neq a}}\ang{i}{8}.
  \label{eq:n8pilot-soft-target-definition}
\end{equation}
The exact seven-point determinant and soft-factor evaluations give
\begin{equation}
\begin{array}{c|r|r|r}
 &N_7^{\mathrm{Hodges}}&
 \text{soft polynomial factor}&\mathsf T_k\\ \hline
 {\mathsf s}_0&
 -29{,}460{,}752{,}160&
 2{,}418{,}432&
 71{,}248{,}825{,}767{,}813{,}120\\
 {\mathsf s}_1&
 203{,}170{,}384{,}140{,}240&
 -3{,}345{,}960&
 679{,}799{,}978{,}517{,}877{,}430{,}400
\end{array}
\label{eq:n8pilot-soft-target-values}
\end{equation}
They obey, row by row,
\begin{equation}
  504\,\mathsf T_k
  +6\,\mathsf L_k(A)
  +7\,\mathsf L_k(B)
  =0,
  \qquad k=0,1.
  \label{eq:n8pilot-soft-target-residuals}
\end{equation}
Thus the leading-coefficient map is injective on \(U_8\) when evaluated on
these two exact paths.  Comparing it with the physical seven-point
expression in \cref{eq:n8pilot-soft-numerator-lift}, and allowing one
overall amplitude normalization, selects
\begin{equation}
  7x-6y=0.
  \label{eq:n8pilot-soft-selector}
\end{equation}
Subleading soft operators are not needed for this two-dimensional
selection.

Degree counting also shows that both
\(A\) and \(B\) begin at order \(\epsilon^4\), while \(D_8\) begins at
order \(\epsilon^7\).  Hence every generic element of \(U_8\) has the
allowed leading pole \(\epsilon^{-3}\).  Merely imposing the order of the
soft pole leaves the whole plane; uniqueness is restored only after
requiring its correctly normalized Weinberg coefficient.

\subsection{The four-dimensional helicity-zero locus}
\label{app:n8pilot-helicity-zero}

Consider a transverse approach to the locus on which
all square spinors are proportional,
\begin{equation}
  |i](\epsilon)=(c_i,\epsilon d_i).
  \label{eq:n8pilot-helicity-family}
\end{equation}
Then
\(\sqb{i}{j}=\epsilon(c_id_j-d_ic_j)\).  Since every numerator in the
eight-point numerator space has square-bracket degree five, order at least
\(\epsilon^5\) is automatic.  A nonzero Hodges evaluation shows that this
order is not enhanced on the physical line; a second row below shows that
the leading coefficient can retain both algebraic directions.  The first
exact momentum-conserving family is
\begin{align}
  (\lambda_1,\ldots,\lambda_8)
  &=
  \bigl(
  (1,0),(0,1),(1,1),(1,2),(1,3),(1,4),(1,5),(1,7)
  \bigr),
  \nonumber\\
  (c_1,\ldots,c_8)
  &=
  (-29,-146,1,2,3,5,7,11),
  \nonumber\\
  (d_1,\ldots,d_8)
  &=
  (-11,-49,2,-1,4,3,-2,5).
  \label{eq:n8pilot-helicity-family-zero}
\end{align}
Indeed,
\(\sum_i c_i\lambda_i=\sum_i d_i\lambda_i=0\), so momentum conservation
holds coefficientwise.

Let
\(\mathsf H_k(N)=[\epsilon^5]N\) on two fixed transverse families.  The
four entries of the two exact rows are
\begin{equation}
\begin{array}{c|r}
 \text{entry}&\text{exact value}\\ \hline
 \mathsf H_0(A)&-96{,}295{,}632{,}301{,}128{,}960\\
 \mathsf H_0(B)& 20{,}863{,}334{,}952{,}529{,}920\\
 \mathsf H_1(A)&189{,}526{,}384{,}518{,}719{,}976{,}000\\
 \mathsf H_1(B)&-195{,}841{,}177{,}926{,}166{,}281{,}600
\end{array}
  \label{eq:n8pilot-helicity-rows}
\end{equation}
Their determinant is
\begin{equation}
  14{,}904{,}497{,}616{,}442{,}045{,}662{,}577{,}819{,}123{
  ,}593{,}216{,}000
  \neq0.
  \label{eq:n8pilot-helicity-determinant}
\end{equation}
The first row follows from the family in
\cref{eq:n8pilot-helicity-family-zero}.  A second exact family gives the
second row and the nonzero determinant, showing that the leading
coefficient is generically sensitive to both directions.  Only the first
row is needed for the conclusion below.  It gives
\begin{align}
  \mathsf H_0(6A+7B)
  &=-431{,}730{,}449{,}139{,}064{,}320,
  \nonumber\\
  \mathsf H_0(N_8^{\mathrm{Hodges}})
  &=856{,}608{,}034{,}006{,}080\neq0,
  \label{eq:n8pilot-helicity-hodges}
\end{align}
in exact agreement with \cref{eq:n8pilot-basis}.  The Hodges numerator has
no enhanced vanishing on this transverse family.  Therefore, independently
of the second row, the
four-dimensional helicity-zero locus itself, or the automatic
\(\epsilon^5\) numerator order, cannot distinguish the two
Bose-compatible directions.  Demanding an additional order uniformly
would instead remove the Hodges solution as well.  This is consistent with
the fact that helicity-zero constraints and smooth-splitting relations
encode different information from the normalized boundary values used
above \cite{JonesParanjape2025}.

\subsection{Agreement of the three physical conditions}
\label{app:n8pilot-common-kernel}

After eliminating the common amplitude normalization, let
\(\mathsf L_{\mathrm{coll}}\) and \(\mathsf L_{\mathrm{soft}}\) denote the
linear functionals obtained by cross-multiplying the two collinear or soft
evaluations with the corresponding physical values, and let
\(\mathsf L_{\mathrm{BCFW}}\) be the \(z^6\) coefficient.  The three
conditions then have the same kernel on \(U_8\):
\begin{equation}
  \ker\mathsf L_{\mathrm{BCFW}}
  =
  \ker\mathsf L_{\mathrm{coll}}
  =
  \ker\mathsf L_{\mathrm{soft}}
  =
  \Q(6A+7B).
\label{eq:n8pilot-common-kernel-table}
\end{equation}
Here the three data are, respectively, absence of a boundary term at
infinity, the normalized \(++\) collinear residue, and the normalized
leading Weinberg coefficient.  Their agreement has a direct physical
interpretation.  The zero locations and Bose symmetry produce the plane
\(U_8\), while one normalized boundary value---or equivalently the gravity
behavior at infinity---selects the Hodges line.  By contrast, soft pole
order and helicity-zero vanishing are already implied by the degree
structure and do not supply the missing linear condition.

The exact-arithmetic calculation independently constructs the
BCFW leading coefficient, the normalized collinear comparison, and the
leading-soft comparison.  It checks momentum conservation coefficientwise,
recomputes the displayed degrees and determinants, and reduces each nonzero
linear condition to the primitive vector \((7,-6)\).  This confirms
\cref{eq:n8pilot-common-kernel-table}, whose kernel is spanned by \((6,7)\).
The same calculation verifies that the soft pole order imposes no condition
on \(U_8\) and that the automatic helicity-zero order does not select the
Hodges line.

\subsection{Other physical conditions and scope}
\label{app:n8-other-physical-conditions}

Factorization and soft limits are standard inputs to on-shell recursion
\cite{CachazoSvrcek2005,BedfordEtAl2005,CachazoStrominger2014,
SchwabVolovich2014}.  The three conditions in
\cref{eq:n8-three-physical-readouts} already determine the eight-point line,
so subleading soft coefficients are not needed.  Gravity bonus relations
follow from the same \(O(z^{-2})\) behavior used in the BCFW calculation.

Ordinary multiparticle residues do not distinguish \(A\) and \(B\) in the
MHV sector: channels with two nontrivial higher-point factors vanish by
helicity counting.  The four-dimensional helicity-zero loci of
Ref.~\cite{JonesParanjape2025} are also shared by the whole plane.  On these
loci either every square bracket vanishes or the common helicity prefactor
vanishes.  The zero locations therefore do not replace a normalized BCFW,
collinear, or soft condition.

The finite BCFW and soft calculations in this appendix concern the
eight-point alternating sector.  They do not give an all-multiplicity rank
statement for a finite set of BCFW or soft coefficients.  Constructions
based on gauge invariance, celestial Ward identities, CHY representations,
or positive geometry impose different data and are not used here
\cite{ArkaniHamedRodinaTrnka2018,GuevaraHimwichMiller2025,
FengZhang2020,FengHuZhang2022,ArkaniHamedBaiLam2017}.

\section{Further algebraic structure}
\label{app:fixed-deficit}

\subsection{Low-point branching as an abstract module}

Remove the alternating twist by defining
\(\widetilde W_{n,\Q}=W_{n,\Q}\otimes\sgn_n\).  The low-point spaces obey
\begin{equation}
  \widetilde W_{5,\Q}\cong\mathbf1,\qquad
  \widetilde W_{6,\Q}\cong\mathbf1,\qquad
  \widetilde W_{7,\Q}\cong\mathbf1\oplus S^{(6,1)}
  \cong\Q[\{1,\ldots,7\}],
  \label{eq:low-point-untwisted-modules}
\end{equation}
or equivalently
\begin{equation}
  W_{7,\Q}\cong
  \operatorname{Ind}_{S_6}^{S_7}W_{6,\Q}.
  \label{eq:six-seven-module-induction}
\end{equation}
After the sign untwist, the invariant vector is the Hodges line and the
zero-sum subspace is the hook.  This is an abstract module isomorphism.  The
space of equivariant maps has dimension
\begin{equation}
  \dim_\Q
  \operatorname{Hom}_{S_7}\!\left(
    \operatorname{Ind}_{S_6}^{S_7}W_{6,\Q},W_{7,\Q}
  \right)
  =2,
  \label{eq:induction-hom-dimension}
\end{equation}
because the sign and hook blocks may be scaled independently.  At the
previous step,
\begin{equation}
  \operatorname{Ind}_{S_5}^{S_6}W_{5,\Q}
  \cong S^{(1^6)}\oplus S^{(2,1^4)}
  \neq W_{6,\Q}.
  \label{eq:five-six-induction-fails}
\end{equation}
These abstract branching relations do not by themselves define a physical
recursion; boundary data supply the required map and normalization.

\subsection{The fixed-deficit affine chart}

The relation to diagonal ideals can be made precise without identifying the
two problems.  First remove only the two Pl\"ucker ideals and write
\begin{equation}
  \widehat Q_n
  :=
  R_n/(I_{\langle\,\rangle}+I_{[\,]}),
  \qquad
  Q_n=\widehat Q_n/M_n,
  \label{eq:pre-momentum-ring}
\end{equation}
where \(M_n\) is the momentum-conservation ideal.  Let
\(\widehat J_{ij}=(\langle ij\rangle,[ij])\widehat Q_n\).  There is always
an inclusion
\begin{equation}
  \left[
  \frac{(\bigcap_{i<j}\widehat J_{ij})+M_n}{M_n}
  \right]_{d(n)}
  \ \subseteq\
  \left[
  \frac{\bigcap_{i<j}(\widehat J_{ij}+M_n)}{M_n}
  \right]_{d(n)}
  =W_n
  \label{eq:intersection-quotient-inclusion}
\end{equation}
in the target degree.  Its failure to be an equality is measured by the
corresponding graded component of
\begin{equation}
  \Delta_n
  :=
  \frac{\bigcap_{i<j}(\widehat J_{ij}+M_n)}
       {(\bigcap_{i<j}\widehat J_{ij})+M_n}.
  \label{eq:intersection-quotient-defect}
\end{equation}
namely \((\Delta_n)_{d(n)}\).
Thus a pre-momentum simultaneous lift is an additional condition, not
another presentation of the original bootstrap.  We do not impose it here.

On the dense patch
\begin{equation}
  \lambda_i=u_i(1,z_i),\qquad
  \widetilde\lambda_i=v_i(1,w_i),
  \qquad u_i v_i\neq0,
  \label{eq:affine-spinor-patch}
\end{equation}
one has
\[
  \langle ij\rangle=u_i u_j(z_j-z_i),
  \qquad
  [ij]=v_i v_j(w_j-w_i).
\]
After extending to this chart and inverting the scale factors, the pullback
of \(\widehat J_{ij}\) is the diagonal ideal
\((z_i-z_j,w_i-w_j)\).  If \(c_i=u_iv_i\), momentum conservation is the
four-equation moment system
\begin{equation}
  \sum_i c_i
  =
  \sum_i c_i z_i
  =
  \sum_i c_i w_i
  =
  \sum_i c_i z_iw_i
  =0.
  \label{eq:moment-equations-chart}
\end{equation}
This shows simultaneously why big-diagonal techniques are relevant and why
their answers do not directly compute \(W_n\): the desired intersection is
taken on the moment subvariety and in a fixed fine torus weight.

Finally, the target bidegree obeys
\begin{equation}
  A_n+B_n=\binom n2-6.
  \label{eq:appendix-fixed-deficit}
\end{equation}
The deficit six is independent of \(n\).  Work on the radical big diagonal
and its \(q,t\)-graded minimal generators gives useful combinatorial models
for fixed-deficit pieces
\cite{LeeLi2009Notes,LeeLi2011Catalan}.  Those results concern the ordinary
affine diagonal ideal, however, not the Pl\"ucker quotient restricted by
\cref{eq:moment-equations-chart}.  In particular, they neither determine
\((\Delta_n)_{d(n)}\) nor imply a recursion for \(W_n\).  A higher-point calculation
can instead resolve \cref{eq:intersection-quotient-defect} into \(S_n\)
blocks, targeting only the sign and other surviving multiplicity spaces.

\section{Computer-assisted proof and reproducibility}
\label{app:exact-methods}

\subsection{Certification of ranks and ideal membership}

The proof combines symbolic representation theory with computer-assisted
linear algebra and ideal reduction.  The calculation proceeds as follows:
\begin{enumerate}
  \begin{samepage}
  \item standard-monomial theory identifies a genuine basis of the graded
        quotient component;
  \item two independent integer algorithms compute its dimension;
  \end{samepage}
  \item Young-symmetrizer identities and the weighted dimension sum determine
        all ambient multiplicities;
  \item nonzero modular minors prove characteristic-zero rank lower bounds
        and therefore the upper bound on \(W_7\);
  \item exact ideal reductions, with lifts to the original Pl\"ucker,
        momentum-conservation, and pair generators, prove membership of
        \(F_7\) and \(H_7\);
  \item exact integer evaluations prove the required orbit independence and
        nonvanishing;
  \item at eight points, exact characters and the stacked modular minor give
        the sign-sector upper bound, while two coefficientwise quotient
        identities and a nonzero integer minor give the matching lower bound.
  \item the physical conditions are evaluated on the resulting
        two-dimensional plane by exact BCFW coefficient extraction,
        collinear boundary evaluation, and momentum-conserving soft
        expansions.
\end{enumerate}

All decisive arithmetic is integral or rational.  Evaluation on a pair locus
is used only to exhibit nonzero minors and hence characteristic-zero rank
bounds via \cref{lem:modular-minor}.  It is not used to infer ideal
membership.  Membership of the hook and sign generators is instead checked
by exact reductions, including lifts to the Pl\"ucker,
momentum-conservation, and pair generators.  The complex statement follows
from \cref{lem:base-change}.

The related code reproduces the character, rank, ideal-membership, and
boundary-evaluation calculations described here.

\paragraph{Scope of the main statements.}
The seven-point flag calculation is an exact change of basis for the
fixed-degree quotient-ring ansatz.  The special-kinematics test concerns the
unrescaled boundary value.  In the proposed induction, factor divisibility,
identification with a lower-point numerator, and unique extension are
separate requirements.  The marked-boundary theorem proves joint
injectivity on the fixed-degree ansatz and single-boundary injectivity on the
alternating sector.  It does not remove the candidate boundary factor or
identify the factor-removed image with \(W_{n-1}\).

The all-multiplicity theorem assumes the common denominator, numerator
degree, alternating permutation character, and complete normalized
collinear boundary.  Universal factorization is supplied as physical input.
The theorem does not assert \(\dim W_n=1\) or derive factorization from the
pair ideals.  At eight points the calculation determines the sign sector,
rather than the full \(S_8\) decomposition.  Its transverse direction is a
distinct rational function in this representation, not a second physical
amplitude.  Presentation redundancy, generalized gauge freedom, surface
terms, and scaleless terms refer to different settings.  The finite BCFW and
soft coefficient calculations apply to the eight-point alternating sector.

\subsection{Lean verification of the finite-dimensional consequence layer}
\label{app:lean-consequence}

Formalization in Lean is now being developed for several parts of high-energy
and quantum field theory.  Examples include a general library for high-energy
physics, Wick's theorem, and free Euclidean quantum field theory
\cite{ToobySmith2025HepLean,ToobySmith2025Wick,
DouglasEtAl2026FormalQFT}.
Our companion has a narrower purpose.  Written in Lean~4
\cite{DeMouraUllrich2021} with mathlib \cite{MathlibCommunity2020}, it
formalizes the finite-dimensional linear-algebra steps that turn the computed
decompositions and boundary conditions into uniqueness statements.  The
companion imports the finite-dimensional data used below and checks the
stated consequences.

At seven points, take as input the independently computed coordinate
decomposition
\(
  W_{7,\Q}\simeq \Q^6_{\mathrm{hook}}\oplus\Q_{\mathrm{sgn}}
\)
and write \(M_i\) for the six adjacent-transposition matrices on the hook
summand.  The selected rows from the
stacked equations \((M_i+I)v=0\) form
\begin{equation}
 C_7=
 \begin{pmatrix}
  2&-1&-1&-1&-1&-1\\
  1& 1& 0& 0& 0& 0\\
  0& 1&-1& 0& 0& 0\\
  0& 0& 1&-1& 0& 0\\
  0& 0& 0& 1&-1& 0\\
  0& 0& 0& 0& 1&-1
 \end{pmatrix}.
 \label{eq:lean-n7-kernel-matrix}
\end{equation}
Its determinant is (7).  Lean also derives directly from the six rational
row equations that
\begin{equation}
  C_7v=0\quad\Longrightarrow\quad v=0,
  \qquad
  \{(v,s):C_7v=0\}=\Q(0,1).
  \label{eq:lean-n7-hodges-line}
\end{equation}
After transporting the already established nonzero sign/Hodges
identification to these coordinates, the second equality is precisely the
seven-point fixed-denominator Bose uniqueness statement.  Thus the formal
check does not assert that the full algebraic space is one-dimensional: it
remains \(6+1\) dimensional, and only its alternating part is the Hodges line.

At eight points, the formalization starts inside the independently computed
plane \(U_8=\Q A\oplus\Q B\).  It represents the physical data as linear
maps on the \((A,B)\) coordinates.  The two leading BCFW coefficients must
vanish.  The marked collinear evaluations and the two momentum-conserving
leading-soft coefficients must instead lie on the lines fixed by the
corresponding physical theorems.  Cross-multiplying either two-component map
\(R\) with its required vector \(t\), by forming the row
\(m_j=R_{0j}t_1-R_{1j}t_0\), produces the collinear and soft mismatch rows.
Lean checks these constructions and verifies, over the integers, that the
BCFW rows and both mismatch rows are nonzero multiples of the same primitive
linear condition
\begin{equation}
  (7,-6),
  \qquad
  \ker(7,-6)=\Q(6,7).
  \label{eq:lean-n8-selector-kernel}
\end{equation}
For the \(z^6\) BCFW row and for each constructed mismatch row,
Lean proves directly that its rational dot product with a candidate vanishes
if and only if that candidate is a multiple of \((6,7)\).  The
\(z^5\) BCFW row is separately checked to give the same condition.  Lean also
checks the boundary normalizations, the nonsingularity of the two-path soft
evaluation, and the common-denominator
conversion between the amplitude residue and numerator conventions.  In the
marked branch used here, \(c=2\) and \(n=8\), this last conversion is the exact
identity
\begin{equation}
  -16=2^6\left(-\frac14\right),
  \label{eq:lean-n8-residue-conversion}
\end{equation}
which is the specialization of
\cref{eq:general-raw-collinear-identity}.  Thus the formal chain runs from
the exact evaluations to the physical linear condition and then to the
Hodges line.
Given the independently proved two-dimensional plane and the cited physical
interpretation of those evaluations, the formal chain selects the Hodges
direction \(6A+7B\).

Finally, Lean isolates the all-multiplicity uniqueness step as an abstract
linear-algebra theorem.  For a linear map \(\rho:V\to T\), a vector \(h\in V\),
and an injectivity proof for \(\rho\), it proves
\begin{equation}
  \{x\in V:\exists c,\ \rho(x)=c\rho(h)\}=\Q h.
  \label{eq:lean-projective-lift}
\end{equation}
This theorem formalizes only the implication from injectivity to uniqueness
up to normalization.  The geometric proof that the marked restriction is
injective and the factorization identity supplying its boundary value remain
mathematical inputs proved in the main text.

\paragraph{Scope of the formal verification.}
Lean does not reconstruct
\(W_7\) or \(W_8\), recompute the large Young-projected rank matrices or exact
ideal reductions, prove the flag-variety divisor argument, or internalize
the external BCFW, factorization, and soft theorems.  Those obligations remain
with the deterministic calculations or the cited literature.  The Lean
modules use rational and integer arithmetic and contain no admitted theorem;
their assumptions are those of standard Lean/mathlib results.
Related work on formalizing physical reasoning in Lean stresses the need to
distinguish checked implications from physical premises
\cite{Douglas2026PhysicalReasoning}.  The same distinction applies here.  A
successful Lean build establishes the stated consequences within the stated
definitions and assumptions.  It does not establish that the imported data
and definitions express the intended physics.  That correspondence, together
with the external physical inputs, remains the authors' responsibility.

\paragraph{Data and code availability.}
Related code is available at
\href{https://github.com/ybzhang-nxu/autoNMHV}%
{\texttt{ybzhang-nxu/autoNMHV}}.

\paragraph{AI-assisted technology disclosure.}

Machine-learning methods have been used to predict bootstrap coefficients,
simplify spinor-helicity expressions, and recover amplitude relations from
on-shell data
\cite{CaiEtAl2024Transforming,CheungDersySchwartz2025,
Shih2026Unscramble,Moynihan2026SMatrix}.
OpenAI Codex was used during exploration to organize diagnostics and suggest
representation-theoretic tests.  It also assisted with the Lean companion,
reproducibility scripts and tests, and editorial revision.  The authors fixed
and checked the mathematical definitions, theorem statements, physical
inputs, and claim boundaries; reviewed the code and final text; and reran the
exact calculations.  AI-generated material was treated as candidate work,
not as evidence.  Reproduction requires neither access to an AI system nor
its prompts.  The algebraic results and eight-point boundary calculations
come from the exact calculations.  Those calculations do not
replace the cited physical theorems used to interpret the boundary
conditions.  The authors take full responsibility for the paper.

\bibliographystyle{JHEP}
\bibliography{paper/references}

\end{document}